\documentclass[11pt]{article}
\usepackage{epsf}
\usepackage{amsmath}
\usepackage{epsfig}
\usepackage{times}
\usepackage{amssymb}
\usepackage{amsthm}
\usepackage{setspace}
\usepackage{cite}

\usepackage{algorithmic}  
\usepackage{algorithm}

\usepackage{shadow}
\usepackage{fancybox}
\usepackage{fancyhdr}

\def\w{{\bf w}}

\def\y{{\bf y}}
\def\v{{\bf v}}
\def\x{{\bf x}}

\def\x{{\mathbf x}}

\def\w{{\bf w}}

\def\v{{\bf v}}
\def\x{{\bf x}}
\def\y{{\bf y}}
\def\z{{\bf z}}
\def\q{{\bf q}}
\def\a{{\bf a}}
\def\b{{\bf b}}

\def\h{{\bf h}}

\def\be{\begin{equation}}
\def\ee{\end{equation}}
\def\ba{\left[\begin{array}}
\def\ea{\end{array}\right]}

\def\t{{\bf t}}

\def\w{{\bf w}}

\def\v{{\bf v}}
\def\x{{\bf x}}
\def\y{{\bf y}}
\def\z{{\bf z}}
\def\q{{\bf q}}
\def\a{{\bf a}}
\def\b{{\bf b}}

\def\xtilde{\tilde{\x}}
\def\xhat{\hat{\x}}

\def\1{{\bf 1}}

\def\g{{\bf g}}
\def\0{{\bf 0}}


\def\erfinv{\mbox{erfinv}}





\newtheorem{theorem}{Theorem}
\newtheorem{corollary}{Corollary}

\newtheorem{lemma}{Lemma}

\setlength{\oddsidemargin}{0in} \setlength{\evensidemargin}{0in}
\setlength{\textwidth}{6.5in} 
\setlength{\textheight}{9in} 
\setlength{\topmargin}{-0.25in}

\begin{document}

\begin{singlespace}

\title {A framework to characterize performance of LASSO algorithms
}
\author{
\textsc{Mihailo Stojnic}
\\
\\
{School of Industrial Engineering}\\
{Purdue University, West Lafayette, IN 47907} \\
{e-mail: {\tt mstojnic@purdue.edu}} }
\date{}
\maketitle

\centerline{{\bf Abstract}} \vspace*{0.1in}

In this paper we consider solving \emph{noisy} under-determined systems of linear equations with sparse solutions. A noiseless equivalent attracted enormous attention in recent years, above all, due to work of \cite{CRT,CanRomTao06,DonohoPol} where it was shown in a statistical and large dimensional context that a sparse unknown vector (of sparsity proportional to the length of the vector) can be recovered from an under-determined system via a simple polynomial $\ell_1$-optimization algorithm. \cite{CanRomTao06} further established that even when the equations are \emph{noisy}, one can, through an SOCP noisy equivalent of $\ell_1$, obtain an approximate solution that is (in an $\ell_2$-norm sense) no further than a constant times the noise from the sparse unknown vector. In our recent works \cite{StojnicCSetam09,StojnicUpper10}, we created a powerful mechanism that helped us characterize exactly the performance of $\ell_1$ optimization in the noiseless case (as shown in \cite{StojnicEquiv10} and as it must be if the axioms of mathematics are well set, the results of \cite{StojnicCSetam09,StojnicUpper10} are in an absolute agreement with the corresponding exact ones from \cite{DonohoPol}). In this paper we design a mechanism, as powerful as those from \cite{StojnicCSetam09,StojnicUpper10}, that can handle the analysis of a LASSO type of algorithm (and many others) that can be (or typically are) used for ``solving" noisy under-determined systems. Using the mechanism we then, in a statistical context, compute the exact worst-case $\ell_2$ norm distance between the unknown sparse vector and the approximate one obtained through such a LASSO. The obtained results match the corresponding exact ones obtained in \cite{BayMon10,DonMalMon10}. Moreover, as a by-product of our analysis framework we recognize existence of an SOCP type of algorithm that achieves the same performance.

\vspace*{0.25in} \noindent {\bf Index Terms: Noisy linear systems of equations; LASSO; SOCP;
$\ell_1$-optimization; compressed sensing} .

\end{singlespace}

\section{Introduction}
\label{sec:back}

In recent years the problem of finding sparse solutions of under-determined systems of linear equations attracted enormous attention. Applications seem vast and as if they are growing almost on a daily basis (see, e.g. \cite{ECicm,DDTLSKB,CT,JRimaging,BCDH08,CRchannel,VPH,PVMHjournal,WM08,Olgica,RFPrank,MBPSZ08,RS08} and references therein). Given a substantial interest in the problem (and especially that it is coming from a variety of different fields), one may assume that designing efficient algorithms that would solve it could be of far-reaching importance. To that end, we believe that a precise mathematical understanding of the phenomena that make certain algorithms work well would help solidify belief in their success in current and future applications. Moreover, it is possible that down the road it can also help expand further the range of their applications.

Moving long the same lines, we in this paper focus on studying mathematical properties of under-determined systems of linear equations and certain algorithms used to solve them. We start the story by introducing an idealized version of the problem that we plan to study. In its simplest form it amounts to finding a $k$-sparse $\x$ such
that
\begin{equation}
A\x=\y \label{eq:system}
\end{equation}
where $A$ is an $m\times n$ ($m<n$) matrix and $\y$ is
an $m\times 1$ vector (see Figure
\ref{fig:model}; here and in the rest of the paper, under $k$-sparse vector we assume a vector that has at most $k$ nonzero
components). Of course, the assumption will be that such an $\x$ exists (clearly, the case of real interest is $k<m$). To make writing in the rest of the paper easier, we will assume the
so-called \emph{linear} regime, i.e. we will assume that $k=\beta n$
and that the number of equations is $m=\alpha n$ where
$\alpha$ and $\beta$ are constants independent of $n$ (more
on the non-linear regime, i.e. on the regime when $m$ is larger than
linearly proportional to $k$ can be found in e.g.
\cite{CoMu05,GiStTrVe06,GiStTrVe07}).
\begin{figure}[htb]
\centering
\centerline{\epsfig{figure=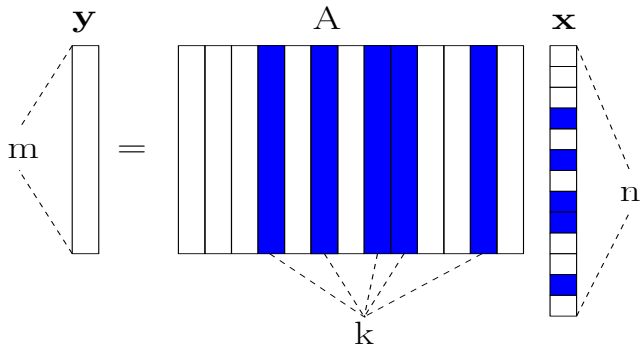,width=9cm,height=4.5cm}}
\caption{Model of a linear system; vector $\x$ is $k$-sparse}
\label{fig:model}
\end{figure}

If one has the freedom to design matrix $A$ then the results from \cite{FHicassp,Tarokh,MaVe05} demonstrated that the techniques from
coding theory (based on coding/decoding of Reed-Solomon codes)
can be employed to determine \emph{any} $k$-sparse $\x$ in
(\ref{eq:system}) for any $0<\alpha\leq 1$ and any
$\beta\leq\frac{\alpha}{2}$ in polynomial time. It is relatively easy to show that under the unique recoverability assumption
$\beta$ can not be greater than $\frac{\alpha}{2}$. Therefore, as long as one is concerned with the unique recovery of
$k$-sparse $\x$ in (\ref{eq:system}) in polynomial time the results from \cite{FHicassp,Tarokh,MaVe05} are
optimal. The complexity of algorithms from
\cite{FHicassp,Tarokh,MaVe05} is roughly $O(n^3)$. In a similar fashion one can, instead of using coding/decoding techniques associated with Reed/Solomon codes,
design the matrix and the corresponding recovery algorithm based on the techniques related to coding/decoding of
Expander codes (see e.g.
\cite{XHexpander,JXHC08,InRu08} and references therein). In that case recovering $\x$ in
(\ref{eq:system}) is significantly faster for large dimensions $n$. Namely, the complexity of the techniques from e.g. \cite{XHexpander,JXHC08,InRu08}
(or their slight modifications) is usually
$O(n)$ which is clearly for large $n$ significantly smaller than $O(n^3)$. However,
the techniques based on coding/decoding of Expander codes usually do not allow for $\beta$ to be as large as
$\frac{\alpha}{2}$.

On the other hand, if one has no freedom in choice of $A$ designing the algorithms to find $k$-sparse $\x$ in (\ref{eq:system}) is substantially harder. In fact, when there is no choice in $A$ the recovery
problem (\ref{eq:system}) becomes NP-hard. Two algorithms 1) \emph{Orthogonal matching pursuit - OMP} and 2) \emph{Basis pursuit -
$\ell_1$-optimization} (and their different
variations) have been often viewed historically as solid heuristics for solving (\ref{eq:system}) (in recent years belief propagation type of algorithms are emerging as strong alternatives as well). Roughly speaking, OMP algorithms are faster but can recover smaller sparsity whereas the BP ones are slower but recover higher sparsity. In a more precise way, under certain probabilistic assumptions on the elements of $A$ it can be shown (see e.g. \cite{JATGomp,JAT,NeVe07})
that if $m=O(k\log(n))$
OMP (or slightly modified OMP) can recover $\x$ in (\ref{eq:system})
with complexity of recovery $O(n^2)$. On the other hand a stage-wise
OMP from \cite{DTDSomp} recovers $\x$ in (\ref{eq:system}) with
complexity of recovery $O(n \log n)$. Somewhere in between OMP and BP are recent improvements CoSAMP (see e.g. \cite{NT08}) and Subspace pursuit (see e.g. \cite{DaiMil08}), which guarantee (assuming the linear regime) that the $k$-sparse $\x$ in (\ref{eq:system}) can be recovered in polynomial time with $m=O(k)$ equations which is the same performance guarantee established in \cite{CanRomTao06,DonohoPol} for the BP.

We now introduce the BP concept (or, as we will refer to it, the $\ell_1$-optimization concept; a slight modification/adaptation of it will actually be the main topic of this paper). Variations of the standard $\ell_1$-optimization from e.g.
\cite{CWBreweighted,SChretien08,SaZh08} as well as those from \cite{SCY08,FL08,GN03,GN04,GN07,DG08} related to $\ell_q$-optimization, $0<q<1$
are possible as well; moreover they can all be incorporated in what we will present below. The $\ell_1$-optimization concept suggests that one can maybe find the $k$-sparse $\x$ in
(\ref{eq:system}) by solving the following $\ell_1$-norm minimization problem
\begin{eqnarray}
\mbox{min} & & \|\x\|_{1}\nonumber \\
\mbox{subject to} & & A\x=\y. \label{eq:l1}
\end{eqnarray}
As is then shown in \cite{CanRomTao06} if
$\alpha$ and $n$ are given, $A$ is given and satisfies the restricted isometry property (RIP) (more on this property the interested reader can find in e.g. \cite{Crip,CRT,CanRomTao06,Bar,Ver,ALPTJ09}), then
any unknown vector $\x$ with no more than $k=\beta n$ (where $\beta$
is a constant dependent on $\alpha$ and explicitly
calculated in \cite{CanRomTao06}) non-zero elements can indeed be recovered by
solving (\ref{eq:l1}). In a statistical and large dimensional context in \cite{DonohoPol} and later in \cite{StojnicCSetam09} for any given value of $\beta$ the exact value of the maximum possible $\alpha$ was determined.

As we mentioned earlier the above scenario is in a sense idealistic. Namely, it assumes that $\y$ in (\ref{eq:l1}) was obtained through (\ref{eq:system}). On the other hand in many applications only a \emph{noisy} version of $A\x$ may be available for $\y$ (this is especially so in measuring type of applications) see, e.g. \cite{CRT,CanRomTao06,HN,W}. When that happens one has the following equivalent to (\ref{eq:system}) (see, Figure \ref{fig:modelnoise})
\begin{equation}
\y=A\x+\v, \label{eq:systemnoise}
\end{equation}
where $\v$ is an $m\times 1$ vector (often dubbed as the noise vector; the so-called ideal case presented above is of course a special case of the noisy case).
\begin{figure}[htb]
\centering
\centerline{\epsfig{figure=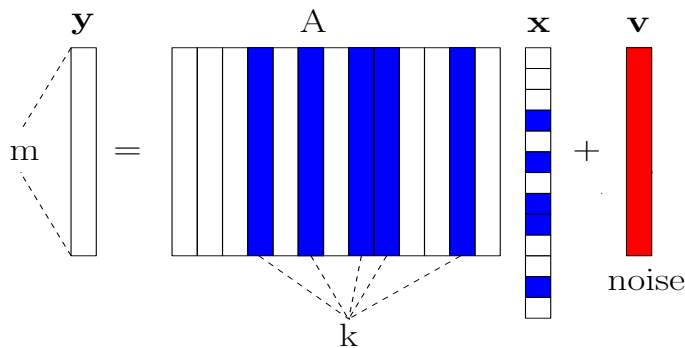,width=9cm,height=4.5cm}}
\caption{Model of a linear system; vector $\x$ is $k$-sparse}
\label{fig:modelnoise}
\end{figure}
Finding the $k$-sparse $\x$ in (\ref{eq:systemnoise}) is now incredibly hard. Basically, one is looking for a $k$-sparse $\x$ such that (\ref{eq:systemnoise}) holds and on top of that $\v$ is unknown. Although the problem is hard there are various heuristics throughout the literature that one can use to solve it approximately. Below we restrict our attention to two groups of algorithms that we believe are the most relevant to the results that we will present.

To introduce a bit or tractability in finding the $k$-sparse $\x$ in (\ref{eq:systemnoise}) one usually assumes certain amount of knowledge about either $\x$ or $\v$. As far as tractability assumptions on $\v$ are concerned one typically (and possibly fairly reasonably in applications of interest) assumes that $\|\v\|_2$ is bounded (or highly likely to be bounded) from above by a certain known quantity. The following second-order cone programming (SOCP) analogue to (\ref{eq:l1}) is one of the approaches that utilizes such an assumption (see, e.g. \cite{CanRomTao06})
\begin{eqnarray}
\min_{\x} & & \|\x\|_1\nonumber \\
\mbox{subject to} & & \|\y-A\x\|_2\leq r \label{eq:socp}
\end{eqnarray}
where, $r$ is a quantity such that $\|\v\|_2\leq r$ (or $r$ is a quantity such that $\|\v\|_2\leq r$ is say highly likely). For example, in \cite{CanRomTao06} a statistical context is assumed and based on the statistics of $\v$, $r$ was chosen such that $\|\v\|_2\leq r$ happens with overwhelming probability (as usual, under overwhelming probability we in this paper assume
a probability that is no more than a number exponentially decaying in $n$ away from $1$). Given that (\ref{eq:socp}) is now among few almost standard choices when it comes to finding the $\x$-sparse in (\ref{eq:systemnoise}), the literature on its properties is vast (see, e.g. \cite{CanRomTao06,DonElaTem06,Tropp06} and references therein). Also, given that this SOCP will not be the main topic of this paper we below briefly mention only what we consider to be the most influential work on this topic in recent years. Namely, in \cite{CanRomTao06} the authors analyzed performance of (\ref{eq:socp}) and showed a result similar in flavor to the one that holds in the ideal - noiseless - case. In a nutshell the following was shown in \cite{CanRomTao06}: let $\x$ be a $\beta n$-sparse vector such that (\ref{eq:systemnoise}) holds and let $\x_{socp}$ be the solution of (\ref{eq:socp}). Then $\|\x_{socp}-\x\|_2\leq C r$ where $\beta$ is a constant independent of $n$ and $C$ is a constant independent of $n$ and of course dependent on $\alpha$ and $\beta$. This result in a sense establishes a noisy equivalent to the fact that a linear sparsity can be recovered from an under-determined system of linear equations. In an informal language, it states that a linear sparsity can be \emph{approximately} recovered in polynomial time from a noisy under-determined system with the norm of the recovery error guaranteed to be within a constant multiple of the noise norm. Establishing such a result is, of course, a feat in its own class, not only because of its technical contribution but even more so because of the amount of interest that it generated in the field.

In this paper we will also consider an approximate recovery of the $k$-sparse $\x$ in (\ref{eq:systemnoise}). However, instead of the above mentioned SOCP we will focus on a group of highly successful algorithms called LASSO (the LASSO algorithms, as well as the SOCP ones, are of course well known in the statistics community and there is again a vast literature that covers their performance (see, e.g. \cite{CheDon95,CheDonSau98,Tibsh96,DonMalMon10,BayMon10,BunTsyWeg07,vandeGeer08,MeinYu09} and references therein). There are many variants of LASSO but the following one is probably the most well known
\begin{equation}
\min_{\x} \|\y-A\x\|_2^2+\lambda_{lasso}\|\x\|_1.\label{eq:biglasso}
\end{equation}
$\lambda_{lasso}$ in (\ref{eq:biglasso}) is a parameter to be chosen based on the amount of pre-knowledge one may have about $A$, $\v$, and/or $\x$. The results that relate to the characterization of the approximation error of (\ref{eq:biglasso}) that are similar to the SOCP ones mentioned above can be established (see, e.g. \cite{BayMon10lasso}). Of course, characterizing the performance of the recovery algorithm through the norm-2 of the error vector is only one possible way among many (more on other measures of performance can be found in e.g. \cite{W,BunTsyWeg07}). In this paper we will develop a novel framework for performance characterization of the LASSO algorithms. Among other things, in a statistical context, the framework will enable us to provide a precise characterization of the norm-2 of the approximation error of the LASSO algorithms.

While our main focus in this paper are algorithms from the LASSO group we mention that besides the SOCP and LASSO algorithms there are of course various other algorithms/heuristics that have been suggested as possible alternatives throughout the literature in recent years. Such an alternative that gained certain amount of popularity is for example the so-called Dantzig selector introduced in \cite{CanTao07}. The Dantzig selector amounts to solving the following optimization problem
\begin{eqnarray*}
\min & & \|\x\|_1 \nonumber \\
\mbox{subject to} & & \|A^T(A\x-\y)\|_{\infty}\leq C_{Dan},\label{eq:dantzsel}
\end{eqnarray*}
where $C_{Dan}$ is a carefully chosen parameter that of course should depend on $A,\v$, and/or $\x$. As a linear program the Danzig selector promises to be faster than SOCP or LASSO which are both quadratic programs. On the other hand recent improvements in numerical implementations of LASSO's and their solid approximate recovery abilities make them quite competitive as well (more on a thorough discussion/comparison, advantages/disdvatnages of the Dantzig selector and the LASSO algorithms can be found in e.g. \cite{MeiRocYu07,BicRitTsy09,FriSau07,EfrHatTib07,AsiRom10,JamRadLv09,Koltch09}).

To facilitate the exposition and the easiness of following we will present our framework on a version of the LASSO from (\ref{eq:biglasso}). Namely, we will consider,
\begin{eqnarray}
\min_{\x} & & \|\y-A\x\|_2 \nonumber \\
\mbox{subject to} & & \|\x\|_1\leq \|\xtilde\|_1\label{eq:lassol1}
\end{eqnarray}
where $\xtilde$ is the original $k$-sparse $\x$ that satisfies (\ref{eq:systemnoise}) (we just briefly mention that in a context that will be considered in this paper it is not that difficult to transform the LASSO from (\ref{eq:lassol1}) to one that is structurally equivalent to (\ref{eq:biglasso}); however, we stop short of exploring this connection further before presenting our main results and only mention that a section towards the end of the paper will explore it in more detail.). We do however mention right here that in order to run (\ref{eq:lassol1}) one does require the knowledge of $\|\xtilde\|_1$. In a sense this requirement is an equivalent to setting $r$ and $\lambda_{lasso}$ in (\ref{eq:socp}) and (\ref{eq:biglasso}), respectively. In order to be maximally effective both $r$ and $\lambda_{lasso}$ do require some amount of pre-knowledge about $A$, $\v$, and/or $\x$.


Before we proceed further we briefly summarize the organization of the rest of the paper. In Section
\ref{sec:unsigned}, we present a statistical framework for the performance analysis of the LASSO algorithms. To demonstrate its power we towards the end of Section \ref{sec:unsigned}, for any given $\alpha$  and $\beta$, compute the worst case norm-2 of the error that (\ref{eq:lassol1}) makes when used for approximate recovery of general sparse signals $\x$ from (\ref{eq:systemnoise}). In Section \ref{sec:signed} we then specialize results from Section \ref{sec:unsigned} to the so-called signed vectors $\x$. In Section \ref{sec:connectlasso} we discuss how the LASSO from (\ref{eq:lassol1}) can be connected to the LASSO from (\ref{eq:biglasso}). In Section \ref{sec:socplasso} we demonstrate that there is an SOCP algorithm (similar to the one given in (\ref{eq:socp})) that achieves the same performance as do (\ref{eq:lassol1}) and a corresponding (\ref{eq:biglasso}). In Section \ref{sec:numres} we present results that we obtained through numerical experiments. Finally, in Section \ref{sec:discuss} we discuss obtained results.

\section{LASSO's performance analysis framework -- general $\x$} \label{sec:unsigned}

In this section we create a statistical LASSO's performance analysis framework. Before proceeding further we will now explicitly state the major assumptions that we will make (the remaining ones, will be made appropriately throughout the analysis). Namely, in the rest of the paper we will assume that the elements of $A$ are i.i.d. standard normal random variables. We will also assume that the elements of $\v$ are i.i.d. Gaussian random variables with zero mean and variance $\sigma$. As stated earlier, we will assume that $\xtilde$ is the original $\x$ in (\ref{eq:systemnoise}) that we are trying to recover and that it is \emph{any} $k$-sparse vector with a given fixed location of its nonzero elements and a given fixed combination of their signs. Since the analysis (and the performance of (\ref{eq:lassol1})) will clearly be irrelevant with respect to what particular location and what particular combination of signs of nonzero elements are chosen, we can for the simplicity of the exposition and without loss of generality assume that the components $\x_{1},\x_{2},\dots,\x_{n-k}$ of $\x$ are equal to zero and the components $\x_{n-k+1},\x_{n-k+2},\dots,\x_n$ of $\x$ are greater than or equal to zero. Moreover, throughout the paper we will call such an $\x$ $k$-sparse and positive. In a more formal way we will set
\begin{eqnarray}
& & \xtilde_1=\xtilde_2 =  \dots=\xtilde_{n-k}=0\nonumber \\
& & \xtilde_{n-k+1}\geq 0,  \xtilde_{n-k+1}\geq 0, \dots, \xtilde_{n}\geq 0.\label{eq:xtildedef}
\end{eqnarray}
We also now take the opportunity to point out a rather obvious detail. Namely, the fact that $\xtilde$ is positive is assumed for the purpose of the analysis. However, this fact is not known \emph{a priori} and is not available to the solving algorithm (this will of course change in Section \ref{sec:signed}).

Once we establish the framework it will be clear that it can be used to characterize many of the LASSO features. We will defer these details to a collection of forthcoming papers. In this paper we will present only a small application that relates to a classical question of quantifying the approximation error that (\ref{eq:lassol1}) makes when used to recover \emph{any} $k$-sparse $\x$ that satisfies (\ref{eq:systemnoise}) and is from a set of $\x$'s with a given fixed location of nonzero elements and a given fixed combination of their signs.

Before proceeding further we will introduce a few definitions that will be useful in formalizing this application as well as in conducting the entire analysis.
As it is natural we start with the solution of (\ref{eq:lassol1}). Let $\xhat$ be the solution of (\ref{eq:lassol1}) and let $\w_{lasso}\in R^n$  be such that
\begin{equation}
\xhat=\xtilde+\w_{lasso}.\label{eq:xhatdef}
\end{equation}
As an application of our framework we will  compute the largest possible value of $\|\xhat-\xtilde\|_2=\|\w_{lasso}\|_2$ for any combination $(\alpha,\beta)$. Or more rigorously, for any combination $(\alpha,\beta)$, we will find a $d_{lasso}$ such that
\begin{equation}
\lim_{n\rightarrow\infty}P(d_{lasso}-\epsilon\leq \max_{\xtilde}\|\w_{lasso}\|_2\leq d_{lasso}+\epsilon)=1\label{eq:goallasso}
\end{equation}
for an arbitrarily small constant $\epsilon$. However, before doing so we will first present the general framework. The framework that we will present will center around finding the optimal value of the objective function in (\ref{eq:lassol1}) (of course in a probabilistic context). In the first of the following two subsections we will create a lower bound on this optimal value. We will then afterwards in the second of the subsections create an upper bound on this optimal value. Naturally in the third subsection we will show that the two bounds actually match. To make further writing easier and clearer we set already here
\begin{eqnarray}
\zeta_{obj}=\min_{\x} & & \|\y-A\x\|_2 \nonumber \\
\mbox{subject to} & & \|\x\|_1\leq \|\xtilde\|_1.\label{eq:objlassol1}
\end{eqnarray}

\subsection{Lower-bounding $\zeta_{obj}$} \label{sec:unsignedlbzetaobj}

In this section we present the part of the framework that relates to finding a ``high-probability" lower bound on $\zeta_{obj}$.
To make arguments that will follow less tedious we will make an assumption that is significantly weaker than what we will eventually prove. Namely, we will assume that there is a (if necessary, arbitrarily large) constant $C_\w$ such that
\begin{equation}
P(\|\w_{lasso}\|_2\leq C_\w)\geq 1-e^{-\epsilon_{C_{\w}}n}.\label{eq:assumplasso}
\end{equation}
To make our arguments flow more naturally, one should probably provide a direct proof of this statement right here. However, given the difficulty of the task ahead we refrain from that and assume that the statement is correct. Roughly speaking, what we assume is that $\|\w_{lasso}\|_2$ is bounded by an arbitrarily large constant (of course we hope to create a machinery that can prove much more than (\ref{eq:assumplasso})).

We start by noting that if one knows that $\y=A\xtilde+\v$ holds then (\ref{eq:objlassol1}) can be rewritten as
\begin{eqnarray}
\min_{\x} & & \|\v+A\xtilde-A\x\|_2 \nonumber \\
\mbox{subject to} & & \|\x\|_1\leq \|\xtilde\|_1.\label{eq:objlassol11}
\end{eqnarray}
After a small change of variables, $\x=\xtilde+\w$, (\ref{eq:objlassol11}) becomes
\begin{eqnarray}
\min_{\w} & & \|\v-A\w\|_2 \nonumber \\
\mbox{subject to} & & \|\xtilde+\w\|_1\leq \|\xtilde\|_1,\label{eq:objlassol12}
\end{eqnarray}
or in a more compact form
\begin{eqnarray}
\min_{\w} & & \|A_{\v}\begin{bmatrix} \w\\\sigma\end{bmatrix}\|_2 \nonumber \\
\mbox{subject to} & & \|\xtilde+\w\|_1\leq \|\xtilde\|_1,\label{eq:objlassol13}
\end{eqnarray}
where $A_{\v}=\begin{bmatrix} -A & \v \end{bmatrix}$ is now an $m\times (n+1)$ random matrix with i.i.d. standard normal components. Let
\begin{equation}
S_{\w}(\sigma,\xtilde,C_\w)=\{\begin{bmatrix}\w\\\sigma\end{bmatrix} \in R^{n+1}| \quad \|\w\|_2\leq C_\w \quad \mbox{and}\quad \|\xtilde+\w\|_1\leq \|\xtilde\|_1\}.\label{eq:defS}
\end{equation}
Further, let
\begin{equation}
f_{obj}(\sigma,\w)=\|A_{\v}\begin{bmatrix} \w\\\sigma\end{bmatrix}\|_2 \label{eq:deffobj}
\end{equation}
and set,
\begin{equation}
\zeta_{obj}^{(help)}=\min_{[\w^T \sigma]^T\in S_{\w}(\sigma,\xtilde,C_\w)} f_{obj}(\sigma,\w)= \min_{[\w^T \sigma]^T\in S_{\w}(\sigma,\xtilde,C_\w)}  \|A_{\v}\begin{bmatrix} \w\\\sigma\end{bmatrix}\|_2=
\min_{[\w^T \sigma]^T\in S_{\w}(\sigma,\xtilde,C_\w)}\max_{\|\a\|_2=1}  \a^T A_{\v}\begin{bmatrix} \w\\\sigma\end{bmatrix}.\label{eq:objlassol14}
\end{equation}
We now state a lemma from \cite{Gordon88} that will be of use in what follows.
\begin{lemma}(\cite{Gordon88})
Let $A$ be an $m\times n$ matrix with i.i.d. standard normal components. Let $\g$ and $\h$ be $m\times 1$ and $n\times 1$ vectors, respectively, with i.i.d. standard normal components. Also, let $g$ be a standard normal random variable and let $\Phi\subset R^n$ be an arbitrary subset. Then for all choices of real $\psi_{\phi}$
\begin{equation}
P(\min_{\phi\in \Phi}\max_{\|\a\|_2=1}(\a^T A\phi +\|\phi\|_2 g-\psi_{\phi})\geq 0)\geq P(\min_{\phi\in \Phi}\max_{\|\a\|_2=1}(\|\phi\|_2\sum_{i=1}^{m}\g_i\a_i+\sum_{i=1}^{n}\h_i\phi_i-\psi_{\phi})\geq 0).\label{eq:problemma}
\end{equation}\label{thm:unsignedlemma}
\end{lemma}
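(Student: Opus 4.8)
The plan is to recognize this statement as a special case of Gordon's Gaussian comparison (min--max) inequality and to reduce it to a routine covariance check. First I would introduce the two centered Gaussian fields indexed by the pair $(\phi,\a)$, with $\phi\in\Phi$ playing the role of the minimization (``inner'') index and $\a$ on the unit sphere playing the role of the maximization (``outer'') index:
\[ X_{\phi,\a}=\a^T A\phi+\|\phi\|_2\, g, \qquad Y_{\phi,\a}=\|\phi\|_2\,\g^T\a+\h^T\phi. \]
Both are linear forms in independent standard normals ($A,g$ for $X$; $\g,\h$ for $Y$), hence jointly Gaussian and centered, and the two events in (\ref{eq:problemma}) are exactly $\bigcap_{\phi}\bigcup_{\a}\{X_{\phi,\a}\geq\psi_\phi\}$ and $\bigcap_{\phi}\bigcup_{\a}\{Y_{\phi,\a}\geq\psi_\phi\}$; in particular the threshold $\psi_\phi$ is allowed to depend on the inner index only, which is within the scope of Gordon's theorem.

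Next I would verify the hypotheses of the comparison theorem by direct computation. Using $E[A_{rs}A_{tu}]=\delta_{rt}\delta_{su}$, independence of $g$ from $A$, and independence of $\g$ from $\h$, one gets $E[X_{\phi,\a}X_{\phi',\a'}]=(\a^T\a')(\phi^T\phi')+\|\phi\|_2\|\phi'\|_2$ and $E[Y_{\phi,\a}Y_{\phi',\a'}]=(\a^T\a')\|\phi\|_2\|\phi'\|_2+\phi^T\phi'$. Since $\|\a\|_2=1$, the variances agree: $E[X_{\phi,\a}^2]=E[Y_{\phi,\a}^2]=2\|\phi\|_2^2$. For a fixed inner index ($\phi=\phi'$) the two covariances coincide, so the within-group condition holds with equality. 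Across inner indices ($\phi\neq\phi'$) the key identity is
\[ E[X_{\phi,\a}X_{\phi',\a'}]-E[Y_{\phi,\a}Y_{\phi',\a'}]=(1-\a^T\a')\,(\|\phi\|_2\|\phi'\|_2-\phi^T\phi'), \]
and both factors are nonnegative by Cauchy--Schwarz (applied to $\a,\a'$ on the sphere, and to $\phi,\phi'$ respectively), so $E[X_{\phi,\a}X_{\phi',\a'}]\geq E[Y_{\phi,\a}Y_{\phi',\a'}]$.

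Finally I would invoke Gordon's comparison theorem: equal variances, the within-group covariances of $X$ no larger than those of $Y$, and the across-group covariances of $X$ no smaller than those of $Y$, are precisely the conditions under which $P(\bigcap_\phi\bigcup_\a\{X_{\phi,\a}\geq\lambda_{\phi,\a}\})\geq P(\bigcap_\phi\bigcup_\a\{Y_{\phi,\a}\geq\lambda_{\phi,\a}\})$ for every choice of thresholds $\lambda_{\phi,\a}$; specializing $\lambda_{\phi,\a}=\psi_\phi$ yields (\ref{eq:problemma}). I expect the genuine content to sit in two places. The first is getting the \emph{direction} of the two correlation conditions right: the sign of the displayed difference (zero within a group, nonnegative across groups) is exactly the orientation of Gordon's inequality that produces a \emph{lower} bound on the probability for the primary process, and it is easy to invert by accident. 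The second, more technical, obstacle is that $\Phi$ is an arbitrary (possibly uncountable, unbounded) set while $\a$ ranges over a sphere, so strictly speaking one must first prove the comparison for finite index sets and then pass to the infimum/supremum using separability of the processes together with monotone limits over finite subsets; since the statement is quoted from \cite{Gordon88}, I would cite that reduction rather than reproduce it.
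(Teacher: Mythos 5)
Your proposal is correct, and it takes the same route as the paper: the paper gives no proof of this lemma at all beyond the citation to \cite{Gordon88}, and your argument — setting up the two centered Gaussian fields $X_{\phi,\a}$ and $Y_{\phi,\a}$, checking equal variances $2\|\phi\|_2^2$ on the sphere $\|\a\|_2=1$, equality of covariances within a fixed $\phi$, and across-group dominance via the identity $E[X_{\phi,\a}X_{\phi',\a'}]-E[Y_{\phi,\a}Y_{\phi',\a'}]=(1-\a^T\a')\left(\|\phi\|_2\|\phi'\|_2-\phi^T\phi'\right)\geq 0$ — is exactly the standard covariance verification that exhibits the lemma as a special case of Gordon's min--max comparison theorem, with the orientation of the comparison (lower bound for the primary process augmented by $\|\phi\|_2\,g$) and the thresholds $\psi_\phi$ depending only on the inner index handled correctly. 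Your closing caveat about passing from finite index sets to arbitrary $\Phi$ is the right technical point to flag and is appropriately deferred to \cite{Gordon88}.
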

Now, after applying Lemma \ref{thm:unsignedlemma} one has
\begin{multline}
\hspace{-0in}P(\min_{[\w^T \sigma]^T\in S_{\w}(\sigma,\xtilde,C_\w)}(f_{obj}(\sigma,\w)+\sqrt{\|\w\|_2^2+\sigma^2}g)\geq \zeta_{obj}^{(l)})\\=P\left (\min_{[\w^T \sigma]^T\in S_{\w}(\sigma,\xtilde,C_\w)}\max_{\|\a\|_2=1}\left (  \a^T A_{\v}\begin{bmatrix} \w\\\sigma\end{bmatrix}+\sqrt{\|\w\|_2^2+\sigma^2}g\right ) \geq \zeta_{obj}^{(l)}\right ) \\
\geq P\left (\min_{[\w^T \sigma]^T\in S_{\w}(\sigma,\xtilde,C_\w)}\max_{\|\a\|_2=1}\left (  \sqrt{\|\w\|_2^2+\sigma^2}\sum_{i=1}^{m}\g_i\a_i+\sum_{i=1}^{n}\h_i\w_i+\h_{n+1}\sigma\right ) \geq \zeta_{obj}^{(l)}\right ).\label{eq:objlassol15}
\end{multline}
In what follows we will analyze the following probability
\begin{equation}
p_l=P\left (\min_{[\w^T \sigma]^T\in S_{\w}(\sigma,\xtilde,C_\w)}\max_{\|\a\|_2=1}\left (  \sqrt{\|\w\|_2^2+\sigma^2}\sum_{i=1}^{m}\g_i\a_i+\sum_{i=1}^{n}\h_i\w_i+\h_{n+1}\sigma\right ) \geq \zeta_{obj}^{(l)}\right ),\label{eq:probint}
\end{equation}
which is of course nothing but the probability on the left-hand side of the inequality in (\ref{eq:objlassol15}). We will essentially show that for certain $\zeta_{obj}^{(l)}$ this probability is close to $1$. That will rather obviously imply that we have a ``high probability" lower bound on $\zeta_{obj}$. To that end, we first note that the maximization over $\a$ is trivial and one obtains
\begin{equation}
p_l=P\left (\min_{[\w^T \sigma]^T\in S_{\w}(\sigma,\xtilde,C_\w)}\left (  \sqrt{\|\w\|_2^2+\sigma^2}\|\g\|_2+\sum_{i=1}^{n}\h_i\w_i\right )+\h_{n+1}\sigma \geq \zeta_{obj}^{(l)}\right ).\label{eq:probint1}
\end{equation}
To facilitate the exposition that will follow let
\begin{equation}
\xi(\sigma,\g,\h,\xtilde)=\min_{[\w^T \sigma]^T\in S_{\w}(\sigma,\xtilde,C_\w)} \left ( \sqrt{\|\w\|_2^2+\sigma^2}\|\g\|_2+\sum_{i=1}^{n}\h_i\w_i\right ).\label{eq:defxi}
\end{equation}
One should note here that, although present in the definition of $S_{\w}$,  $\sigma$ clearly does not have an impact on the result of the above optimization.
Now we split the analysis into two parts. The first one will be the deterministic analysis of $\xi(\sigma,\g,\h,\xtilde)$ and will be presented in Subsection \ref{sec:unsigneddet}. In the second part (that will be presented in Subsection \ref{sec:unsignedconc}) we will use the results of such a deterministic analysis and continue the above probabilistic analysis applying various concentration results.

\subsubsection{Optimizing $\xi(\sigma,\g,\h,\xtilde)$} \label{sec:unsigneddet}

In this section we compute $\xi(\sigma,\g,\h)$. We first rewrite the optimization problem from (\ref{eq:defxi}) in the following possibly clearer form
\begin{eqnarray}
\xi(\sigma,\g,\h,\xtilde)=\min_{\w} & & \sqrt{\|\w\|_2^2+\sigma^2}\|\g\|_2+\sum_{i=1}^{n}\h_i\w_i \nonumber \\
\mbox{subject to} & & \|\xtilde+\w\|_1\leq \|\xtilde\|_1\nonumber \\
& & \sqrt{\|\w\|_2^2+\sigma^2}\leq \sqrt{C_\w^2+\sigma^2}.\label{eq:defxi2}
\end{eqnarray}
To remove the absolute values we introduce auxiliary variables $\t_i,1\leq i\leq n$ and transform the above problem to
\begin{eqnarray}
\xi(\sigma,\g,\h,\xtilde)=\min_{\w,\t} & & \sqrt{\|\w\|_2^2+\sigma^2}\|\g\|_2+\sum_{i=1}^{n}\h_i\w_i\nonumber \\
\mbox{subject to} & & \sum_{i=1}^n \t_i\leq \|\xtilde\|_1\nonumber \\
& & \xtilde_i+\w_i-\t_i \leq 0, n-k+1\leq i\leq n\nonumber \\
& & -\xtilde_i-\w_i-\t_i\leq 0, n-k+1\leq i\leq n\nonumber \\
& & \w_i-\t_i\leq 0, 1\leq i\leq n-k\nonumber \\
& & -\w_i-\t_i\leq 0, 1\leq i\leq n-k\nonumber \\
& & \sqrt{\|\w\|_2^2+\sigma^2}\leq \sqrt{C_\w^2+\sigma^2}.\label{eq:defxi4}
\end{eqnarray}
The Lagrange dual of the above problem then becomes
\begin{multline}
{\cal L}(\nu,\lambda^{(1)},\lambda^{(2)},\w,\t,\gamma)=\sqrt{\|\w\|_2^2+\sigma^2}\|\g\|_2+\sum_{i=1}^{n}\h_i\w_i +\nu\sum_{i=1}^n \t_i-\nu\|\xtilde\|_1+\sum_{i=n-k+1}^{n}\lambda_i^{(1)}(\xtilde_i+\w_i-\t_i)\\
+\sum_{i=n-k+1}^{n}\lambda_i^{(2)}(-\xtilde_i-\w_i-\t_i)
+\sum_{i=1}^{n-k}\lambda_i^{(1)}(\w_i-\t_i)+\sum_{i=1}^{n-k}\lambda_i^{(2)}(-\w_i-\t_i)+\gamma(\sqrt{\|\w\|_2^2+\sigma^2}- \sqrt{C_\w^2+\sigma^2}).\label{eq:Lagran1}
\end{multline}
After rearranging the terms we further have
\begin{multline}
{\cal L}(\nu,\lambda^{(1)},\lambda^{(2)},\w,\t,\gamma)=\sqrt{\|\w\|_2^2+\sigma^2}\|\g\|_2+\sum_{i=1}^{n}\h_i\w_i -\nu\|\xtilde\|_1+\sum_{i=1}^n \t_i(\nu-\lambda_i^{(1)}-\lambda_i^{(2)})  +\sum_{i=n-k+1}^{n}\lambda_i^{(1)}(\xtilde_i+\w_i)\\
+\sum_{i=n-k+1}^{n}\lambda_i^{(2)}(-\xtilde_i-\w_i)
+\sum_{i=1}^{n-k}\lambda_i^{(1)}\w_i-\sum_{i=1}^{n-k}\lambda_i^{(2)}\w_i+\gamma(\sqrt{\|\w\|_2^2+\sigma^2}- \sqrt{C_\w^2+\sigma^2}).\label{eq:Lagran2}
\end{multline}
After a few further arrangements we finally have
\begin{multline}
{\cal L}(\nu,\lambda^{(1)},\lambda^{(2)},\w,\t,\gamma)=\sqrt{\|\w\|_2^2+\sigma^2}(\|\g\|_2+\gamma)+\sum_{i=1}^{n}\h_i\w_i -\nu\|\xtilde\|_1+\sum_{i=1}^n \t_i(\nu-\lambda_i^{(1)}-\lambda_i^{(2)}) \\ +\sum_{i=n-k+1}^{n}(\lambda_i^{(1)}-\lambda_i^{(2)})\xtilde_i
+\sum_{i=1}^{n}(\lambda_i^{(1)}-\lambda_i^{(2)})\w_i-\gamma \sqrt{C_\w^2+\sigma^2}.\label{eq:Lagran3}
\end{multline}
Setting $(\nu-\lambda_i^{(1)}-\lambda_i^{(2)})=0,1\leq i\leq n$, (to insure that the dual is bounded) and combining (\ref{eq:defxi4}) and (\ref{eq:Lagran3}) is enough to obtain
\begin{eqnarray}
\xi(\sigma,\g,\h,\xtilde)=\max_{\nu,\lambda^{(1)},\lambda^{(2)},\gamma}\min_{\w,\t} & & {\cal L}(\nu,\lambda^{(1)},\lambda^{(2)},\w,\t)\nonumber \\
\mbox{subject to} & & \lambda_j^{(i)}\geq 0, 1\leq j\leq n, 1\leq i \leq 2\nonumber \\
& & \nu\geq 0\nonumber \\
& & \nu-\lambda_i^{(1)}-\lambda_i^{(2)}=0,1\leq i\leq n\nonumber \\
& & \gamma\geq 0,\label{eq:Lagran4}
\end{eqnarray}
where we of course use the fact that the strict duality obviously holds. After removing the minimization over $\t$ we have
\begin{eqnarray}
\xi(\sigma,\g,\h,\xtilde)=\max_{\nu,\lambda^{(1)},\lambda^{(2)},\gamma}\min_{\w} & & {\cal L}(\nu,\lambda^{(1)},\lambda^{(2)},\w,\gamma)\nonumber \\
\mbox{subject to} & & \lambda^{(i)}\geq 0, 1\leq j\leq n, 1\leq i \leq 2\nonumber \\
& & \nu\geq 0\nonumber \\
& & \nu-\lambda_i^{(1)}-\lambda_i^{(2)}=0,1\leq i\leq n\nonumber \\
& & \gamma\geq 0.\label{eq:Lagran5}
\end{eqnarray}
where
\begin{equation}
\hspace{-.8in}{\cal L}(\nu,\lambda^{(1)},\lambda^{(2)},\w,\gamma)=\sqrt{\|\w\|_2^2+\sigma^2}(\|\g\|_2+\gamma)+\sum_{i=1}^{n}\h_i\w_i -\nu\|\xtilde\|_1  +\sum_{i=n-k+1}^{n}(\lambda_i^{(1)}-\lambda_i^{(2)})\xtilde_i+\sum_{i=1}^{n}(\lambda_i^{(1)}-\lambda_i^{(2)})\w_i-\gamma \sqrt{C_\w^2+\sigma^2}.\label{eq:shortlagran}
\end{equation}
The inner minimization over $\w$ is now doable. Setting the derivatives with respect to $\w_i$ to zero one obtains
\begin{equation}
\frac{\w(\|\g\|_2+\gamma)}{\sqrt{\|\w\|_2^2+\sigma^2}}+(\h+\lambda^{(1)}-\lambda^{(2)})=0,\label{eq:solvew}
\end{equation}
where $\lambda^{(1)}=[\lambda_1^{(1)},\lambda_2^{(1)},\dots,\lambda_n^{(1)}]^T$, $\lambda^{(2)}=[\lambda_1^{(2)},\lambda_2^{(2)},\dots,\lambda_n^{(2)}]^T$.
From (\ref{eq:solvew}) one then has
\begin{equation}
\w(\|\g\|_2+\gamma)=-\sqrt{\|\w\|_2^2+\sigma^2}(\h+\lambda^{(1)}-\lambda^{(2)})\label{eq:solvew1}
\end{equation}
or in a norm form
\begin{equation}
\|\w\|_2^2(\|\g\|_2+\gamma)^2=(\|\w\|_2^2+\sigma^2)\|\h+\lambda^{(1)}-\lambda^{(2)}\|_2^2.\label{eq:solvew2}
\end{equation}
From (\ref{eq:solvew2}) we then find
\begin{equation}
\|\w_{sol}\|_2=\frac{\sigma\|\h+\lambda^{(1)}-\lambda^{(2)}\|_2}{\sqrt{(\|\g\|_2+\gamma)^2-\|\h+\lambda^{(1)}-\lambda^{(2)}\|_2^2}},\label{eq:solvew3}
\end{equation}
and from (\ref{eq:solvew1})
\begin{equation}
\w_{sol}=\frac{\sigma(\h+\lambda^{(1)}-\lambda^{(2)})}{\sqrt{(\|\g\|_2+\gamma)^2-\|\h+\lambda^{(1)}-\lambda^{(2)}\|_2^2}}\label{eq:solvew4}
\end{equation}
where $\w_{sol}$ is of course the solution of the inner minimization over $\w$. Now, one should note that (\ref{eq:solvew3}) and (\ref{eq:solvew4}) are of course possible only if $\|\g\|_2+\gamma-\|\h+\lambda^{(1)}-\lambda^{(2)}\|_2\geq 0$. Later in the paper we will recognize, that for $\lambda^{(1)}$ and $\lambda^{(2)}$ that are optimal in (\ref{eq:Lagran5}), validity of this condition essentially implies the regime (in $(\alpha,\beta)$ plane) where the worst-case $\|\w\|_2$ is finite with overwhelming probability (or equivalently, if for such $\lambda^{(1)}$ and $\lambda^{(2)}$ the condition is not valid then for the corresponding ($\alpha,\beta$) the worst-case $\|\w\|_2$ is infinite with overwhelming probability). Plugging the value of $\w_{sol}$ from (\ref{eq:solvew4}) back in (\ref{eq:Lagran5}) gives
\begin{eqnarray}
\hspace{-.5in}\xi(\sigma,\g,\h,\xtilde)=\max_{\nu,\lambda^{(1)},\lambda^{(2)},\gamma} & & \sigma\sqrt{(\|\g\|_2+\gamma)^2-\|\h+\lambda^{(1)}-\lambda^{(2)}\|_2^2} -\nu\|\xtilde\|_1  +\sum_{i=n-k+1}^{n}(\lambda_i^{(1)}-\lambda_i^{(2)})\xtilde_i-\gamma \sqrt{C_\w^2+\sigma^2}\nonumber \\
\mbox{subject to} & & \lambda^{(i)}\geq 0, 1\leq j\leq n, 1\leq i \leq 2\nonumber \\
& & \nu\geq 0\nonumber \\
& & \nu-\lambda_i^{(1)}-\lambda_i^{(2)}=0,1\leq i\leq n\nonumber \\
& & \|\g\|_2+\gamma-\|\h+\lambda^{(1)}-\lambda^{(2)}\|_2\geq 0\nonumber \\
& & \gamma\geq 0.\label{eq:Lagran7}
\end{eqnarray}
Let $\z^{(1)}=[1,1,\dots,1]^T$. By plugging the constraint $\lambda^{(1)}=\nu\z^{(1)}-\lambda^{(2)}$ back into the objective function and making sure that $\nu-\lambda_i^{(2)}\geq 0,1\geq i\geq n$, one can remove $\lambda^{(1)}$ from the above optimization and get the following
\begin{eqnarray}
\hspace{-.5in}\xi(\sigma,\g,\h,\xtilde)=\max_{\nu,\lambda^{(2)},\gamma} & & \sigma\sqrt{(\|\g\|_2+\gamma)^2-\|\h+\nu\z^{(1)}-2\lambda^{(2)}\|_2^2}-\nu\|\xtilde\|_1  +\sum_{i=n-k+1}^{n}(\nu-2\lambda_i^{(2)})\xtilde_i-\gamma \sqrt{C_\w^2+\sigma^2}\nonumber \\
\mbox{subject to}
& & \nu\geq 0\nonumber \\
& & 0 \leq\lambda_i^{(2)}\leq \nu,1\leq i\leq n\nonumber \\
& & \|\g\|_2+\gamma-\|\h+\nu\z^{(1)}-2\lambda^{(2)}\|_2\geq 0\nonumber \\
& & \gamma\geq 0.\label{eq:Lagran8}
\end{eqnarray}
Since we assumed that $\xtilde_i\geq 0,n-k+1\leq i\leq n$, and $\xtilde_i=0,1\leq i\leq n-k$ one then from (\ref{eq:Lagran8}) has
\begin{eqnarray}
\xi(\sigma,\g,\h,\xtilde)=\max_{\nu,\lambda^{(2)},\gamma} & & \sigma\sqrt{(\|\g\|_2+\gamma)^2-\|\h+\nu\z^{(1)}-2\lambda^{(2)}\|_2^2} -2\sum_{i=n-k+1}^{n}\lambda_i^{(2)}\xtilde_i-\gamma \sqrt{C_\w^2+\sigma^2}\nonumber \\
\mbox{subject to}
& & \nu\geq 0\nonumber \\
& & 0 \leq\lambda_i^{(2)}\leq \nu,1\leq i\leq n\nonumber \\
& & \|\g\|_2+\gamma-\|\h+\nu\z^{(1)}-2\lambda^{(2)}\|_2\geq 0\nonumber \\
& & \gamma\geq 0.\label{eq:Lagran9}
\end{eqnarray}
After a simple scaling of $\lambda^{(2)}$ one finds that the following is an equivalent to (\ref{eq:Lagran9})
\begin{eqnarray}
\xi(\sigma,\g,\h,\xtilde)=\max_{\nu,\lambda^{(2)},\gamma} & & \sigma\sqrt{(\|\g\|_2+\gamma)^2-\|\h+\nu\z^{(1)}-\lambda^{(2)}\|_2^2} -\sum_{i=n-k+1}^{n}\lambda_i^{(2)}\xtilde_i-\gamma \sqrt{C_\w^2+\sigma^2}\nonumber \\
\mbox{subject to}
& & \nu\geq 0\nonumber \\
& & 0 \leq\lambda_i^{(2)}\leq 2\nu,1\leq i\leq n\nonumber \\
& & \|\g\|_2+\gamma-\|\h+\nu\z^{(1)}-\lambda^{(2)}\|_2\geq 0\nonumber \\
& & \gamma\geq 0.\label{eq:Lagran10}
\end{eqnarray}
Now, the maximization over $\gamma$ can be done. After setting the derivative to zero one finds
\begin{equation}
\frac{\|\g\|_2+\gamma}{\sqrt{(\|\g\|_2+\gamma)^2-\|\h+\nu\z^{(1)}-\lambda^{(2)}\|_2^2}}-\sqrt{C_\w^2+\sigma^2}=0\label{eq:dergamma}
\end{equation}
and after some algebra
\begin{equation}
\gamma_{opt}=\sqrt{1+\frac{\sigma^2}{C_\w^2}}\|\h+\nu\z^{(1)}-\lambda^{(2)}\|_2-\|\g\|_2,\label{eq:optgamma}
\end{equation}
where of course $\gamma_{opt}$ would be the solution of (\ref{eq:Lagran10}) only if larger than or equal to zero. Alternatively of course $\gamma_{opt}=0$. Now, based on these two scenarios we distinguish two different optimization problems:
\begin{enumerate}
\item \underline{\emph{The ``overwhelming" optimization}}
\begin{eqnarray}
\xi_{ov}(\sigma,\g,\h,\xtilde)=\max_{\nu,\lambda^{(2)}} & & \sigma\sqrt{\|\g\|_2^2-\|\h+\nu\z^{(1)}-\lambda^{(2)}\|_2^2} -\sum_{i=n-k+1}^{n}\lambda_i^{(2)}\xtilde_i\nonumber \\
\mbox{subject to}
& & \nu\geq 0\nonumber \\
& & 0 \leq\lambda_i^{(2)}\leq 2\nu,1\leq i\leq n.\label{eq:Lagran12}
\end{eqnarray}
\item \underline{\emph{The ``non-overwhelming" optimization}}
\begin{eqnarray}
\xi_{nov}(\sigma,\g,\h,\xtilde)=\max_{\nu,\lambda^{(2)}} & & \sqrt{C_\w^2+\sigma^2}\|\g\|_2-C_\w\|\h+\nu\z^{(1)}-\lambda^{(2)}\|_2 -\sum_{i=n-k+1}^{n}\lambda_i^{(2)}\xtilde_i\nonumber \\
\mbox{subject to}
& & \nu\geq 0\nonumber \\
& & 0 \leq\lambda_i^{(2)}\leq 2\nu,1\leq i\leq n.\label{eq:Lagran13}
\end{eqnarray}
\end{enumerate}
The ``overwhelming" optimization is the equivalent to (\ref{eq:Lagran10}) if for its optimal values $\hat{\nu}$ and $\widehat{\lambda^{(2)}}$ holds
\begin{equation}
\sqrt{1+\frac{\sigma^2}{C_\w^2}}\|\h+\hat{\nu}\z^{(1)}-\widehat{\lambda^{(2)}}\|_2\leq \|\g\|_2,\label{eq:ovnoncond}
\end{equation}
We now summarize in the following lemma the results of this subsection.
\begin{lemma}
Let $\hat{\nu}$ and $\widehat{\lambda^{(2)}}$ be the solutions of (\ref{eq:Lagran12}) and analogously let $\tilde{\nu}$ and $\widetilde{\lambda^{(2)}}$ be the solutions of (\ref{eq:Lagran13}). Let $\xi(\sigma,\g,\h,\xtilde)$ be, as defined in (\ref{eq:defxi}), the optimal value of the objective function in (\ref{eq:defxi}). Then
\begin{equation}
\hspace{-.8in}\xi(\sigma,\g,\h,\xtilde)=\begin{cases}\sigma\sqrt{\|\g\|_2^2-\|\h+\hat{\nu}\z^{(1)}-\widehat{\lambda^{(2)}}\|_2^2} -\sum_{i=n-k+1}^{n}\widehat{\lambda_i^{(2)}}\xtilde_i, &
\mbox{if}\quad  \sqrt{1+\frac{\sigma^2}{C_\w^2}}\|\h+\hat{\nu}\z^{(1)}-\widehat{\lambda^{(2)}}\|_2\leq \|\g\|_2\\
\sqrt{C_\w^2+\sigma^2}\|\g\|_2-C_\w\|\h+\tilde{\nu}\z^{(1)}-\widetilde{\lambda^{(2)}}\|_2 -\sum_{i=n-k+1}^{n}\widetilde{\lambda_i^{(2)}}\xtilde_i, & \mbox{otherwise} \end{cases}.\label{eq:defhatxi}
\end{equation}
Moreover, let $\hat{\w}$ be the solution of (\ref{eq:defxi}). Then
\begin{equation}
\hat{\w}(\sigma,\g,\h,\xtilde)=\begin{cases}
\frac{\sigma(\h+\hat{\nu}\z^{(1)}-\widehat{\lambda^{(2)}})}{\sqrt{\|\g\|_2^2-\|\h+\hat{\nu}\z^{(1)}-\widehat{\lambda^{(2)}}\|_2^2}}, &
\mbox{if}\quad  \sqrt{1+\frac{\sigma^2}{C_\w^2}}\|\h+\hat{\nu}\z^{(1)}-\widehat{\lambda^{(2)}}\|_2\leq \|\g\|_2\\
\frac{C_\w(\h+\tilde{\nu}\z^{(1)}-\widetilde{\lambda^{(2)}})}{\|\h+\tilde{\nu}\z^{(1)}-\widetilde{\lambda^{(2)}}\|_2}, &
\mbox{otherwise}\end{cases},\label{eq:defhatw}
\end{equation}
and
\begin{equation}
\|\hat{\w}(\sigma,\g,\h,\xtilde)\|_2=\begin{cases}
\frac{\sigma\|\h+\hat{\nu}\z^{(1)}-\widehat{\lambda^{(2)}})\|_2}{\sqrt{\|\g\|_2^2-\|\h+\hat{\nu}\z^{(1)}-\widehat{\lambda^{(2)}}\|_2^2}}, &
\mbox{if}\quad  \sqrt{1+\frac{\sigma^2}{C_\w^2}}\|\h+\hat{\nu}\z^{(1)}-\widehat{\lambda^{(2)}}\|_2\leq \|\g\|_2\\
C_\w, & \mbox{otherwise}
\end{cases}.
\label{eq:defhatwnorm}
\end{equation}\label{thm:optsollower}
\end{lemma}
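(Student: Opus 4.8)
The plan is to read Lemma \ref{thm:optsollower} as a bookkeeping summary of the Lagrangian reduction already carried out in the equations from (\ref{eq:defxi}) through (\ref{eq:Lagran13}); the proof therefore amounts to (i) justifying the strong-duality step, (ii) executing the inner minimization over $\w$ and the subsequent maximization over $\gamma$ cleanly, and (iii) substituting the resulting optimal multipliers back to read off the three displayed formulas. First I would observe that the problem defining $\xi(\sigma,\g,\h,\xtilde)$ in (\ref{eq:defxi2}) is convex: the objective $\sqrt{\|\w\|_2^2+\sigma^2}\|\g\|_2+\sum_i\h_i\w_i$ is convex in $\w$ (since $\|\g\|_2\geq 0$ and $\sqrt{\|\w\|_2^2+\sigma^2}$ is convex), and the feasible set cut out by $\|\xtilde+\w\|_1\leq\|\xtilde\|_1$ and $\|\w\|_2\leq C_\w$ is convex. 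Provided $\xtilde\neq\0$ a Slater point is immediate (take $\w_j=-\delta$ at a nonzero coordinate $j$ of $\xtilde$ for small $\delta>0$, so that both $\|\xtilde+\w\|_1<\|\xtilde\|_1$ and $\|\w\|_2<C_\w$ hold strictly), so strong duality holds and the passage to (\ref{eq:Lagran4})--(\ref{eq:Lagran5}) — including forcing $\nu-\lambda_i^{(1)}-\lambda_i^{(2)}=0$ to keep the dual finite after the $\t$-minimization — is legitimate.

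Next I would verify the inner minimization over $\w$. The reduced Lagrangian (\ref{eq:shortlagran}) is, for fixed multipliers, a convex function of $\w$ of the shape $\sqrt{\|\w\|_2^2+\sigma^2}(\|\g\|_2+\gamma)+(\h+\lambda^{(1)}-\lambda^{(2)})^T\w+\mathrm{const}$. Since $\|\g\|_2+\gamma\geq 0$, stationarity (\ref{eq:solvew}) yields the unique minimizer (\ref{eq:solvew4}) precisely when $\|\g\|_2+\gamma-\|\h+\lambda^{(1)}-\lambda^{(2)}\|_2>0$; this is exactly the regularity condition that makes the square root in (\ref{eq:solvew3}) real, and it is retained as an explicit constraint in (\ref{eq:Lagran7}). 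Eliminating $\lambda^{(1)}$ via $\lambda^{(1)}=\nu\z^{(1)}-\lambda^{(2)}$ (with $0\leq\lambda_i^{(2)}\leq\nu$), using $\xtilde_i=0$ for $i\leq n-k$ and $\xtilde_i\geq 0$ otherwise, and performing the harmless rescaling of $\lambda^{(2)}$, produces (\ref{eq:Lagran10}); I would present these purely as algebraic identities.

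The heart of the argument is the concave maximization over $\gamma\geq 0$ in (\ref{eq:Lagran10}). Setting the derivative to zero gives the stationary value $\gamma_{opt}$ in (\ref{eq:optgamma}), and since the objective is concave in $\gamma$ the constrained maximizer is $\gamma_{opt}$ when $\gamma_{opt}\geq 0$ and $0$ otherwise. Writing $\|\cdot\|_2$ for $\|\h+\nu\z^{(1)}-\lambda^{(2)}\|_2$, substituting $\gamma=0$ reproduces the ``overwhelming'' objective (\ref{eq:Lagran12}), which is thus the correct reduction exactly when $\gamma_{opt}\leq 0$, i.e.\ when $\sqrt{1+\frac{\sigma^2}{C_\w^2}}\|\cdot\|_2\leq\|\g\|_2$; substituting $\gamma=\gamma_{opt}>0$ and simplifying (using $\sqrt{1+\frac{\sigma^2}{C_\w^2}}=\frac{\sqrt{C_\w^2+\sigma^2}}{C_\w}$) collapses the two square-root terms into the ``non-overwhelming'' objective (\ref{eq:Lagran13}). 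Maximizing over the remaining $(\nu,\lambda^{(2)})$ and naming the optimizers $\hat\nu,\widehat{\lambda^{(2)}}$ and $\tilde\nu,\widetilde{\lambda^{(2)}}$ then gives the two branches of (\ref{eq:defhatxi}), with selector (\ref{eq:ovnoncond}). To obtain (\ref{eq:defhatw}) and (\ref{eq:defhatwnorm}) I would feed the selected $\gamma$ back into (\ref{eq:solvew4}): when $\gamma=0$ the denominator is $\sqrt{\|\g\|_2^2-\|\cdot\|_2^2}$ directly, whereas when $\gamma=\gamma_{opt}$ one computes $(\|\g\|_2+\gamma_{opt})^2-\|\cdot\|_2^2=\frac{\sigma^2}{C_\w^2}\|\cdot\|_2^2$, so the denominator becomes $\frac{\sigma}{C_\w}\|\cdot\|_2$, the $\sigma$ cancels, and $\hat\w=C_\w(\h+\tilde\nu\z^{(1)}-\widetilde{\lambda^{(2)}})/\|\cdot\|_2$ with $\|\hat\w\|_2=C_\w$, exactly as claimed.

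I expect the only genuinely delicate point to be reconciling the per-point nature of the $\gamma$ case-split with the clean global statement: the sign of $\gamma_{opt}$ depends on $(\nu,\lambda^{(2)})$, so one must argue that the outer maximizer lands consistently in a single regime and that evaluating (\ref{eq:ovnoncond}) at the optimizer of (\ref{eq:Lagran12}) correctly decides which branch governs. I would handle this by noting that the two candidate objectives agree along the boundary $\gamma_{opt}=0$ and that the maximized value over $\gamma\geq 0$ is their pointwise upper envelope, so the condition tested at the ``overwhelming'' optimizer is a valid selector. The remaining care is the boundary degeneracy where $\|\g\|_2+\gamma-\|\cdot\|_2$ vanishes, which — as flagged after (\ref{eq:solvew4}) — is precisely the transition to the regime where the worst-case $\|\w\|_2$ ceases to be finite, and which is excluded under the standing assumption (\ref{eq:assumplasso}).
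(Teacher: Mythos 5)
Your overall route is exactly the paper's: the author proves this lemma by pointing back to the chain (\ref{eq:defxi2})--(\ref{eq:Lagran13}) (``the first part follows trivially,'' the second ``follows from (\ref{eq:solvew4})''), and your items (i)--(iii) --- the Slater point, the inner minimization over $\w$ with the implicit cone constraint $\|\g\|_2+\gamma\geq\|\h+\lambda^{(1)}-\lambda^{(2)}\|_2$, the concave maximization over $\gamma$ with stationary value (\ref{eq:optgamma}), and the back-substitution into (\ref{eq:solvew4}) including the cancellation $(\|\g\|_2+\gamma_{opt})^2-\|\h+\tilde{\nu}\z^{(1)}-\widetilde{\lambda^{(2)}}\|_2^2=\frac{\sigma^2}{C_\w^2}\|\h+\tilde{\nu}\z^{(1)}-\widetilde{\lambda^{(2)}}\|_2^2$ that yields $\|\hat{\w}\|_2=C_\w$ --- are a correct and more explicit rendering of that computation. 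However, the one step you yourself flag as delicate is settled with a false claim. Write $h=\|\h+\nu\z^{(1)}-\lambda^{(2)}\|_2$, $f_{ov}=\sigma\sqrt{\|\g\|_2^2-h^2}-\sum_{i=n-k+1}^{n}\lambda_i^{(2)}\xtilde_i$ and $f_{nov}=\sqrt{C_\w^2+\sigma^2}\|\g\|_2-C_\w h-\sum_{i=n-k+1}^{n}\lambda_i^{(2)}\xtilde_i$. By Cauchy--Schwarz, $\sigma\sqrt{\|\g\|_2^2-h^2}+C_\w h\leq\sqrt{C_\w^2+\sigma^2}\,\|\g\|_2$, so $f_{nov}\geq f_{ov}$ \emph{everywhere} on the feasible set, with equality precisely on the surface $\gamma_{opt}=0$. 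Hence the ``pointwise upper envelope'' of the two branch objectives is identically $f_{nov}$, and your selector argument, read literally, would always return the non-overwhelming value $\xi_{nov}$ --- contradicting the first branch of (\ref{eq:defhatxi}) whenever $\xi_{ov}<\xi_{nov}$. The correct description is that $\max_{\gamma\geq 0}$ equals $f_{ov}$ on the region where $\gamma_{opt}(\nu,\lambda^{(2)})\leq 0$ and $f_{nov}$ where $\gamma_{opt}>0$, which is not an envelope of the two; one always has only the sandwich $\xi_{ov}\leq\xi\leq\xi_{nov}$, and the lemma's claim is that the test (\ref{eq:ovnoncond}) decides which endpoint is attained.

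The repair is short but does require an idea you did not state: the objective of (\ref{eq:Lagran10}) is \emph{jointly} concave in $(\nu,\lambda^{(2)},\gamma)$, since $(t,\u)\mapsto\sqrt{t^2-\|\u\|_2^2}$ is concave on the second-order cone $t\geq\|\u\|_2$ and $(\gamma,\nu,\lambda^{(2)})\mapsto(\|\g\|_2+\gamma,\h+\nu\z^{(1)}-\lambda^{(2)})$ is affine, while the feasible set is a product of convex sets. Consequently, if the $\gamma=0$ slice optimizer $(\hat{\nu},\widehat{\lambda^{(2)}})$ satisfies $\gamma_{opt}(\hat{\nu},\widehat{\lambda^{(2)}})\leq 0$ --- which is exactly (\ref{eq:ovnoncond}) --- then the one-sided $\gamma$-derivative at $(\hat{\nu},\widehat{\lambda^{(2)}},0)$ is nonpositive, and block-wise first-order optimality over a product set implies joint optimality for a concave function, giving $\xi=\xi_{ov}$. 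Conversely, if (\ref{eq:ovnoncond}) fails, the joint maximizer cannot lie on $\gamma=0$ (otherwise it would be a slice maximizer whose $\gamma$-derivative condition forces (\ref{eq:ovnoncond})); it is therefore $\gamma$-stationary with $\gamma>0$, where the constrained value coincides with $f_{nov}$, and since the constrained value is pointwise at most $f_{nov}$ one gets $\xi=\xi_{nov}$, attained at $(\tilde{\nu},\widetilde{\lambda^{(2)}})$. With this substitution in place of the envelope sentence, your write-up is complete and coincides with what the paper leaves implicit; everything else (Slater point, uniqueness of the inner minimizer, the algebra for (\ref{eq:defhatw}) and (\ref{eq:defhatwnorm})) is sound.
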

\begin{proof}
The first part follows trivially. The second one follows from (\ref{eq:solvew4}) by choosing the optimal $\hat{\nu}$ and $\widehat{\lambda^{(2)}}$ or alternatively $\tilde{\nu}$ and $\widetilde{\lambda^{(2)}}$.
\end{proof}

\subsubsection{Concentration of $\xi(\sigma,\g,\h,\xtilde)$} \label{sec:unsignedconc}

In this section we will show that $\xi(\sigma,\g,\h,\xtilde)$ concentrates with high probability around its mean. To do so we will instead of looking at (\ref{eq:defhatxi}) look back at (\ref{eq:defxi}) which is the original definition of $\xi(\sigma,\g,\h,\xtilde)$. Now, before proceeding further we first recall on the following incredible result from \cite{CIS76} related to the concentrations of Lipschitz functions of Gaussian random variables.
\begin{lemma}[\cite{CIS76,Pisier86}]
Let $f_{lip}(\cdot):R^n\longrightarrow R$ be a Lipschitz function such that $|f_{lip}(\a)-f_{lip}(\b)|\leq c_{lip}\|\a-\b\|_2$. Let $\a$ be a vector comprised of i.i.d. zero-mean, unit variance Gaussian random variables and let $\epsilon_{lip}>0$. Then
\begin{equation}
P(|f_{lip}(\a)-Ef_{lip}(\a)|\geq \epsilon_{lip}Ef_{lip}(\a))\leq \exp \left \{  -\frac{(\epsilon_{lip} Ef_{lip}(\a))^2}{2c_{lip}^2} \right \}.\label{eq:lipsch}
\end{equation}\label{thm:lipsch}
\end{lemma}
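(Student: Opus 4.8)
The plan is to recognize the stated bound as the two-sided form of the classical Gaussian concentration inequality with deviation $t=\epsilon_{lip}Ef_{lip}(\a)$, and to reduce to it in two harmless steps. First I would pass to the one-sided statement $P(f_{lip}(\a)-Ef_{lip}(\a)\geq t)\leq \exp(-t^2/(2c_{lip}^2))$; applying this to both $f_{lip}$ and $-f_{lip}$ (each $c_{lip}$-Lipschitz, each with the same mean up to sign) and union-bounding recovers the absolute-value version with the identical exponent. Second, it suffices to treat smooth $f_{lip}$ with $\|\nabla f_{lip}(\a)\|_2\leq c_{lip}$ everywhere: a general $c_{lip}$-Lipschitz $f_{lip}$ can be mollified by convolution with a narrow Gaussian kernel $\phi_\delta$, and the mollifications are smooth, share the Lipschitz constant $c_{lip}$, and converge pointwise, so the inequality passes to the limit by dominated convergence. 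After these reductions the whole content is the smooth sub-Gaussian estimate.

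For the core estimate I would run the Herbst argument driven by the Gaussian logarithmic Sobolev inequality. Writing $X\sim N(0,I_n)$ and $\Lambda(\lambda)=\log E[e^{\lambda(f_{lip}(X)-Ef_{lip}(X))}]$, I apply the log-Sobolev inequality to $g=e^{\lambda f_{lip}/2}$ and use $\|\nabla f_{lip}\|_2\leq c_{lip}$ to obtain the differential inequality $\lambda\Lambda'(\lambda)-\Lambda(\lambda)\leq \tfrac{\lambda^2}{2}c_{lip}^2$. The left-hand side is $\lambda^2(\Lambda(\lambda)/\lambda)'$, so dividing by $\lambda^2$ and integrating from $0$ (using $\Lambda(\lambda)/\lambda\to\Lambda'(0)=0$) yields $\Lambda(\lambda)\leq \tfrac{\lambda^2}{2}c_{lip}^2$, i.e. $f_{lip}(X)$ is sub-Gaussian with variance proxy $c_{lip}^2$. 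A Chernoff/Markov step gives $P(f_{lip}(X)-Ef_{lip}(X)\geq t)\leq e^{-\lambda t+\lambda^2 c_{lip}^2/2}$, and optimizing at $\lambda=t/c_{lip}^2$ produces exactly $\exp(-t^2/(2c_{lip}^2))$.

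In the spirit of the cited Pisier reference I would keep in reserve an alternative route via rotational invariance. With $X,Y$ independent $N(0,I_n)$ and $X_\theta=X\sin\theta+Y\cos\theta$, the pair $(X_\theta,X_\theta')$ with $X_\theta'=X\cos\theta-Y\sin\theta$ is again a standard Gaussian pair for every $\theta$; writing $f_{lip}(X)-f_{lip}(Y)=\int_0^{\pi/2}\nabla f_{lip}(X_\theta)\cdot X_\theta'\,d\theta$ and applying Jensen inside the exponential moment, together with $\|\nabla f_{lip}\|_2\leq c_{lip}$, bounds $E[e^{\lambda(f_{lip}(X)-Ef_{lip}(X))}]$ directly and reaches the same sub-Gaussian constant. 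Either route then feeds into the same Chernoff optimization.

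The main obstacle is not the computation, which is mechanical once the setup is in place, but securing the analytic input with the sharp constant: the Gaussian log-Sobolev inequality (Gross) for the Herbst route, or the interpolation identity for the Pisier route. I would therefore treat that functional inequality as the genuine engine — invoking it as known, or, if a self-contained derivation were wanted, obtaining it from tensorization of the one-dimensional case via an Ornstein--Uhlenbeck semigroup computation — after which the reduction to smooth functions and the Chernoff step are routine.
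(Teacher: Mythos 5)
You should first note a structural point: the paper never proves this lemma at all --- it is imported verbatim from \cite{CIS76,Pisier86} and used as a black box --- so there is no internal proof to compare against, and your task was effectively to supply the classical argument. Your main route does this correctly: the mollification reduction (smooth $f_{lip}$ with $\|\nabla f_{lip}\|_2\leq c_{lip}$ suffices, by convolution and passage to the limit), the Herbst argument driven by the Gaussian log-Sobolev inequality giving $\lambda\Lambda'(\lambda)-\Lambda(\lambda)\leq \tfrac{\lambda^2}{2}c_{lip}^2$, the integration of $(\Lambda(\lambda)/\lambda)'\leq c_{lip}^2/2$ from $\Lambda'(0)=0$, and the Chernoff optimization at $\lambda=t/c_{lip}^2$ are exactly the standard modern proof of the sharp one-sided bound $P(f_{lip}(\a)-Ef_{lip}(\a)\geq t)\leq \exp(-t^2/(2c_{lip}^2))$, and your treatment of the log-Sobolev inequality as the genuine engine (via tensorization of the one-dimensional case) is the right attitude.

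Two caveats, one of which is a real mismatch with the statement as written. First, your reduction of the two-sided bound to the one-sided one by a union bound over $f_{lip}$ and $-f_{lip}$ yields $2\exp\left(-t^2/(2c_{lip}^2)\right)$, whereas the lemma as stated has no factor of $2$. Around the \emph{mean} the factor $2$ is what the Herbst route provably gives; the constant-free two-sided form in the cited literature is stated around the \emph{median} $M_f$ and comes from the Gaussian isoperimetric inequality (Borell, Cirel'son--Ibragimov--Sudakov), namely $P(f_{lip}(\a)\geq M_f+t)\leq 1-\Phi(t/c_{lip})\leq \tfrac{1}{2}e^{-t^2/(2c_{lip}^2)}$, so that the two half-factors sum to $1$. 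Strictly speaking, then, your proposal establishes the lemma only up to this factor of $2$ (immaterial for every use the paper makes of it, since only the exponential rate enters the overwhelming-probability bookkeeping, but worth stating honestly or fixing via the isoperimetric route). Second, your reserve route via Pisier's interpolation $X_\theta=X\sin\theta+Y\cos\theta$ does \emph{not} ``reach the same sub-Gaussian constant'': applying Jensen with the normalized measure $\tfrac{2}{\pi}\,d\theta$ leaves a factor $\tfrac{\pi}{2}$ inside the exponential moment, and since $\nabla f_{lip}(X_\theta)\cdot X_\theta'$ is conditionally Gaussian with variance at most $c_{lip}^2$, one obtains $E\,e^{\lambda(f_{lip}(X)-f_{lip}(Y))}\leq e^{\lambda^2\pi^2c_{lip}^2/8}$ and hence the tail $\exp\left(-2t^2/(\pi^2c_{lip}^2)\right)$ --- still sub-Gaussian but strictly weaker than $\exp\left(-t^2/(2c_{lip}^2)\right)$ because $\pi^2/4>1$. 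Only the log-Sobolev/Herbst or isoperimetric arguments give the sharp denominator, so the interpolation route should be advertised as yielding the result with a degraded constant, not the same one.
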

In the following lemma we will show that $\xi(\sigma,\g,\h,\xtilde)$ is a Lipschitz function. To do so, we will, roughly speaking, assume that $\|\w\|_2$ in the definition of $\xi(\sigma,\g,\h,\xtilde)$ is bounded by a large constants say $C_\w$.  We recall here that our goal in this paper, though, is much bigger than creating a ``constant type" bound on $\|\w\|_2$. Namely, we will actually establish the precise value that $\|\w\|_2$ takes in the worst case with overwhelming probability. Clearly, knowing that one could then use much better value than $C_\w$ to upper bound $\|\w\|_2$ in the definition of $\xi(\sigma,\g,\h,\xtilde)$. However, for the purposes of the concentration inequalities any constant (of course independent of $n$) is fine. In fact, any sub-root dependence on $n$ would be fine too, it is just that in that case ``overwhelming" wouldn't be negative exponential any more.
\begin{lemma}
Let $\g$ and $\h$ be $m$ and $n$ dimensional vectors, respectively, with i.i.d. standard normal variables as their components. Let $\sigma>0$ be an arbitrary scalar. Let $\xi(\sigma,\g,\h,\xtilde)$ be as in (\ref{eq:defxi}). Further let $\epsilon_{lip}>0$ be any constant. Then
\begin{equation}
P(|\xi(\sigma,\g,\h,\xtilde)-E\xi(\sigma,\g,\h,\xtilde)|\geq \epsilon_{lip}E\xi(\sigma,\g,\h,\xtilde))\leq \exp \left \{  -\frac{(\epsilon_{lip} E\xi(\sigma,\g,\h,\xtilde))^2}{2(2C_\w^2+\sigma^2)} \right \}.\label{eq:lipsch1}
\end{equation}
\label{thm:lipschunsigned}
\end{lemma}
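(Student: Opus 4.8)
The plan is to derive (\ref{eq:lipsch1}) as a direct consequence of the Gaussian concentration inequality of Lemma \ref{thm:lipsch}, applied to $\xi(\sigma,\g,\h,\xtilde)$ regarded as a function of the concatenated standard normal vector $(\g,\h)\in R^{m+n}$. Since $\sigma$ and $\xtilde$ are fixed parameters and, crucially, the feasible set $S_{\w}(\sigma,\xtilde,C_\w)$ of (\ref{eq:defS}) does \emph{not} depend on $(\g,\h)$, the whole task reduces to showing that this function is Lipschitz with constant $c_{lip}=\sqrt{2C_\w^2+\sigma^2}$; feeding $c_{lip}^2=2C_\w^2+\sigma^2$ into (\ref{eq:lipsch}) then reproduces (\ref{eq:lipsch1}) verbatim.

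First I would fix a feasible $\w$ (i.e.\ a point $[\w^T\ \sigma]^T\in S_{\w}$) and look at the inner objective from (\ref{eq:defxi}),
\begin{equation}
F_{\w}(\g,\h)=\sqrt{\|\w\|_2^2+\sigma^2}\,\|\g\|_2+\sum_{i=1}^n \h_i\w_i,
\end{equation}
so that $\xi(\sigma,\g,\h,\xtilde)=\min_{[\w^T\ \sigma]^T\in S_{\w}}F_{\w}(\g,\h)$. The map $\g\mapsto\|\g\|_2$ is $1$-Lipschitz, so the first term is Lipschitz in $\g$ with constant $\sqrt{\|\w\|_2^2+\sigma^2}$; the second term is linear in $\h$ with gradient $\w$, hence Lipschitz in $\h$ with constant $\|\w\|_2$. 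Combining the two blocks of coordinates by Cauchy--Schwarz, $F_{\w}$ is Lipschitz on $R^{m+n}$ with constant $\sqrt{(\|\w\|_2^2+\sigma^2)+\|\w\|_2^2}=\sqrt{2\|\w\|_2^2+\sigma^2}$. Because every feasible $\w$ obeys $\|\w\|_2\le C_\w$ by the definition of $S_{\w}$ in (\ref{eq:defS}), this constant is uniformly bounded by $\sqrt{2C_\w^2+\sigma^2}$, independently of which $\w$ is considered.

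Next I would invoke the elementary fact that a pointwise minimum of a family of $L$-Lipschitz functions sharing a common, parameter-independent index set is itself $L$-Lipschitz: for any two realizations $\u,\u'$ of $(\g,\h)$, picking a minimizer $\w^\star$ at $\u'$ (which exists since $S_{\w}$ is compact and $F_{\w}$ is continuous) gives $\xi(\u)\le F_{\w^\star}(\u)\le F_{\w^\star}(\u')+L\|\u-\u'\|_2=\xi(\u')+L\|\u-\u'\|_2$, and the symmetric inequality yields $|\xi(\u)-\xi(\u')|\le L\|\u-\u'\|_2$ with $L=\sqrt{2C_\w^2+\sigma^2}$. Applying Lemma \ref{thm:lipsch} with $c_{lip}=L$ then finishes the argument.

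The only point requiring genuine care is the \emph{uniform} control of the Lipschitz constant over the whole feasible set, i.e.\ guaranteeing that the bound $\|\w\|_2\le C_\w$ is legitimately available for every $\w$ entering the minimization. This is precisely where the (otherwise harmless) truncation $\|\w\|_2\le C_\w$ hard-wired into $S_{\w}$ in (\ref{eq:defS}) is used, and it is what forces the denominator in (\ref{eq:lipsch1}) to be the finite quantity $2C_\w^2+\sigma^2$ rather than something unbounded; everything else is routine.
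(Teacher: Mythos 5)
Your proposal is correct and follows essentially the same route as the paper: the paper also obtains the Lipschitz constant $\sqrt{2C_\w^2+\sigma^2}$ by fixing the minimizer at one realization, exploiting its suboptimality at the other, and combining the $\g$- and $\h$-contributions via Cauchy--Schwarz before invoking Lemma \ref{thm:lipsch}. Your only cosmetic difference is packaging the suboptimality step as the general fact that a minimum of uniformly $L$-Lipschitz functions over a fixed (data-independent, compact) feasible set is $L$-Lipschitz, which is exactly the inequality chain the paper writes out explicitly.
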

\begin{proof}We start by setting
\begin{equation}
f_{lip}(\g^{(1)},\h^{(1)})=\min_{[\w^T \sigma]^T\in S_{\w}(\sigma,\xtilde,C_\w)} \left ( \sqrt{\|\w\|_2^2+\sigma^2}\|\g^{(1)}\|_2+\sum_{i=1}^{n}\h_i^{(1)}\w_i\right ).\label{eq:lipproof1}
\end{equation}
Further, let $\w_{lip}^{(1)}$ be the solution of the minimization in (\ref{eq:lipproof1}). Then, clearly
\begin{equation}
f_{lip}(\g^{(1)},\h^{(1)})= \left ( \sqrt{\|\w_{lip}^{(1)}\|_2^2+\sigma^2}\|\g^{(1)}\|_2+\sum_{i=1}^{n}\h_i^{(1)}(\w_{lip}^{(1)})_i\right ),\label{eq:lipproof2}
\end{equation}
where $(\w_{lip}^{(1)})_i$ is the $i$-th index of $\w_{lip}^{(1)}$. In an analogous fashion set
\begin{equation}
f_{lip}(\g^{(2)},\h^{(2)})=\min_{[\w^T \sigma]^T\in S_{\w}(\sigma,\xtilde,C_\w)} \left ( \sqrt{\|\w\|_2^2+\sigma^2}\|\g^{(2)}\|_2+\sum_{i=1}^{n}\h_i^{(2)}\w_i\right ),\label{eq:lipproof3}
\end{equation}
and let $\w_{lip}^{(2)}$ be the solution of the minimization in (\ref{eq:lipproof2}). Then again clearly
\begin{equation}
f_{lip}(\g^{(2)},\h^{(2)})= \left ( \sqrt{\|\w_{lip}^{(2)}\|_2^2+\sigma^2}\|\g^{(2)}\|_2+\sum_{i=1}^{n}\h_i^{(2)}(\w_{lip}^{(2)})_i\right ),\label{eq:lipproof4}
\end{equation}
where of course $(\w_{lip}^{(2)})_i$ is the $i$-th index of $\w_{lip}^{(2)}$. Now assume that $f_{lip}(\g^{(1)},\h^{(1)})\neq f_{lip}(\g^{(2)},\h^{(2)})$ (if they are equal we are trivially done). Further let $f_{lip}(\g^{(1)},\h^{(1)})< f_{lip}(\g^{(2)},\h^{(2)})$ (the rest of the argument of course can trivially be flipped if $f_{lip}(\g^{(1)},\h^{(1)})> f_{lip}(\g^{(2)},\h^{(2)})$). We then have
\begin{multline}
|f_{lip}(\g^{(2)},\h^{(2)})- f_{lip}(\g^{(1)},\h^{(1)})|=f_{lip}(\g^{(2)},\h^{(2)})- f_{lip}(\g^{(1)},\h^{(1)})\\
= \left ( \sqrt{\|\w_{lip}^{(2)}\|_2^2+\sigma^2}\|\g^{(2)}\|_2+\sum_{i=1}^{n}\h_i^{(2)}(\w_{lip}^{(2)})_i\right )-
\left ( \sqrt{\|\w_{lip}^{(1)}\|_2^2+\sigma^2}\|\g^{(1)}\|_2+\sum_{i=1}^{n}\h_i^{(1)}(\w_{lip}^{(1)})_i\right )\\ \leq
\left ( \sqrt{\|\w_{lip}^{(1)}\|_2^2+\sigma^2}\|\g^{(2)}\|_2+\sum_{i=1}^{n}\h_i^{(2)}(\w_{lip}^{(1)})_i\right )-
\left ( \sqrt{\|\w_{lip}^{(1)}\|_2^2+\sigma^2}\|\g^{(1)}\|_2+\sum_{i=1}^{n}\h_i^{(1)}(\w_{lip}^{(1)})_i\right )\\
=\sqrt{\|\w_{lip}^{(1)}\|_2^2+\sigma^2}(\|\g^{(2)}\|_2-\|\g^{(1)}\|_2)+\sum_{i=1}^{n}(\h_i^{(2)}-\h_i^{(1)})(\w_{lip}^{(1)})_i\\ \leq
\sqrt{\|\w_{lip}^{(1)}\|_2^2+\sigma^2}(\|\g^{(2)}-\g^{(1)}\|_2)+\|(\h^{(2)}-\h^{(1)}\|_2\|\w_{lip}^{(1)}\|_2\\ \leq
\sqrt{2\|\w_{lip}^{(1)}\|_2^2+\sigma^2}\sqrt{\|\g^{(2)}-\g^{(1)}\|_2^2+\|\h^{(2)}-\h^{(1)}\|_2^2}\\ \leq
\sqrt{2C_\w^2+\sigma^2}\sqrt{\|\g^{(2)}-\g^{(1)}\|_2^2+\|\h^{(2)}-\h^{(1)}\|_2^2},\label{eq:lipproof5}
\end{multline}
where the first inequality follows by sub-optimality of $\w_{lip}^{(1)}$ in (\ref{eq:lipproof3}). Connecting beginning and end in
(\ref{eq:lipproof5}) and combining it with (\ref{eq:lipproof1}) one then has that
$\xi(\sigma,\g,\h,\xtilde)$ is Lipschitz with $c_{lip}=\sqrt{2C_\w^2+\sigma^2}$. (\ref{eq:lipsch1}) then easily follows by Lemma \ref{thm:lipsch}.
\end{proof}
One then has that $\|\h+\hat{\nu}\z^{(1)}-\widehat{\lambda^{(2)}}\|_2$ and $\|\h+\tilde{\nu}\z^{(1)}-\widetilde{\lambda^{(2)}}\|_2$ concentrate as well which automatically implies that $\hat{\w}$ also concentrates. More formally, one then has analogues to (\ref{eq:lipsch1})
\begin{eqnarray}
P(|\|\h+\hat{\nu}\z^{(1)}-\widehat{\lambda^{(2)}}\|_2-E\|\h+\hat{\nu}\z^{(1)}-\widehat{\lambda^{(2)}}\|_2|\geq
\epsilon_1^{(norm)}E\|\h+\hat{\nu}\z^{(1)}-\widehat{\lambda^{(2)}}\|_2) & \leq & e^{-\epsilon_2^{(norm)}n}\nonumber \\
P(|\|\h+\tilde{\nu}\z^{(1)}-\widetilde{\lambda^{(2)}}\|_2-E\|\h+\tilde{\nu}\z^{(1)}-\widetilde{\lambda^{(2)}}\|_2|\geq
\epsilon_3^{(norm)}E\|\h+\tilde{\nu}\z^{(1)}-\widetilde{\lambda^{(2)}}\|_2) & \leq & e^{-\epsilon_4^{(norm)}n}\nonumber \\
P(|\|\hat{\w}\|_2-E\|\hat{\w}\|_2|\geq
\epsilon_1^{(\w)}E\|\hat{\w}\|_2) & \leq & e^{-\epsilon_2^{(\w)}n},\label{eq:conchw}
\end{eqnarray}
where as usual $\epsilon_1^{(norm)}>0$, $\epsilon_2^{(norm)}>0$, and $\epsilon_1^{(\w)}>0$ are arbitrarily small constants and $\epsilon_3^{(norm)}$, $\epsilon_4^{(norm)}$, and $\epsilon_2^{(\w)}$ are constant dependent on $\epsilon_1^{(norm)}>0$, $\epsilon_2^{(norm)}>0$, and $\epsilon_1^{(\w)}>0$, respectively, but independent of $n$.

Now, we return to the probabilistic analysis of (\ref{eq:probint1}). Combining (\ref{eq:probint1}), (\ref{eq:defxi}), and (\ref{eq:lipsch1}) we have
\begin{eqnarray}
p_l & = & P\left (\min_{[\w^T \sigma]^T\in S_{\w}(\sigma,\xtilde,C_\w)}\left (  \sqrt{\|\w\|_2^2+\sigma^2}\|\g\|_2+\sum_{i=1}^{n}\h_i\w_i\right )+\h_{n+1}\sigma \geq \zeta_{obj}^{(l)}\right )\nonumber \\
& = & P\left (\xi(\sigma,\g,\h,\xtilde)+\h_{n+1}\sigma \geq \zeta_{obj}^{(l)}\right )\nonumber \\
& \geq & \left ( 1-\exp \left \{  -\frac{(\epsilon_{lip} E\xi(\sigma,\g,\h,\xtilde))^2}{2(2C_\w^2+\sigma^2)} \right \} \right )P\left ((1-\epsilon_{lip})E\xi(\sigma,\g,\h,\xtilde)+\h_{n+1}\sigma \geq \zeta_{obj}^{(l)}\right ).\nonumber \\\label{eq:probanalcont1}
\end{eqnarray}
Since $\h_{n+1}$ is a standard normal one easily has $P(\h_{n+1}\sigma\geq -\epsilon_1^{(\h)}\sqrt{n})\geq 1-e^{-\epsilon_2^{(\h)}n}$ where $\epsilon_1^{(\h)}>0$ is an arbitrarily small constant and $\epsilon_2^{(\h)}$ is a constant dependent on $\epsilon_1^{(\h)}$ and $\sigma$ but independent on $n$. By choosing
\begin{equation}
\zeta_{obj}^{(l)}=(1-\epsilon_{lip})E\xi(\sigma,\g,\h,\xtilde)-\epsilon_1^{(\h)}\sqrt{n},\label{eq:defzetaobjl}
\end{equation}
one then from (\ref{eq:probanalcont1}) has
\begin{eqnarray}
p_l & = & P\left (\min_{[\w^T \sigma]^T\in S_{\w}(\sigma,\xtilde,C_\w)}\left (  \sqrt{\|\w\|_2^2+\sigma^2}\|\g\|_2+\sum_{i=1}^{n}\h_i\w_i\right )+\h_{n+1}\sigma \geq (1-\epsilon_{lip})E\xi(\sigma,\g,\h,\xtilde)-\epsilon_1^{(\zeta)}\sqrt{n}\right )\nonumber \\
& \geq & \left ( 1-\exp \left \{  -\frac{(\epsilon_{lip} E\xi(\sigma,\g,\h,\xtilde))^2}{2(2C_\w^2+\sigma^2)} \right \} \right )(1-e^{-\epsilon_2^{(\h)}n}).\label{eq:probanalcont2}
\end{eqnarray}
As stated after (\ref{eq:probint}), (\ref{eq:probanalcont2}) is conceptually enough to establish a ``high probability" lower bound on $\zeta_{obj}$. The next few steps that formally do so are rather obvious but we include them for the completeness. Combining (\ref{eq:objlassol15}) and (\ref{eq:probanalcont2}) we obtain
\begin{multline}
P(\min_{[\w^T \sigma]^T\in S_{\w}(\sigma,\xtilde,C_\w)}(f_{obj}(\sigma,\w)+\sqrt{\|\w\|_2^2+\sigma^2}g)\geq \zeta_{obj}^{(l)})\\\geq \left ( 1-\exp \left \{  -\frac{(\epsilon_{lip} E\xi(\sigma,\g,\h,\xtilde))^2}{2(2C_\w^2+\sigma^2)} \right \} \right )(1-e^{-\epsilon_2^{(\h)}n}),,\label{eq:probanalcont22}
\end{multline}
where $\zeta_{obj}^{(l)}$ is as in (\ref{eq:defzetaobjl}). Now, one further has
\begin{multline}
P(\min_{[\w^T \sigma]^T\in S_{\w}(\sigma,\xtilde,C_\w)}(f_{obj}(\sigma,\w)+\sqrt{\|\w\|_2^2+\sigma^2}g)\geq \zeta_{obj}^{(l)})\\\leq
P(\min_{[\w^T \sigma]^T\in S_{\w}(\sigma,\xtilde,C_\w)}(f_{obj}(\sigma,\w))+\sqrt{C_\w^2+\sigma^2}g\geq \zeta_{obj}^{(l)}).\label{eq:probanalcont3}
\end{multline}
Since $g$ is a standard normal one easily again has $P(g\sqrt{C_\w^2+\sigma^2}\leq \epsilon_1^{(g)}\sqrt{n})\geq 1-e^{-\epsilon_1^{g)}n}$ where $\epsilon_1^{(g)}>0$ is an arbitrarily small constant and $\epsilon_2^{(g)}$ is a constant dependent on $\epsilon_1^{(g)}$, $\sigma$, and $C_\w$ but independent on $n$. Applying this to the first term on the right hand side of the above inequality one obtains
\begin{multline}
P(\min_{[\w^T \sigma]^T\in S_{\w}(\sigma,\xtilde,C_\w)}(f_{obj}(\sigma,\w))+\sqrt{C_\w^2+\sigma^2}g\geq \zeta_{obj}^{(l)})\\
\leq
P(\min_{[\w^T \sigma]^T\in S_{\w}(\sigma,\xtilde,C_\w)}(f_{obj}(\sigma,\w))\geq \zeta_{obj}^{(l)}-\epsilon_1^{(g)}\sqrt{n})+e^{-\epsilon_1^{g)}n}.\label{eq:probanalcont4}
\end{multline}
Now let $\zeta_{obj}^{lower}=\zeta_{obj}^l-\epsilon_1^{(g)}\sqrt{n}$. From (\ref{eq:defzetaobjl}) then obviously
\begin{equation}
\zeta_{obj}^{(lower)}=(1-\epsilon_{lip})E\xi(\sigma,\g,\h,\xtilde)-\epsilon_1^{(\h)}\sqrt{n}-\epsilon_1^{(g)}\sqrt{n}.\label{eq:defzetaobjlower}
\end{equation}
Also let $\epsilon_{lower}$ be a constant such that
\begin{equation}
1-e^{-\epsilon_{lower}n}\leq\left ( 1-\exp \left \{  -\frac{(\epsilon_{lip} E\xi(\sigma,\g,\h,\xtilde))^2}{2(2C_\w^2+\sigma^2)} \right \} \right )(1-e^{-\epsilon_2^{(\h)}n})-e^{-\epsilon_1^{g)}n}.\label{eq:defepslower}
\end{equation}
Then a combination of (\ref{eq:assumplasso}), (\ref{eq:probanalcont22}), (\ref{eq:probanalcont3}), (\ref{eq:probanalcont4}), (\ref{eq:defzetaobjlower}), and (\ref{eq:defepslower}) gives
\begin{multline}
P(\zeta_{obj}\geq \zeta_{obj}^{(lower)})\geq P(\zeta_{obj}^{(help)}\geq \zeta_{obj}^{(lower)})(1-e^{-\epsilon_{C_{\w}}n})\\=P(\min_{[\w^T \sigma]^T\in S_{\w}(\sigma,\xtilde,C_\w)}(f_{obj}(\sigma,\w))\geq \zeta_{obj}^{(lower)})\geq (1-e^{-\epsilon_{lower}n})(1-e^{-\epsilon_{C_{\w}}n}).\label{eq:lowerboundobj}
\end{multline}
We summarize the results from this subsection in the following lemma.
\begin{lemma}
Let $\v$ be an $n\times 1$ vector of i.i.d. zero-mean variance $\sigma^2$ Gaussian random variables and let $A$ be an $m\times n$ matrix of i.i.d. standard normal random variables. Consider an $\xtilde$ defined in (\ref{eq:xtildedef}) and a $\y$ defined in (\ref{eq:systemnoise}) for $\x=\xtilde$. Let then $\zeta_{obj}$ be as defined in (\ref{eq:objlassol1}) and let $\w$ be the solution of (\ref{eq:objlassol13}).
Assume $P(\|\w\|_2\leq C_\w)\geq 1-e^{-\epsilon_{C_\w}n}$ for an arbitrarily large constant $C_\w$ and a constant $\epsilon_{C_\w}>0$ dependent on $C_\w$ but independent of $n$. Then there is a constant $\epsilon_{lower}>0$
\begin{equation}
P(\zeta_{obj}\geq \zeta_{obj}^{(lower)})\geq (1-e^{-\epsilon_{lower}n})(1-e^{-\epsilon_{C_{\w}}n}),\label{eq:lowerboundobjthm1}
\end{equation}
where
\begin{equation}
\zeta_{obj}^{(lower)}=(1-\epsilon_{lip})E\xi(\sigma,\g,\h,\xtilde)-\epsilon_1^{(\h)}\sqrt{n}-\epsilon_1^{(g)}\sqrt{n},\label{eq:lowerboundobjthm2}
\end{equation}
$\xi(\sigma,\g,\h,\xtilde)$ is as defined in (\ref{eq:defxi}), and $\epsilon_{lip},\epsilon_1^{(\h)},\epsilon_1^{(g)}$ are all positive arbitrarily small constants.
\label{thm:lowerbound}
\end{lemma}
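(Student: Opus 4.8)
The plan is to observe that this lemma simply packages the chain of inequalities assembled throughout the subsection, so the proof reduces to stitching them together in the right order while bookkeeping the exponentially small probability losses. First I would reduce $\zeta_{obj}$ to a Gaussian min-max over a bounded set. Starting from (\ref{eq:objlassol1}) and using $\y = A\xtilde + \v$, the substitution $\x = \xtilde + \w$ turns the problem into (\ref{eq:objlassol13}), whose minimizing $\w$ satisfies $\|\w\|_2 \le C_\w$ on the event guaranteed by the hypothesis (\ref{eq:assumplasso}). On that event the unconstrained minimum $\zeta_{obj}$ coincides with the minimum over the compact set $S_\w$ of (\ref{eq:defS}), i.e. with $\zeta_{obj}^{(help)}$ of (\ref{eq:objlassol14}); this is what lets me replace $\zeta_{obj}$ by $\zeta_{obj}^{(help)}$ at the cost of the factor $(1 - e^{-\epsilon_{C_\w}n})$.

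Next I would apply Gordon's comparison inequality (Lemma \ref{thm:unsignedlemma}) to the min-max representation (\ref{eq:objlassol14}), taking $\phi = [\w^T\ \sigma]^T$, $\Phi = S_\w$, and the augmented matrix $A_\v$. The crucial feature is the \emph{direction} of the inequality: a lower-tail bound on the simpler process $\sqrt{\|\w\|_2^2+\sigma^2}\sum_i \g_i \a_i + \sum_i \h_i\w_i + \h_{n+1}\sigma$ transfers to a lower-tail bound on the original process perturbed by the auxiliary scalar $\sqrt{\|\w\|_2^2+\sigma^2}\,g$, which is exactly (\ref{eq:objlassol15}). After the trivial maximization over $\a$ this reduces the whole quantity to $\xi(\sigma,\g,\h,\xtilde)$ of (\ref{eq:defxi}) plus the scalar $\h_{n+1}\sigma$, as in (\ref{eq:probint1}).

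Then I would invoke the concentration of $\xi$ from Lemma \ref{thm:lipschunsigned} together with the elementary Gaussian tail estimates for the two auxiliary scalars $\h_{n+1}\sigma$ and $g\sqrt{C_\w^2+\sigma^2}$. Choosing $\zeta_{obj}^{(l)}$ as in (\ref{eq:defzetaobjl}) produces the high-probability bound (\ref{eq:probanalcont2}) on $p_l$; bounding the coefficient $\sqrt{\|\w\|_2^2+\sigma^2}$ of $g$ by the constant $\sqrt{C_\w^2+\sigma^2}$ uniformly over $S_\w$ (steps (\ref{eq:probanalcont3})--(\ref{eq:probanalcont4})) lets me strip the auxiliary $g$ and pass from the Gordon-perturbed objective back to $\zeta_{obj}^{(help)}$ at the level $\zeta_{obj}^{(lower)} = \zeta_{obj}^{(l)} - \epsilon_1^{(g)}\sqrt{n}$ of (\ref{eq:defzetaobjlower}). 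Collecting these estimates with the hypothesis (\ref{eq:assumplasso}) gives (\ref{eq:lowerboundobj}), which is precisely (\ref{eq:lowerboundobjthm1})--(\ref{eq:lowerboundobjthm2}) once $\epsilon_{lower}$ is fixed through (\ref{eq:defepslower}).

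The main obstacle is conceptual rather than computational: orienting Gordon's inequality so that it yields a lower bound, and peeling off the two auxiliary scalar Gaussians without losing more than $o(\sqrt n)$ in the threshold or more than exponentially small mass in probability. Everything else---the absolute-value/Lagrangian reduction of $\xi$ leading to (\ref{eq:defhatxi}), the Lipschitz constant $\sqrt{2C_\w^2+\sigma^2}$, and the concentration estimate (\ref{eq:lipsch1})---is already in hand, so the remaining effort is careful accounting of the error constants $\epsilon_{lip}$, $\epsilon_1^{(\h)}$, $\epsilon_1^{(g)}$ and the associated exponents.
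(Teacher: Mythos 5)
Your proposal reconstructs the paper's argument essentially verbatim: the reduction of $\zeta_{obj}$ to $\zeta_{obj}^{(help)}$ over $S_{\w}(\sigma,\xtilde,C_\w)$ via the boundedness assumption (\ref{eq:assumplasso}), the application of Lemma \ref{thm:unsignedlemma} to the min-max form (\ref{eq:objlassol14}), the reduction to $\xi(\sigma,\g,\h,\xtilde)$ plus $\h_{n+1}\sigma$, the concentration step from Lemma \ref{thm:lipschunsigned} with the choices (\ref{eq:defzetaobjl}) and (\ref{eq:defzetaobjlower}), and the stripping of the auxiliary scalar $g$ via the uniform bound $\sqrt{\|\w\|_2^2+\sigma^2}\leq\sqrt{C_\w^2+\sigma^2}$ leading to (\ref{eq:lowerboundobj}). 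This is exactly the ``previous discussion'' the paper's one-line proof invokes, so the proposal is correct and takes the same approach.
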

\begin{proof}
Follows from the previous discussion.
\end{proof}

\subsection{Upper-bounding $\zeta_{obj}$} \label{sec:unsignedubzetaobj}

In this section we present a general framework for finding a ``high-probability" upper bound on $\zeta_{obj}$. To that end, let $r$ and $C_{\w_{up}}$ be positive scalars (in this subsection we present a general framework and take these scalars to be arbitrary; however to make the bound as tight sa possible in the following subsection we will make them take particular values). Now, if we can show that there is a $\w\in R^n$ such that
$\|\xtilde+\w\|_1\leq \|\xtilde\|_1$ and $\|\v-A\w\|_2\leq r$ with overwhelming probability then $r$ can act as an upper bound on $\zeta_{obj}$. We then start by looking at the following optimization problem
\begin{eqnarray}
\min_{\w} & & \|\xtilde+\w\|_1-\|\xtilde\|_1 \nonumber \\
& & \|A_\v\begin{bmatrix}\w \\ \sigma\end{bmatrix}\|_2\leq r\nonumber \\
& & \|\w\|_2^2\leq C_{\w_{up}}^2,\label{eq:upperobjlassol11}
\end{eqnarray}
where $A_\v$ is as defined right after (\ref{eq:objlassol13}). If we can show that with overwhelming probability the objective value of the above optimization problem is negative then $r$ will be a valid ``high probability" upper-bound on $\zeta_{obj}$. Moreover, it will be achieved by a $\w$ for which it will hold that  $\|\w\|_2\leq C_{\w_{up}}$.

We now proceed in a fashion similar to the one from Subsection \ref{sec:unsigneddet}. To remove the absolute values we introduce auxiliary variables $\t_i,1\leq i\leq n$, and transform the above problem to
\begin{eqnarray}
\min_{\w,\t} & & \sum_{i=1}^n \t_i - \|\xtilde\|_1\nonumber \\
\mbox{subject to}
& & \xtilde_i+\w_i-\t_i \leq 0, n-k+1\leq i\leq n\nonumber \\
& & -\xtilde_i-\w_i-\t_i\leq 0, n-k+1\leq i\leq n\nonumber \\
& & \w_i-\t_i\leq 0, 1\leq i\leq n-k\nonumber \\
& & -\w_i-\t_i\leq 0, 1\leq i\leq n-k\nonumber \\
& & \|A_\v\begin{bmatrix}\w \\ \sigma\end{bmatrix}\|_2\leq r\nonumber \\
& & \|\w\|_2^2\leq C_{\w_{up}}^2.\label{eq:upperdefxi4}
\end{eqnarray}
We also slightly modify the first of the constraints from (\ref{eq:upperobjlassol11}) in the following way
\begin{eqnarray}
\min_{\w,\t,\b} & & \sum_{i=1}^n \t_i - \|\xtilde\|_1\nonumber \\
\mbox{subject to}
& & \xtilde_i+\w_i-\t_i \leq 0, n-k+1\leq i\leq n\nonumber \\
& & -\xtilde_i-\w_i-\t_i\leq 0, n-k+1\leq i\leq n\nonumber \\
& & \w_i-\t_i\leq 0, 1\leq i\leq n-k\nonumber \\
& & -\w_i-\t_i\leq 0, 1\leq i\leq n-k\nonumber \\
& & \|\b\|_2^2\leq r^2\nonumber \\
& &  \begin{bmatrix} -A \v\end{bmatrix}\begin{bmatrix}\w \\ \sigma\end{bmatrix}=b\nonumber \\
& & \|\w\|_2^2\leq C_{\w_{up}}^2.\label{eq:upperdefxi4}
\end{eqnarray}
The Lagrange dual of the above problem then becomes
\begin{multline}
{\cal L}(\lambda^{(1)},\lambda^{(2)},\nu^{(1)},\gamma_1,\gamma_2,\w,\t,\b)=\sum_{i=1}^n \t_i-\|\xtilde\|_1+\sum_{i=n-k+1}^{n}\lambda_i^{(1)}(\xtilde_i+\w_i-\t_i)
+\sum_{i=n-k+1}^{n}\lambda_i^{(2)}(-\xtilde_i-\w_i-\t_i)\\
+\sum_{i=1}^{n-k}\lambda_i^{(1)}(\w_i-\t_i)+\sum_{i=1}^{n-k}\lambda_i^{(2)}(-\w_i-\t_i)-\nu^{(1)} A\w +\nu^{(1)}\v\sigma -\nu^{(1)}\b+\gamma_1(\sum_{i=1}^{n}\b_1^2-r^2)+\gamma_2(\|\w\|_2^2- C_{\w_{up}}^2),\label{eq:upperLagran1}
\end{multline}
where $\nu^{(1)}$ is $1\times m$  row vector of Lagrange variables and $\lambda^{(1)},\lambda^{(2)}$ are as in previous sections.
After rearranging terms we further have
\begin{multline}
{\cal L}(\lambda^{(1)},\lambda^{(2)},\nu^{(1)},\gamma_1,\gamma_2,\w,\t,\b)=-\|\xtilde\|_1+\sum_{i=1}^n \t_i(1-\lambda_i^{(1)}-\lambda_i^{(2)})  +\sum_{i=n-k+1}^{n}\lambda_i^{(1)}(\xtilde_i+\w_i)\\
+\sum_{i=n-k+1}^{n}\lambda_i^{(2)}(-\xtilde_i-\w_i)
+\sum_{i=1}^{n-k}\lambda_i^{(1)}\w_i-\sum_{i=1}^{n-k}\lambda_i^{(2)}\w_i-\nu^{(1)} A\w +\nu^{(1)}\v\sigma -\nu^{(1)}\b+\gamma_1(\sum_{i=1}^{n}\b_1^2-r^2)+\gamma_2(\|\w\|_2^2- C_{\w_{up}}^2).\label{eq:upperLagran2}
\end{multline}
After a few further arrangements we finally have
\begin{multline}
{\cal L}(\lambda^{(1)},\lambda^{(2)},\nu^{(1)},\gamma_1,\gamma_2,\w,\t,\b)=\sum_{i=1}^n \t_i(1-\lambda_i^{(1)}-\lambda_i^{(2)}) +\sum_{i=n-k+1}^{n}(\lambda_i^{(1)}-\lambda_i^{(2)}-1)\xtilde_i\\
+((\lambda^{(1)}-\lambda^{(2)})^T-\nu^{(1)} A)\w +\nu^{(1)}\v\sigma -\nu^{(1)}\b+\gamma_1(\sum_{i=1}^{n}\b_1^2-r^2)+\gamma_2(\|\w\|_2^2- C_{\w_{up}}^2).\label{eq:upperLagran3}
\end{multline}
Setting $(\nu-\lambda_i^{(1)}-\lambda_i^{(2)})=0,1\leq i\leq n$, (to insure that the dual is bounded) we have
\begin{multline}
{\cal L}(\lambda^{(2)},\nu^{(1)},\gamma_1,\gamma_2,\w,\b)=-2\sum_{i=n-k+1}^{n}\lambda_i^{(2)}\xtilde_i\\
+((\z^{(1)}-2\lambda^{(2)})^T-\nu^{(1)} A)\w +\nu^{(1)}\v\sigma -\nu^{(1)}\b+\gamma_1(\sum_{i=1}^{n}\b_1^2-r^2)+\gamma_2(\|\w\|_2^2- C_{\w_{up}}^2).\label{eq:upperLagran4}
\end{multline}
Finally we can write a dual problem to (\ref{eq:upperdefxi4})
\begin{eqnarray}
\max_{\lambda^{(2)},\nu^{(1)},\gamma_1,\gamma_2}\min_{\w,\b} & & {\cal L}(\lambda^{(2)},\nu^{(1)},\gamma_1,\gamma_2,\w,\b)\nonumber \\
\mbox{subject to} & & 0\leq \lambda_i^{(2)}\leq 1, 1\leq i\leq n\nonumber \\
& & \gamma_1\geq 0,\nonumber \\
& & \gamma_2\geq 0,\label{eq:upperLagran5}
\end{eqnarray}
where we of course use the fact that the strict duality obviously holds. Now, we minimize over $\w$ by setting the derivatives to zero
\begin{equation}
\frac{d{\cal L}(\lambda^{(2)},\nu^{(1)},\gamma_1,\gamma_2,\w,\b)}{d\w}=((\z^{(1)}-2\lambda^{(2)})^T-\nu^{(1)} A)^T+2\gamma_2\w.\label{eq:upperLagran6}
\end{equation}
From (\ref{eq:upperLagran6}) we easily have
\begin{equation}
\w=\frac{((\z^{(1)}-2\lambda^{(2)})^T-\nu^{(1)} A)^T}{2\gamma_2}.\label{eq:upperLagran7}
\end{equation}
Plugging (\ref{eq:upperLagran6}) back in (\ref{eq:upperLagran4}) we further have
\begin{multline}
{\cal L}(\lambda^{(2)},\nu^{(1)},\gamma_1,\gamma_2,\b)=-2\sum_{i=n-k+1}^{n}\lambda_i^{(2)}\xtilde_i\\
-\frac{\|(\z^{(1)}-2\lambda^{(2)})^T-\nu^{(1)} A\|_2}{4\gamma_2} +\nu^{(1)}\v\sigma -\nu^{(1)}\b+\gamma_1(\sum_{i=1}^{n}\b_1^2-r^2)-\gamma_2 C_{\w_{up}}^2.\label{eq:upperLagran8}
\end{multline}
 Now, we minimize ${\cal L}(\lambda^{(1)},\lambda^{(2)},\nu^{(1)},\gamma_1,\gamma_2,\b)$ over $\b$ by setting the derivatives to zero
\begin{equation}
\frac{d{\cal L}(\lambda^{(1)},\lambda^{(2)},\nu^{(1)},\gamma_1,\gamma_2,\b)}{d\b}=-\nu^{(1)}+2\gamma_1\w.\label{eq:upperLagran9}
\end{equation}
From (\ref{eq:upperLagran9}) we easily have
\begin{equation}
\b=\frac{\nu^{(1)}}{2\gamma_1}.\label{eq:upperLagran10}
\end{equation}
Plugging (\ref{eq:upperLagran10}) back in (\ref{eq:upperLagran8}) we have
\begin{equation}
{\cal L}(\lambda^{(2)},\nu^{(1)},\gamma_1,\gamma_2)=-2\sum_{i=n-k+1}^{n}\lambda_i^{(2)}\xtilde_i+
-\frac{\|(\z^{(1)}-2\lambda^{(2)})^T-\nu^{(1)} A\|_2}{2\gamma_2} +\nu^{(1)}\v\sigma -\frac{\|\nu^{(1)}\|_2}{4\gamma_1}-\gamma_1 r^2-\gamma_2 C_{\w_{up}}^2,\label{eq:upperLagran11}
\end{equation}
and finally an equivalent to
(\ref{eq:upperdefxi4})
\begin{eqnarray}
\max_{\lambda^{(2)},\nu^{(1)},\gamma_1,\gamma_2} & & {\cal L}(\lambda^{(1)},\lambda^{(2)},\nu^{(1)},\gamma_1,\gamma_2)\nonumber \\
\mbox{subject to} & & 0\leq \lambda_i^{(2)}\leq 1, 1\leq i\leq n\nonumber \\
& & \gamma_1\geq 0,\nonumber \\
& & \gamma_2\geq 0.\label{eq:upperLagran12}
\end{eqnarray}
After doing the trivial maximization over $\gamma_1$ and $\gamma_2$ one obtains
\begin{eqnarray}
\max_{\lambda^{(2)},\nu^{(1)}} & & -2\sum_{i=n-k+1}^{n}\lambda_i^{(2)}\xtilde_i
-C_{\w_{up}}\|(\z^{(1)}-2\lambda^{(2)})^T-\nu^{(1)} A\|_2 +\nu^{(1)}\v\sigma -\|\nu^{(1)}\|_2r\nonumber \\
\mbox{subject to} & & 0\leq \lambda_i^{(2)}\leq 1, 1\leq i\leq n.\label{eq:upperLagran13}
\end{eqnarray}
We rewrite (\ref{eq:upperLagran13}) in a slightly more convenient form
\begin{eqnarray}
-\min_{\lambda^{(2)},\nu^{(1)}} \max_{\|\a\|_2=C_{\w_{up}}}& &
((\z^{(1)}-2\lambda^{(2)})^T-\nu^{(1)} A)\a -\nu^{(1)}\v\sigma +\|\nu^{(1)}\|_2r+2\sum_{i=n-k+1}^{n}\lambda_i^{(2)}\xtilde_i\nonumber \\
\mbox{subject to} & & 0\leq \lambda_i^{(2)}\leq 1, 1\leq i\leq n.\label{eq:upperLagran14}
\end{eqnarray}
Now let us define $f_{obj}^{(up)}$ as
\begin{eqnarray}
-f_{obj}^{(up)}=-\min_{\lambda^{(2)},\nu^{(1)}} \max_{\|\a\|_2=C_{\w_{up}}}& &
((\z^{(1)}-2\lambda^{(2)})^T-\nu^{(1)} A)\a -\nu^{(1)}\v\sigma +\|\nu^{(1)}\|_2r+2\sum_{i=n-k+1}^{n}\lambda_i^{(2)}\xtilde_i\nonumber \\
\mbox{subject to} & & 0\leq \lambda_i^{(2)}\leq 1, 1\leq i\leq n.\label{eq:upperLagran14}
\end{eqnarray}
Any $r$ such that $\lim_{n\rightarrow}P(f_{obj}^{(up)}\geq 0)=1$ is then a valid ``high-probability" upper bound.

We now introduce a refinement of a lemma from \cite{StojnicUpper10} which itself is a slightly modified Lemma \ref{thm:unsignedlemma} (Lemma \ref{thm:unsignedlemma} is of course the backbone of the escape through a mesh theorem utilized in \cite{StojnicCSetam09}).
\begin{lemma}
Let $A$ be an $m\times n$ matrix with i.i.d. standard normal components. Let $\g$ and $\h$ be $m\times 1$ and $(n+1)\times 1$ vectors, respectively, with i.i.d. standard normal components. Also, let $g$ be a standard normal random variable and let $\Lambda$ be a set such that $\Lambda=(\lambda^{(2)}|0\leq \lambda_i^{(2)}\leq 1, 1\leq i\leq n)$. Then
\begin{multline}
P(\min_{\lambda^{(2)}\in \Lambda,\nu^{(1)}\in R^n\setminus 0}\max_{\|\a\|_2=C_{\w_{up}}}(-\nu^{(1)}\begin{bmatrix} A & \v\end{bmatrix}\begin{bmatrix}\a \\\sigma\end{bmatrix} +\|\nu^{(1)}\|_2 g-\psi_{\a,\lambda^{(2)},\nu^{(1)}})\geq 0)\\\geq P(\min_{\lambda^{(2)}\in \Lambda,\nu^{(1)}\in R^n\setminus 0}\max_{\|\a\|_2=1}(\|\nu^{(1)}\|_2(\sum_{i=1}^{n}\h_i\a_i+\h_{n+1}\sigma)+\sqrt{C_{\w_{up}}^2+\sigma^2}\sum_{i=1}^{m}\g_i\nu_i^{(1)}-\psi_{\a,\lambda^{(2)},\nu^{(1)}})\geq 0).\label{eq:upperproblemma}
\end{multline}\label{eq:upperunsignedlemma1}
\end{lemma}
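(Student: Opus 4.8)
The plan is to read the statement as a single application of Gordon's Gaussian comparison inequality — the same device underlying Lemma \ref{thm:unsignedlemma} and its \cite{StojnicUpper10} variant — to the one bilinear Gaussian form $-\nu^{(1)}\begin{bmatrix} A & \v\end{bmatrix}\begin{bmatrix}\a\\\sigma\end{bmatrix}$ that remains after the deterministic pieces of (\ref{eq:upperLagran14}) (namely $(\z^{(1)}-2\lambda^{(2)})^T\a$, $\|\nu^{(1)}\|_2 r$, and $2\sum_{i=n-k+1}^{n}\lambda_i^{(2)}\xtilde_i$) have been folded into the arbitrary threshold $\psi_{\a,\lambda^{(2)},\nu^{(1)}}$. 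First I would fix $B:=\begin{bmatrix} A & \v\end{bmatrix}$ as the $m\times(n+1)$ standard normal matrix, treat $\nu^{(1)}\in R^m$ as the outer ($\min$) variable and $\a$, through the augmented vector $\tilde{\a}:=\begin{bmatrix}\a\\\sigma\end{bmatrix}$, as the inner ($\max$) variable, and regard $\lambda^{(2)}\in\Lambda$ as a passive label on which the Gaussian part does not depend.

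Concretely I would introduce the two zero-mean Gaussian processes indexed by $w=(\lambda^{(2)},\nu^{(1)},\a)$,
\begin{equation}
X_w=-\nu^{(1)}\begin{bmatrix} A & \v\end{bmatrix}\begin{bmatrix}\a\\\sigma\end{bmatrix}+\|\nu^{(1)}\|_2\, g,\qquad Y_w=\|\nu^{(1)}\|_2\Big(\sum_{i=1}^{n}\h_i\a_i+\h_{n+1}\sigma\Big)+\sqrt{C_{\w_{up}}^2+\sigma^2}\sum_{i=1}^{m}\g_i\nu_i^{(1)},\nonumber
\end{equation}
and note that the two probabilities in the statement are exactly $P(\min_{(\lambda^{(2)},\nu^{(1)})}\max_{\a}(X_w-\psi_w)\ge 0)$ and the same for $Y_w$. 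The coefficient $\sqrt{C_{\w_{up}}^2+\sigma^2}$ is precisely the Euclidean norm $\|\tilde{\a}\|_2$ on the sphere $\|\a\|_2=C_{\w_{up}}$, while $\|\nu^{(1)}\|_2$ is the norm of the outer vector; the extra single Gaussian $g$ on the coupled side plays the role of the $\|\phi\|_2 g$ term in Lemma \ref{thm:unsignedlemma} and must be carried with the weight that equalizes the two variances. The minus sign in front of the bilinear form is immaterial since $\g,\h,g$ are symmetric, so $-\g\stackrel{d}{=}\g$ and a literal Gordon decoupling, which produces minus signs, may be replaced by the plus signs displayed above.

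The substance is the verification of Gordon's three covariance hypotheses, computed through $E[B_{ij}B_{kl}]=\delta_{ik}\delta_{jl}$. The variances coincide (this is what pins the weight on $g$), the same-outer-index correlations coincide, and for distinct outer indices the difference $E[X_wX_{w'}]-E[Y_wY_{w'}]$ factors as
\begin{equation}
\big(\langle\tilde{\a},\tilde{\a}'\rangle-\|\tilde{\a}\|_2\|\tilde{\a}'\|_2\big)\big(\langle\nu^{(1)},\nu^{(1)'}\rangle-\|\nu^{(1)}\|_2\|\nu^{(1)'}\|_2\big),\nonumber
\end{equation}
a product of two nonpositive Cauchy–Schwarz deficits and hence $\ge 0$ (here one uses that on the sphere $\|\a\|_2=C_{\w_{up}}$ both augmented vectors share the common norm $\sqrt{C_{\w_{up}}^2+\sigma^2}$). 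This is exactly the sign pattern — off-diagonal coupled correlation dominating the decoupled one, with equality on the diagonal — that forces $P(\min\max X_w\ge 0)\ge P(\min\max Y_w\ge 0)$ in Gordon's theorem, in the same orientation as Lemma \ref{thm:unsignedlemma}.

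The main obstacle, and the reason this is a \emph{refinement} rather than a direct citation, is the role reversal combined with the augmentation: here the $\min$ runs over the $m$-dimensional side $\nu^{(1)}$ and the $\max$ over the $(n+1)$-dimensional side $\a$, opposite to Lemma \ref{thm:unsignedlemma}, and the max vector is pinned to a fixed-radius sphere carrying the extra coordinate $\sigma$ instead of ranging over a free unit sphere. I expect the delicate part to be keeping the two norms $\|\nu^{(1)}\|_2$ and $\sqrt{C_{\w_{up}}^2+\sigma^2}=\|\tilde{\a}\|_2$ attached to the correct factors throughout — equivalently, ensuring the regularizing $g$-term carries the norm required for variance matching after the augmentation — since it is precisely this bookkeeping that generates the asymmetric coefficients on the two sides. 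Once the covariance relations above are established, the conclusion follows by invoking the \cite{StojnicUpper10} comparison (or, equivalently, by re-running Gordon's Slepian-type interpolation argument directly) with $\lambda^{(2)}$ held as a passive label and with the arbitrary thresholds $\psi_{\a,\lambda^{(2)},\nu^{(1)}}$.
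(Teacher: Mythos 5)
Your overall route is the one the paper itself intends: the paper supplies no proof of this lemma at all, presenting it only as a refinement of a lemma from \cite{StojnicUpper10}, which in turn is a modification of Gordon's comparison (Lemma \ref{thm:unsignedlemma}), so a direct verification of Gordon's covariance hypotheses for the bilinear form $\nu^{(1)}\begin{bmatrix}A & \v\end{bmatrix}\tilde{\a}$, with $\lambda^{(2)}$ carried as a passive label inside the threshold $\psi_{\a,\lambda^{(2)},\nu^{(1)}}$, is exactly what the citation compresses. Your disposal of the minus sign by the symmetry $\begin{bmatrix}A & \v\end{bmatrix}\overset{d}{=}-\begin{bmatrix}A & \v\end{bmatrix}$, your handling of the min/max role reversal, and your silent repair of two evident typos in the statement (the right-hand-side maximization should run over $\|\a\|_2=C_{\w_{up}}$, as the paper's own subsequent display for $p_u$ confirms, and $\nu^{(1)}$ lives in $R^m$, not $R^n$) are all correct, and the cross-covariance factorization into a product of two Cauchy--Schwarz deficits is the right computation, in the same orientation as Lemma \ref{thm:unsignedlemma}.

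There is, however, one step that fails as you wrote it, and it is not mere bookkeeping: the claim that ``the variances coincide'' is false for the processes you actually display. With the $g$-weight the lemma prints, $X_w=-\nu^{(1)}\begin{bmatrix}A & \v\end{bmatrix}\tilde{\a}+\|\nu^{(1)}\|_2\,g$, one gets on the sphere $\|\a\|_2=C_{\w_{up}}$
\[
EX_w^2=\|\nu^{(1)}\|_2^2\left(C_{\w_{up}}^2+\sigma^2+1\right),\qquad EY_w^2=2\|\nu^{(1)}\|_2^2\left(C_{\w_{up}}^2+\sigma^2\right),
\]
which agree only if $C_{\w_{up}}^2+\sigma^2=1$; likewise, for a common outer $\nu^{(1)}$ and distinct inner points $\a,\a'$ the difference of correlations is $\|\nu^{(1)}\|_2^2(1-C_{\w_{up}}^2-\sigma^2)\neq 0$, so Gordon's hypotheses fail for the verbatim coefficient. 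The factored identity $\bigl(\langle\tilde{\a},\tilde{\a}'\rangle-\|\tilde{\a}\|_2\|\tilde{\a}'\|_2\bigr)\bigl(\langle\nu^{(1)},\nu^{(1)\prime}\rangle-\|\nu^{(1)}\|_2\|\nu^{(1)\prime}\|_2\bigr)$ that you display is the one obtained for the matched weight $\|\nu^{(1)}\|_2\|\tilde{\a}\|_2\,g=\sqrt{C_{\w_{up}}^2+\sigma^2}\,\|\nu^{(1)}\|_2\,g$, not for the printed $\|\nu^{(1)}\|_2\,g$; no rescaling of $\a$ removes the discrepancy, and no monotonicity in the coefficient of the single scalar $g$ rescues the printed version (enlarging it helps on $\{g>0\}$ and hurts on $\{g<0\}$). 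So what your argument proves is the variance-matched variant of the lemma, and you should say so explicitly rather than assert equality of variances. This is harmless for the paper's purposes — downstream the $g$-term is absorbed by the slack $\epsilon_{3}^{(g)}\sqrt{n}\|\nu^{(1)}\|_2$ built into $\psi_{\a,\lambda^{(2)},\nu^{(1)}}$ in (\ref{eq:upperdefpsi}), and a constant multiple of $g$ merely rescales $\epsilon_{3}^{(g)}$ — but as a proof of the statement exactly as printed, the variance-matching step is a genuine gap.
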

Let
\begin{equation}
\psi_{\a,\lambda^{(2)},\nu^{(1)}}=\epsilon_{3}^{(g)}\sqrt{n}\|\nu^{(1)}\|_2-\a^T(\z^{(1)}-2\lambda^{(2)})
-\|\nu^{(1)}\|_2r-2\sum_{i=n-k+1}^{n}\lambda_i^{(2)}\xtilde_i,\label{eq:upperdefpsi}
\end{equation}
with $\epsilon_{3}^{(g)}>0$ being an arbitrarily small constant independent of $n$. The left-hand side of the inequality in (\ref{eq:upperproblemma}) is then the following probability of interest
\begin{multline}
p_u=P(\min_{\lambda^{(2)}\in \Lambda,\nu^{(1)}\in R^n\setminus 0}\max_{\|\a\|_2=C_{\w_{up}}} ( \|\nu^{(1)}\|_2(\sum_{i=1}^{n}\h_i\a_i+\h_{n+1}\sigma)+\sqrt{C_{\w_{up}}^2+\sigma^2}\sum_{i=1}^{m}\g_i\nu_i^{(1)}\nonumber \\
-\epsilon_{3}^{(g)}\sqrt{n}\|\nu^{(1)}\|_2+\a^T(\z^{(1)}-2\lambda^{(2)})+\|\nu^{(1)}\|_2r+2\sum_{i=n-k+1}^{n}\lambda_i^{(2)}\xtilde_i ) \geq 0).
\end{multline}
After solving the inner maximization over $\a$ and pulling out $\|\nu\|_2$ one has
\begin{multline}
p_u=P(\min_{\lambda^{(2)}\in \Lambda,\nu\in R^n\setminus 0}(C_{\w_{up}}\|\h+\frac{1}{\|\nu^{(1)}\|_2}(\z^{(1)}-2\lambda^{(2)})\|_2+\h_{n+1}\sigma\\+\sqrt{C_{\w_{up}}^2+\sigma^2}\sum_{i=1}^{m}\g_i\frac{\nu_i^{(1)}}{\|\nu^{(1)}\|_2}-\epsilon_{3}^{(g)}\sqrt{n}+
r+2\sum_{i=n-k+1}^{n}\frac{\lambda_i^{(2)}}{\|\nu^{(1)}\|_2}\xtilde_i)\geq 0)\nonumber.
\end{multline}
After minimization of the second term over a unit norm vector we further have
\begin{multline}
p_u=P(\min_{\lambda^{(2)}\in \Lambda,\nu\in R^n\setminus 0}(C_{\w_{up}}\|\h+\frac{1}{\|\nu^{(1)}\|_2}(\z^{(1)}-2\lambda^{(2)})\|_2+\h_{n+1}\sigma\\
-\sqrt{C_{\w_{up}}^2+\sigma^2}\|\g\|-2-\epsilon_{3}^{(g)}\sqrt{n}+
r+2\sum_{i=n-k+1}^{n}\frac{\lambda_i^{(2)}}{\|\nu^{(1)}\|_2}\xtilde_i)\geq 0).\label{eq:upperprobanal1}
\end{multline}
Now we change variables so that $\nu=\frac{1}{\|\nu^{(1)}\|_2}$ and $\lambda^{(2)}=\frac{2\lambda^{(2)}}{\|\nu^{(1)}\|_2}$ and redefine $\Lambda$ by setting
\begin{equation}
\Lambda^{(2)}=\{\lambda^{(2)}\in R^n | 0\leq \lambda_i^{(2)}\leq 2\nu,1\leq i\leq n\}.\label{eq:upperlambda2}
\end{equation}
We also recall that $\z^{(1)}$ remains as defined right after (\ref{eq:Lagran7}). Plugging all of this back in
(\ref{eq:upperprobanal1}) gives us
\begin{equation}
p_u=P(r+\h_{n+1}\sigma-\epsilon_{3}^{(g)}\sqrt{n}-\max_{\lambda^{(2)}\in \Lambda^{(2)},\nu\geq 0}(\sqrt{C_{\w_{up}}^2+\sigma^2}\|\g\|_2-C_{\w_{up}}\|\h+\nu\z^{(1)}-\lambda^{(2)})\|_2
-\sum_{i=n-k+1}^{n}\lambda_i^{(2)}\xtilde_i)\geq 0).\label{eq:upperprobanal2}
\end{equation}
Now, let
\begin{equation}
\xi_{up}(\sigma,\g,\h,\xtilde,C_{\w_{up}})=\max_{\lambda^{(2)}\in \Lambda^{(2)},\nu\geq 0}(\sqrt{C_{\w_{up}}^2+\sigma^2}\|\g\|_2-C_{\w_{up}}\|\h+\nu\z^{(1)}-\lambda^{(2)})\|_2
-\sum_{i=n-k+1}^{n}\lambda_i^{(2)}\xtilde_i).\label{eq:upperdefxi}
\end{equation}
In the following lemma we will show that $\xi_{up}(\sigma,\g,\h,\xtilde,C_{\w_{up}})$ is a Lipschitz function.
\begin{lemma}
Let $\g$ and $\h$ be $m$ and $n$ dimensional vectors, respectively, with i.i.d. standard normal variables as their components. Let $\sigma>0$ be an arbitrary scalar. Let $\xi_{up}(\sigma,\g,\h,\xtilde,C_{\w_{up}})$ be as in (\ref{eq:upperdefxi}). Further let $\epsilon_{lip}>0$ be any constant. Then
\begin{equation}
\hspace{-.5in}P(|\xi_{up}(\sigma,\g,\h,\xtilde,C_{\w_{up}})-E\xi_{up}(\sigma,\g,\h,\xtilde,C_{\w_{up}})|\geq \epsilon_{lip}E\xi_{up}(\sigma,\g,\h,\xtilde,C_{\w_{up}}))\leq \exp \left \{  -\frac{(\epsilon_{lip} E\xi_{up}(\sigma,\g,\h,\xtilde,C_{\w_{up}}))^2}{2(2C_\w^2+\sigma^2)} \right \}.\label{eq:upperlipsch1}
\end{equation}\label{thm:upperlipsch1}
\end{lemma}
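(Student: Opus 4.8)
The plan is to prove this exactly in the style of Lemma \ref{thm:lipschunsigned}: first establish that $\xi_{up}(\sigma,\g,\h,\xtilde,C_{\w_{up}})$ is a Lipschitz function of the Gaussian data $(\g,\h)$ with an explicitly computable constant $c_{lip}$, and then invoke the concentration result of Lemma \ref{thm:lipsch} to read off (\ref{eq:upperlipsch1}). The only structural difference from Lemma \ref{thm:lipschunsigned} is that $\xi_{up}$ is defined through a \emph{maximization} over $(\lambda^{(2)},\nu)$ rather than a minimization over $\w$, so the sub-optimality argument is applied in the opposite direction; the crucial point is that the feasible set $\Lambda^{(2)}$ of (\ref{eq:upperlambda2}) together with $\nu\geq 0$ does not depend on $(\g,\h)$, so a point feasible for one realization is feasible for another. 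To begin, I would fix two realizations $(\g^{(1)},\h^{(1)})$ and $(\g^{(2)},\h^{(2)})$ and, without loss of generality, assume $\xi_{up}(\sigma,\g^{(1)},\h^{(1)},\xtilde,C_{\w_{up}})\geq \xi_{up}(\sigma,\g^{(2)},\h^{(2)},\xtilde,C_{\w_{up}})$ (the reverse case is symmetric, and equality is trivial).

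Next I would let $(\widehat{\lambda^{(2)}},\hat{\nu})$ denote the maximizing pair for the first realization, so that
\begin{equation*}
\xi_{up}(\sigma,\g^{(1)},\h^{(1)},\xtilde,C_{\w_{up}})=\sqrt{C_{\w_{up}}^2+\sigma^2}\|\g^{(1)}\|_2-C_{\w_{up}}\|\h^{(1)}+\hat{\nu}\z^{(1)}-\widehat{\lambda^{(2)}}\|_2-\sum_{i=n-k+1}^{n}\widehat{\lambda_i^{(2)}}\xtilde_i.
\end{equation*}
Since $(\widehat{\lambda^{(2)}},\hat{\nu})$ is also feasible for the second realization, evaluating the objective of (\ref{eq:upperdefxi}) there gives the lower bound
\begin{equation*}
\xi_{up}(\sigma,\g^{(2)},\h^{(2)},\xtilde,C_{\w_{up}})\geq \sqrt{C_{\w_{up}}^2+\sigma^2}\|\g^{(2)}\|_2-C_{\w_{up}}\|\h^{(2)}+\hat{\nu}\z^{(1)}-\widehat{\lambda^{(2)}}\|_2-\sum_{i=n-k+1}^{n}\widehat{\lambda_i^{(2)}}\xtilde_i.
\end{equation*}
Subtracting, the term $\sum_{i=n-k+1}^{n}\widehat{\lambda_i^{(2)}}\xtilde_i$ (which in any case is independent of $(\g,\h)$) cancels, leaving only the difference of the $\|\g\|_2$ terms and the difference of the two shifted $\h$-norms, where the common shift $\hat{\nu}\z^{(1)}-\widehat{\lambda^{(2)}}$ also cancels under the reverse triangle inequality.

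Applying the reverse triangle inequality to both pieces then yields
\begin{equation*}
\xi_{up}(\sigma,\g^{(1)},\h^{(1)},\xtilde,C_{\w_{up}})-\xi_{up}(\sigma,\g^{(2)},\h^{(2)},\xtilde,C_{\w_{up}})\leq \sqrt{C_{\w_{up}}^2+\sigma^2}\,\|\g^{(1)}-\g^{(2)}\|_2+C_{\w_{up}}\|\h^{(1)}-\h^{(2)}\|_2,
\end{equation*}
and a single application of the Cauchy--Schwarz inequality collapses the two coefficients into $\sqrt{(C_{\w_{up}}^2+\sigma^2)+C_{\w_{up}}^2}=\sqrt{2C_{\w_{up}}^2+\sigma^2}$ multiplying $\sqrt{\|\g^{(1)}-\g^{(2)}\|_2^2+\|\h^{(1)}-\h^{(2)}\|_2^2}$. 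Hence $\xi_{up}$ is Lipschitz in the concatenated vector $(\g,\h)$ with $c_{lip}=\sqrt{2C_{\w_{up}}^2+\sigma^2}$, and Lemma \ref{thm:lipsch} gives (\ref{eq:upperlipsch1}) directly; since $C_{\w_{up}}\leq C_\w$, the denominator $2(2C_{\w_{up}}^2+\sigma^2)$ obtained here is bounded by the stated $2(2C_\w^2+\sigma^2)$, so the claimed inequality follows. There is no deep obstacle here, but the step requiring the most care is the cancellation: one must use the \emph{same} optimizer $(\widehat{\lambda^{(2)}},\hat{\nu})$ in both evaluations so that the $\xtilde$-linear term and the additive shift inside the norm drop out, since an attempt to also vary the optimizer across realizations would leave uncontrolled cross terms. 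I would also remark in passing that the max in (\ref{eq:upperdefxi}) is attained under the feasibility/finiteness regime already discussed after (\ref{eq:solvew4}), which is what legitimizes speaking of $(\widehat{\lambda^{(2)}},\hat{\nu})$.
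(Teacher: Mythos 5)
Your proof is correct and follows essentially the same route as the paper: establish that $\xi_{up}(\sigma,\g,\h,\xtilde,C_{\w_{up}})$ is Lipschitz with constant $\sqrt{2C_{\w_{up}}^2+\sigma^2}$ by reusing a single optimizer in both evaluations (so the $\xtilde$-linear term and the common shift inside the $\h$-norm cancel), then apply the reverse triangle inequality, Cauchy--Schwarz, and Lemma \ref{thm:lipsch}. In fact, your direction of the sub-optimality substitution---reusing the maximizer of the realization achieving the \emph{larger} value---is the correct one for a maximization, whereas the paper's displayed chain plugs $(\nu^{(lip_1)},\lambda^{(lip_1)})$ into both terms (an index slip inherited from the minimization pattern of Lemma \ref{thm:lipschunsigned}), so your argument implicitly repairs that typo while reaching the identical conclusion.
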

\begin{proof}The proof will be similar to the corresponding one from Subsection \ref{sec:unsignedconc}. We start by setting
\begin{equation}
f_{lip}(\g^{(1)},\h^{(1)})=\max_{\lambda^{(2)}\in \Lambda^{(2)},\nu\geq 0}(\sqrt{C_{\w_{up}}^2+\sigma^2}\|\g^{(1)}\|_2-C_{\w_{up}}\|\h^{(1)}+\nu\z^{(1)}-\lambda^{(2)})\|_2
-\sum_{i=n-k+1}^{n}\lambda_i^{(2)}\xtilde_i).\label{eq:upperlipproof1}
\end{equation}
Further, let $\nu^{(lip_1)}$ and $\lambda^{(lip_1)}$ be the solutions of the minimization in (\ref{eq:upperlipproof1}). Then, clearly
\begin{equation}
f_{lip}(\g^{(1)},\h^{(1)})= (\sqrt{C_{\w_{up}}^2+\sigma^2}\|\g^{(1)}\|_2-C_{\w_{up}}\|\h^{(1)}+\nu^{(lip_1)}\z^{(1)}-\lambda^{(lip_1)})\|_2
-\sum_{i=n-k+1}^{n}\lambda_i^{(lip_1)}\xtilde_i).\label{eq:upperlipproof2}
\end{equation}
In an analogous fashion set
\begin{equation}
f_{lip}(\g^{(2)},\h^{(2)})=\max_{\lambda^{(2)}\in \Lambda^{(2)},\nu\geq 0}(\sqrt{C_{\w_{up}}^2+\sigma^2}\|\g^{(2)}\|_2-C_{\w_{up}}\|\h^{(2)}+\nu\z^{(1)}-\lambda^{(2)})\|_2
-\sum_{i=n-k+1}^{n}\lambda_i^{(2)}\xtilde_i),\label{eq:upperlipproof3}
\end{equation}
and let $\nu^{(lip_2)}$ and $\lambda^{(lip_2)}$ be the solutions of the minimization in (\ref{eq:upperlipproof3}). Then, clearly
\begin{equation}
f_{lip}(\g^{(2)},\h^{(2)})= (\sqrt{C_{\w_{up}}^2+\sigma^2}\|\g^{(2)}\|_2-C_{\w_{up}}\|\h^{(2)}+\nu^{(lip_2)}\z^{(1)}-\lambda^{(lip_2)})\|_2
-\sum_{i=n-k+1}^{n}\lambda_i^{(lip_2)}\xtilde_i),\label{eq:upperlipproof4}
\end{equation}
Now assume that $f_{lip}(\g^{(1)},\h^{(1)})\neq f_{lip}(\g^{(2)},\h^{(2)})$ (if they are equal we are trivially done). Further let $f_{lip}(\g^{(1)},\h^{(1)})< f_{lip}(\g^{(2)},\h^{(2)})$ (the rest of the argument of course can trivially be flipped if $f_{lip}(\g^{(1)},\h^{(1)})> f_{lip}(\g^{(2)},\h^{(2)})$). We then have
\begin{multline}
|f_{lip}(\g^{(2)},\h^{(2)})- f_{lip}(\g^{(1)},\h^{(1)})|=f_{lip}(\g^{(2)},\h^{(2)})- f_{lip}(\g^{(1)},\h^{(1)})\\
= (\sqrt{C_{\w_{up}}^2+\sigma^2}\|\g^{(2)}\|_2-C_{\w_{up}}\|\h^{(2)}+\nu^{(lip_2)}\z^{(1)}-\lambda^{(lip_2)}\|_2
-\sum_{i=n-k+1}^{n}\lambda_i^{(lip_2)}\xtilde_i)\\
-(\sqrt{C_{\w_{up}}^2+\sigma^2}\|\g^{(1)}\|_2-C_{\w_{up}}\|\h^{(1)}+\nu^{(lip_1)}\z^{(1)}-\lambda^{(lip_1)}\|_2
-\sum_{i=n-k+1}^{n}\lambda_i^{(lip_1)}\xtilde_i)\\
\leq (\sqrt{C_{\w_{up}}^2+\sigma^2}\|\g^{(2)}\|_2-C_{\w_{up}}\|\h^{(2)}+\nu^{(lip_1)}\z^{(1)}-\lambda^{(lip_1)}\|_2
-\sum_{i=n-k+1}^{n}\lambda_i^{(lip_1)}\xtilde_i)\\
-(\sqrt{C_{\w_{up}}^2+\sigma^2}\|\g^{(1)}\|_2-C_{\w_{up}}\|\h^{(1)}+\nu^{(lip_1)}\z^{(1)}-\lambda^{(lip_1)}\|_2
-\sum_{i=n-k+1}^{n}\lambda_i^{(lip_1)}\xtilde_i)\\
= \sqrt{C_{\w_{up}}^2+\sigma^2}(\|\g^{(2)}\|_2-\|\g^{(1)}\|_2)-C_{\w_{up}}(\|\h^{(2)}+\nu^{(lip_1)}\z^{(1)}-\lambda^{(lip_1)}\|_2
-\|\h^{(2)}+\nu^{(lip_1)}\z^{(1)}-\lambda^{(lip_1)}\|_2)\\
\leq \sqrt{C_{\w_{up}}^2+\sigma^2}\|\g^{(2)}-\g^{(1)}\|_2+C_{\w_{up}}(\|\h^{(2)}-\h^{(1)}\|_2)\\
\leq \sqrt{2C_{\w_{up}}^2+\sigma^2}\sqrt{\|\g^{(2)}-\g^{(1)}\|_2^2+(\|\h^{(2)}-\h^{(1)}\|_2^2)},\label{eq:upperlipproof5}
\end{multline}
where the first inequality follows by sub-optimality of $\nu^{lip_1}$ and $\lambda^{(lip_1)}$ in (\ref{eq:upperlipproof3}). Connecting beginning and end in
(\ref{eq:upperlipproof5}) and combining it with (\ref{eq:upperlipproof1}) one then has that
$\xi_{up}(\sigma,\g,\h,\xtilde,C_{\w_{up}})$ is Lipschitz with $c_{lip}=\sqrt{2C_\w^2+\sigma^2}$. (\ref{eq:upperlipsch1}) then easily follows by Lemma \ref{thm:lipsch}.
\end{proof}

We continue by following the line of arguments right after (\ref{eq:probanalcont1}). As stated there $P(\h_{n+1}\sigma\geq -\epsilon_1^{(\h)}\sqrt{n})\geq 1-e^{-\epsilon_2^{(\h)}n}$ where $\epsilon_1^{(\h)}>0$ is an arbitrarily small constant and $\epsilon_2^{(\h)}$ is a constant dependent on $\epsilon_1^{(\h)}$ and $\sigma$ but independent on $n$. Set
\begin{equation}
r=\zeta_{obj}^{(u)}=(1+\epsilon_{lip})E\xi(\sigma,\g,\h,\xtilde,C_{\w_{up}})+\epsilon_1^{(\h)}\sqrt{n}+\epsilon_{3}^{(g)}\sqrt{n}.\label{eq:upperdefzetaobjl}
\end{equation}
One then has after combing (\ref{eq:upperprobanal2}) and Lemma \ref{thm:upperlipsch1}
\begin{multline}
p_u=P(r+\h_{n+1}\sigma-\epsilon_{3}^{(g)}\sqrt{n}-\max_{\lambda^{(2)}\in \Lambda^{(2)},\nu\geq 0}(\sqrt{C_{\w_{up}}^2+\sigma^2}\|\g\|_2-C_{\w_{up}}\|\h+\nu\z^{(1)}-\lambda^{(2)})\|_2
-\sum_{i=n-k+1}^{n}\lambda_i^{(2)}\xtilde_i)\geq 0)\\
 \geq  P((1+\epsilon_{lip})E\xi(\sigma,\g,\h,\xtilde,C_{\w_{up}})\\ \geq \max_{\lambda^{(2)}\in \Lambda^{(2)},\nu\geq 0}(\sqrt{C_{\w_{up}}^2+\sigma^2}\|\g\|_2-C_{\w_{up}}\|\h+\nu\z^{(1)}-\lambda^{(2)})\|_2
-\sum_{i=n-k+1}^{n}\lambda_i^{(2)}\xtilde_i))(1-e^{-\epsilon_2^{(\h)}n})\\
 \geq  \left ( 1-\exp \left \{ -\frac{(\epsilon_{lip} E\xi(\sigma,\g,\h,\xtilde,C_{\w_{up}}))^2}{2(2C_{\w_{up}}^2+\sigma^2)} \right \} \right )(1-e^{-\epsilon_2^{(\h)}n}).\label{eq:upperprobanalcont2}
\end{multline}
As stated after (\ref{eq:probint}), (\ref{eq:upperprobanalcont2}) is conceptually enough to establish a ``high probability" upper bound on $\zeta_{obj}$. What is left is to connect it with (\ref{eq:upperLagran14}). Combining (\ref{eq:upperprobanalcont2}),
(\ref{eq:upperproblemma}), and (\ref{eq:upperLagran14}) we then obtain
\begin{equation}
P(f_{obj}^{(up)}\geq 0)\geq \left ( 1-\exp \left \{ -\frac{(\epsilon_{lip} E\xi(\sigma,\g,\h,\xtilde,C_{\w_{up}}))^2}{2(2C_{\w_{up}}^2+\sigma^2)} \right \} \right )(1-e^{-\epsilon_2^{(\h)}n})(1-e^{-\epsilon_4^{(g)}n}),\label{eq:upperprobanalcont3}
\end{equation}
where we used the fact that $g$ is the standard normal and therefore $P(g-\epsilon_3^{(g)}\sqrt{n}\leq 0)\geq (1-e^{-\epsilon_4^{(g)}n})$ for an arbitrarily small $\epsilon_3^{(g)}>0$ and a constant $\epsilon_4^{(g)}$ dependent on $\epsilon_3^{(g)}$ but independent of $n$.

We are now in position to summarize results from this subsection in the following lemma which is essentially an ``upper-bound" analogue to Lemma \ref{thm:lowerbound}.
\begin{lemma}
Let $\v$ be an $n\times 1$ vector of i.i.d. zero-mean variance $\sigma^2$ Gaussian random variables and let $A$ be an $m\times n$ matrix of i.i.d. standard normal random variables. Consider an $\xtilde$ defined in (\ref{eq:xtildedef}) and a $\y$ defined in (\ref{eq:systemnoise}) for $\x=\xtilde$. Let then $\zeta_{obj}$ be as defined in (\ref{eq:objlassol1}) and let $\w$ be the solution of (\ref{eq:objlassol13}). There is a constant $\epsilon_{upper}>0$
\begin{equation}
P(\zeta_{obj}\leq \zeta_{obj}^{(upper)})\geq 1-e^{-\epsilon_{upper}n},\label{eq:lowerboundobjthm1}
\end{equation}
where
\begin{equation}
\zeta_{obj}^{(upper)}=(1+\epsilon_{lip})E\xi_{up}(\sigma,\g,\h,\xtilde,C_{\w_{up}})+\epsilon_1^{(\h)}\sqrt{n}+\epsilon_3^{(g)}\sqrt{n},\label{eq:upperboundobjthm2}
\end{equation}
$\xi_{up}(\sigma,\g,\h,\xtilde,C_{\w_{up}})$ is as defined in (\ref{eq:upperdefxi}), $\epsilon_{lip},\epsilon_1^{(\h)},\epsilon_3^{(g)}$ are all positive arbitrarily small constants, and $C_{\w_{up}}$ is a constant such that $\|\w\|_2\leq C_{\w_{up}}$.
\label{thm:upperbound}
\end{lemma}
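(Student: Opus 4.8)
The plan is to prove the lemma by the standard \emph{witness} argument: since $\zeta_{obj}$ in (\ref{eq:objlassol1}) is a \emph{minimum}, it suffices to exhibit, with overwhelming probability, a single feasible $\w$ for the LASSO whose objective value does not exceed $\zeta_{obj}^{(upper)}$. Concretely, I would argue that $\zeta_{obj}\leq r$ whenever there exists a $\w$ with $\|\w\|_2\leq C_{\w_{up}}$, $\|\xtilde+\w\|_1\leq\|\xtilde\|_1$, and $\|\v-A\w\|_2\leq r$. This is exactly the assertion that the restricted feasibility program (\ref{eq:upperobjlassol11}) attains a nonpositive optimal value, so the whole task reduces to lower-bounding the probability that this occurs for $r=\zeta_{obj}^{(upper)}$. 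Restricting the search to $\|\w\|_2\leq C_{\w_{up}}$ only shrinks the feasible set, so a witness found in the restricted set is still a witness for the original problem; this is precisely where the hypothesis that $C_{\w_{up}}$ upper bounds $\|\w\|_2$ is invoked, guaranteeing the restriction loses nothing.

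First I would appeal to the Lagrangian dualization already carried out in Subsection \ref{sec:unsignedubzetaobj}, which recasts the sign of the optimal value of (\ref{eq:upperobjlassol11}) as the sign of the quantity $f_{obj}^{(up)}$ defined in (\ref{eq:upperLagran14}): the optimal value is nonpositive precisely when $f_{obj}^{(up)}\geq 0$. The goal thus becomes showing that $P(f_{obj}^{(up)}\geq 0)\to 1$ at an exponential rate. Next I would invoke the Gordon-type comparison inequality (\ref{eq:upperproblemma}), with the auxiliary parameter $\psi_{\a,\lambda^{(2)},\nu^{(1)}}$ chosen as in (\ref{eq:upperdefpsi}), to lower-bound $P(f_{obj}^{(up)}\geq 0)$ by the probability $p_u$ of an event governed by the two independent Gaussian vectors $\g$ and $\h$ alone. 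After the inner maximization over $\a$ and the change of variables $\nu\mapsto 1/\|\nu^{(1)}\|_2$, $\lambda^{(2)}\mapsto 2\lambda^{(2)}/\|\nu^{(1)}\|_2$ that precedes (\ref{eq:upperprobanal2}), this event is expressed through the single scalar $\xi_{up}(\sigma,\g,\h,\xtilde,C_{\w_{up}})$ of (\ref{eq:upperdefxi}), together with the stray Gaussian terms $\h_{n+1}\sigma$ and $g$.

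Finally I would combine the Lipschitz concentration of $\xi_{up}$ (Lemma \ref{thm:upperlipsch1}) with the elementary tail bounds $P(\h_{n+1}\sigma\geq-\epsilon_1^{(\h)}\sqrt{n})\geq 1-e^{-\epsilon_2^{(\h)}n}$ and $P(g\leq\epsilon_3^{(g)}\sqrt{n})\geq 1-e^{-\epsilon_4^{(g)}n}$. Substituting $r=\zeta_{obj}^{(upper)}=(1+\epsilon_{lip})E\xi_{up}+\epsilon_1^{(\h)}\sqrt{n}+\epsilon_3^{(g)}\sqrt{n}$, as in (\ref{eq:upperdefzetaobjl}), forces the deterministic $r$-term to dominate the concentrated random terms with overwhelming probability, yielding (\ref{eq:upperprobanalcont2}) and hence (\ref{eq:upperprobanalcont3}). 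A union bound over the three failure events then produces a single constant $\epsilon_{upper}>0$ with $P(f_{obj}^{(up)}\geq 0)\geq 1-e^{-\epsilon_{upper}n}$, which by the witness reduction of the first paragraph translates directly to $P(\zeta_{obj}\leq\zeta_{obj}^{(upper)})\geq 1-e^{-\epsilon_{upper}n}$.

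The main obstacle I expect is bookkeeping rather than conceptual. The dual program \emph{maximizes}, so the Gordon comparison must be oriented \emph{opposite} to the one driving the lower bound in Lemma \ref{thm:lowerbound}; getting this one-sided control of the minimax objective right, and correctly tracking the sign conventions and the slack constants through the normalization $\nu\mapsto 1/\|\nu^{(1)}\|_2$ so that the surviving expression is \emph{exactly} $\xi_{up}$ rather than an off-by-a-factor variant, is the delicate part. Since all of these steps are already assembled in the subsection, the formal proof amounts to chaining (\ref{eq:upperprobanalcont3}) back to the feasibility statement, in complete analogy with the way Lemma \ref{thm:lowerbound} is concluded.
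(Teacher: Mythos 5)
Your proposal is correct and follows essentially the same route as the paper's own argument in Subsection \ref{sec:unsignedubzetaobj}: the witness reduction of $\zeta_{obj}\leq r$ to nonpositivity of the optimal value of (\ref{eq:upperobjlassol11}), equivalently $f_{obj}^{(up)}\geq 0$ from (\ref{eq:upperLagran14}), followed by the Gordon-type comparison of Lemma \ref{eq:upperunsignedlemma1} with $\psi_{\a,\lambda^{(2)},\nu^{(1)}}$ as in (\ref{eq:upperdefpsi}), and the concentration of $\xi_{up}(\sigma,\g,\h,\xtilde,C_{\w_{up}})$ via Lemma \ref{thm:upperlipsch1} together with the Gaussian tail bounds and the choice of $r$ in (\ref{eq:upperdefzetaobjl}), yielding (\ref{eq:upperprobanalcont2}) and (\ref{eq:upperprobanalcont3}). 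You have also correctly identified where the hypothesis $\|\w\|_2\leq C_{\w_{up}}$ enters and where the orientation of the comparison inequality differs from the lower-bound argument, so there are no gaps to report.
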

\begin{proof}
Follows from the previous discussion.
\end{proof}

\subsection{Matching upper and lower bounds}\label{sec:matching}

In this section we specialize the general bounds introduced above and show how they can match each other. We will divide presentation in three subsections. In the first of the subsections we will make a connection to the noiseless case and show how one can then remove the constraint from (\ref{eq:defhatxi}), (\ref{eq:defhatw}), and (\ref{eq:defhatwnorm}). In the second subsection we will consider a $\w$ such that $|\|\w\|_2-\|\hat{\w}\|_2|\geq \epsilon_{\w_{up}}\|\hat{\w}\|_2$. We will then quantify how much the lower bound that can be computed for such a $\w$ through the framework presented in Section \ref{sec:unsignedlbzetaobj} deviates from the optimal one obtained for $\hat{\w}$. In the last subsection we will then show that there will be a $\w$ such that the upper bound computed through the framework presented in Section \ref{sec:unsignedubzetaobj} will deviate less. That will in essence establish that upper and lower bounds computed in the previous sections indeed match. We will then draw conclusions as for the consequences which such a matching of the bounds leaves on a couple of LASSO parameters.

\subsubsection{Connection to the $\ell_1$ optimization}\label{sec:connectl1}

In this subsection we establish a connection between the constraint in (\ref{eq:defhatxi}), (\ref{eq:defhatw}), and (\ref{eq:defhatwnorm}) and the fundamental performance characterization of $\ell_1$ optimization derived in \cite{StojnicUpper10} (and of course earlier in the context of neighborly polytopes in \cite{DonohoPol}). We first recall on the condition from Lemma \ref{thm:optsollower}. The condition states
\begin{equation}
\sqrt{1+\frac{\sigma^2}{C_\w^2}}\|\h+\hat{\nu}\z^{(1)}-\widehat{\lambda^{(2)}}\|_2\leq \|\g\|_2,\label{eq:condoptsollower}
\end{equation}
where $C_\w$ is an arbitrarily large constant and $\hat{\nu}$ and $\widehat{\lambda^{(2)}}$ are the solutions of
\begin{eqnarray}
\max & & \sigma\sqrt{\|\g\|_2^2-\|\h+\nu\z^{(1)}-\lambda^{(2)}\|_2^2} -\sum_{i=n-k+1}^{n}\lambda_i^{(2)}\xtilde_i\nonumber \\
\mbox{subject to} & & 0\leq \lambda_i^{(2)}\leq 2\nu,1\leq i\leq n\nonumber \\
& & \nu\geq 0.\label{eq:matchopt}
\end{eqnarray}
Now we note the following equivalent to (\ref{eq:matchopt}) for the case when nonzero components of $\xtilde$ are infinite
\begin{eqnarray}
\max & & \sigma\sqrt{\|\g\|_2^2-\|\h+\nu\z^{(1)}-\lambda^{(2)}\|_2^2} \nonumber \\
\mbox{subject to} & & 0\leq \lambda_i^{(2)}\leq 2\nu,1\leq i\leq n-k\nonumber \\
 & & \lambda_i^{(2)}=0,n-k+1\leq i\leq n\nonumber \\
& & \nu\geq 0.\label{eq:matchl1}
\end{eqnarray}
Now, to make the new observations easily comparable to the corresponding ones from \cite{StojnicCSetam09,StojnicEquiv10} we set
\begin{equation}
\bar{\h}=[|\h|_{(1)}^{(1)},|\h|_{(2)}^{(2)},\dots,|\h|_{(n-k)}^{(n-k)},\h_{n-k+1},\h_{n-k+2},\dots,\h_n]^T,\label{eq:defhbar}
\end{equation}
where $[|\h|_{(1)}^{(1)},|\h|_{(2)}^{(2)},\dots,|\h|_{(n-k)}^{(n-k)}]$ are magnitudes of $[\h_{1},\h_{2},\dots,\h_{n-k}]$ sorted in increasing order (possible ties in the sorting process are of course broken arbitrarily). Also we let $\z^{(2)}$ be such that $\z_i^{(2)}=-\z_i^{(1)},n-k+1\leq i\leq n$ and $\z_i^{(2)}=\z_i^{(1)},1\leq i\leq n-k$. It is then relatively easy to see that the above optimization problem is equivalent to
\begin{eqnarray}
\max & & \sigma\sqrt{\|\g\|_2^2-\|\bar{\h}-\nu\z^{(2)}+\lambda^{(2)}\|_2^2} \nonumber \\
\mbox{subject to}
& & 0 \leq\lambda_i^{(2)}\leq \nu, 1\leq i\leq n-k\nonumber \\
& & \lambda_i^{(2)}=0,n-k+1\leq i\leq n\nonumber \\
& & \nu\geq 0.
\label{eq:matchl11}
\end{eqnarray}
Let $\nu_{\ell_1}$ and $\lambda^{(\ell_1)}$ be the solution of the above maximization. Then, as we showed in \cite{StojnicCSetam09} and \cite{StojnicUpper10}, the inequality
\begin{equation}
E\|\g\|_2> E\|\bar{\h}-\nu_{\ell_1}\z^{(2)}+\lambda^{(\ell_1)}\|_2\label{eq:fundl1exp}
\end{equation}
establishes the following fundamental performance characterization of the $\ell_1$ optimization algorithm from (\ref{eq:l1}) that could be used instead of LASSO to recover $\x$ in (\ref{eq:system}) (which is a noiseless version of (\ref{eq:systemnoise}))
\begin{equation}
(1-\beta_w)\frac{\sqrt{\frac{2}{\pi}}e^{-(\erfinv(\frac{1-\alpha_w}{1-\beta_w}))^2}}{\alpha_w}-\sqrt{2}\erfinv (\frac{1-\alpha_w}{1-\beta_w})=0,
\label{eq:fundl1}
\end{equation}
where of course $\alpha_w=\frac{m}{n}$ and $\beta_w=\frac{k}{n}$. As it is also shown in \cite{StojnicCSetam09} and \cite{StojnicUpper10} both of the quantities under the expected values in (\ref{eq:fundl1exp}) nicely concentrate. Then with overwhelming probability one has that for any pair $(\alpha,\beta)$ that satisfies (or lies below) the above fundamental performance characterization of $\ell_1$ optimization
\begin{equation}
\|\g\|_2> \|\bar{\h}-\nu_{\ell_1}\z^{(2)}+\lambda^{(\ell_1)}\|_2.\label{eq:fundl1noexp}
\end{equation}
Moreover, since $\lambda_i^{(2)}\geq 0, n-k+1\leq i\leq n$, in (\ref{eq:matchopt}) one actually has that (\ref{eq:fundl1noexp}) implies that with overwhelming probability
\begin{equation}
\|\g\|_2> \|\h+\hat{\nu}\z^{(1)}-\widehat{\lambda^{(2)}}\|_2,
\end{equation}
which for sufficiently large $C_\w$ is the same as (\ref{eq:condoptsollower}).  We then in what follows assume that pair $(\alpha,\beta)$ is such that it satisfies the fundamental $\ell_1$ optimization performance characterization (or is in the region below it) and therefore proceed by ignoring the condition (\ref{eq:condoptsollower}). (Strictly speaking, all our overwhelming probabilities below should be multiplied with an overwhelming probability that (\ref{eq:fundl1}) holds; to maintain writing easier we will skip this detail.)

\subsubsection{Deviation from the lower-bound}\label{sec:devlb}

In this subsection we show that $\|\w_{lasso}\|_2$ can not deviate substantially from $\|\hat{\w}\|_2$ without substantially affecting the value of the lower bound on the objective in (\ref{eq:lassol1}) that is derived in Section \ref{sec:unsignedlbzetaobj}. To that end let us assume that there is a $\w_{off}$ that is the solution of the LASSO from (\ref{eq:lassol1}) (or to be slightly more precise that is such that $\hat{\x}=\xtilde+\w_{off}$, where obviously $\hat{\x}$ is the solution of (\ref{eq:lassol1})). Further, let $|\|\w_{off}\|_2-\|\hat{\w}\|_2|\geq \epsilon_{\w_{up}}\|\hat{\w}\|_2$, where $\epsilon_{\w_{up}}$ is an arbitrarily small constant.

One can then proceed by repeating the same line of thought as in Section \ref{sec:unsignedlbzetaobj}. The only difference will be that now $C_\w=\|\w_{off}\|_2$ and consequently in the definition of $S_\w(\sigma,\xtilde,C_\w)$, $\|\w\|_2\leq C_\w$ changes to $\|\w\|_2=C_\w=\|\w_{off}\|_2$. This difference will of course not affect the concept presented in Section \ref{sec:unsignedlbzetaobj}. The only real consequence will be the change of (\ref{eq:defxi}). Adapted to the new scenario (\ref{eq:defxi}) becomes
\begin{eqnarray}
\xi_{off}(\sigma,\g,\h,\xtilde,\w_{off})=\min_{\w,\t} & & \sqrt{\|\w_{off}\|_2^2+\sigma^2}\|\g\|_2+\sum_{i=1}^{n}\h_i\w_i\nonumber \\
\mbox{subject to} & & \sum_{i=1}^n \t_i\leq \|\xtilde\|_1\nonumber \\
& & \xtilde_i+\w_i-\t_i \leq 0, n-k+1\leq i\leq n\nonumber \\
& & -\xtilde_i-\w_i-\t_i\leq 0, n-k+1\leq i\leq n\nonumber \\
& & \w_i-\t_i\leq 0, 1\leq i\leq n-k\nonumber \\
& & -\w_i-\t_i\leq 0, 1\leq i\leq n-k\nonumber \\
& & \sqrt{\|\w\|_2^2+\sigma^2}\leq \sqrt{\w_{off}^2+\sigma^2}.\label{eq:matchdefxi4}
\end{eqnarray}
One can then proceed further with solving the Lagrangian to obtain
\begin{equation}
\xi_{off}(\sigma,\g,\h,\xtilde,\w_{off})=\max_{\lambda^{(2)}\in \Lambda^{(2)},\nu\geq 0}(\sqrt{\w_{off}^2+\sigma^2}\|\g\|_2-\w_{off}\|\h+\nu\z^{(1)}-\lambda^{(2)})\|_2
-\sum_{i=n-k+1}^{n}\lambda_i^{(2)}\xtilde_i).\label{eq:matchdefxi}
\end{equation}
Using the probabilistic arguments from Section \ref{sec:unsignedlbzetaobj} one then from Lemma \ref{thm:lowerbound} has that if $\w_{off}$ is the solution of (\ref{eq:lassol1}) then its objective value with overwhelming probability is lower bounded by $(1-\epsilon_{lip})E\xi_{off}(\sigma,\g,\h,\xtilde,\w_{off})$
($\xi_{off}(\sigma,\g,\h,\xtilde,\w_{off})$ is structurally the same as $\xi_{up}(\sigma,\g,\h,\xtilde,C_{\w_{up}})$ from (\ref{eq:upperdefxi}) and therefore easily concentrates based on Lemma \ref{thm:upperlipsch1}).
We will now consider in parallel the following lower bound from (\ref{eq:Lagran12}) (clearly, choosing $\w_{off}=\hat{\w}$ would make (\ref{eq:matchdefxi}) equivalent to (\ref{eq:Lagran12})).
\begin{equation}
\xi_{ov}(\sigma,\g,\h,\xtilde)=\max_{\nu\geq 0,\lambda^{(2)}\in \Lambda^{(2)}} \sigma\sqrt{\|\g\|_2^2-\|\h+\nu\z^{(1)}-\lambda^{(2)}\|_2^2} -\sum_{i=n-k+1}^{n}\lambda_i^{(2)}\xtilde_i.\label{eq:matchoptlower}
\end{equation}
Now, let as usual $\hat{\nu}$ and $\widehat{\lambda^{(2)}}$ be the solutions of (\ref{eq:matchoptlower}). Let
\begin{equation}
\xi_{help}(\sigma,\g,\h,\xtilde,\w_{off})=\sqrt{\w_{off}^2+\sigma^2}\|\g\|_2-\w_{off}\|\h+\hat{\nu}\z^{(1)}-\widehat{\lambda^{(2)}}\|_2
-\sum_{i=n-k+1}^{n}\widehat{\lambda_i^{(2)}}\xtilde_i.\label{eq:matchdefxihelp}
\end{equation}
Then
\begin{multline}
\xi_{off}(\sigma,\g,\h,\xtilde,\w_{off})-\xi_{ov}(\sigma,\g,\h,\xtilde)\geq \xi_{help}(\sigma,\g,\h,\xtilde,\w_{off})
-\xi_{ov}(\sigma,\g,\h,\xtilde)\\=\sqrt{\w_{off}^2+\sigma^2}\|\g\|_2-\w_{off}\|\h+\hat{\nu}\z^{(1)}-\widehat{\lambda^{(2)}}\|_2-
\sigma\sqrt{\|\g\|_2^2-\|\h+\hat{\nu}\z^{(1)}-\widehat{\lambda^{(2)}}\|_2^2}.\label{eq:matchdiff1}
\end{multline}
For the simplicity let $|\|\w_{off}\|_2-\|\hat{\w}\|_2|= \epsilon_{\w_{up}}\|\hat{\w}\|_2$ (this restriction is clearly more conservative than $|\|\w_{off}\|_2-\|\hat{\w}\|_2|\geq \epsilon_{\w_{up}}\|\hat{\w}\|_2$). Now, we switch to expectations and ignore all $\epsilon$ except $\epsilon_{\w_{up}}$. Since every quantity that we will consider (see (\ref{eq:conchw})) concentrates $\epsilon$'s in concentration inequalities can be made arbitrarily close to zero; moreover once $\epsilon_{\w_{up}}$ is fixed all other $\epsilon$'s can be made arbitrarily small compared to $\epsilon_{\w_{up}}$. Also, we will show derivation  for $\w_{off}=(1+\epsilon_{\w_{up}})\|\hat{\w}\|_2$ (the derivation for the case $\w_{off}=(1-\epsilon_{\w_{up}})\|\hat{\w}\|_2$ is completely analogous).

Now, to facilitate writing we then set all $\epsilon$'s except $\epsilon_{\w_{up}}$ to zero. We then have
\begin{multline}
E\xi_{off}(\sigma,\g,\h,\xtilde,\w_{off})-E\xi_{ov}(\sigma,\g,\h,\xtilde)\geq E\xi_{help}(\sigma,\g,\h,\xtilde,\w_{off})
-E\xi_{ov}(\sigma,\g,\h,\xtilde)\\\doteq \sqrt{(1+\epsilon_{\w_{up}})^2(E\|\hat{\w}\|_2)^2+\sigma^2}E\|\g\|_2-(1+\epsilon_{\w_{up}})E\|\hat{\w}\|_2E\|\h+\hat{\nu}\z^{(1)}
-\widehat{\lambda^{(2)}}\|_2\\-\sigma\sqrt{(E\|\g\|_2)^2-(E\|\h+\hat{\nu}\z^{(1)}
-\widehat{\lambda^{(2)}}\|_2)^2}\label{eq:matchdiff2}
\end{multline}
where $\doteq$ means that equality is not exact but for a fixed $\epsilon_{\w_{up}}$ can be made as close to it as needed. In a similar fashion we have
\begin{equation}
E\|\hat{\w}\|_2\doteq\frac{\sigma E\|\h+\hat{\nu}\z^{(1)}-\widehat{\lambda^{(2)}}\|_2}{\sqrt{(E\|\g\|_2)^2-(E\|\h+\hat{\nu}\z^{(1)}-\widehat{\lambda^{(2)}}\|_2)^2}}.\label{eq:matchdiff3}
\end{equation}
Before we proceed further we simplify the notation with the following change of variables.
\begin{eqnarray}
g_E & = & E\|\g\|_2\nonumber \\
h_E & = & E\|\h+\hat{\nu}\z^{(1)}-\widehat{\lambda^{(2)}}\|_2\nonumber \\
\xi_E & = & \sigma\sqrt{(E\|\g\|_2)^2-(E\|\h+\hat{\nu}\z^{(1)}
-\widehat{\lambda^{(2)}}\|_2)^2}=\sigma\sqrt{g_E^2-h_E^2}\nonumber \\
w_E & =  & E\|\hat{\w}\|_2=\frac{\sigma h_E}{\sqrt{g_E^2-h_E^2}}.\label{eq:defgEhExiE}
\end{eqnarray}
From (\ref{eq:defgEhExiE}) one easily has
\begin{eqnarray}
h_E^2 & = & g_E^2-\frac{\xi_E^2}{\sigma^2}\nonumber \\
w_E & = & \frac{h_E\sigma^2}{\xi_E}.\label{eq:defgEhExiE1}
\end{eqnarray}
Then a combination of (\ref{eq:matchdiff2}), (\ref{eq:defgEhExiE}), and (\ref{eq:defgEhExiE1}) gives
\begin{multline}
E\xi_{help}(\sigma,\g,\h,\xtilde,\w_{off})-E\xi_{ov}(\sigma,\g,\h,\xtilde) \doteq \sqrt{(1+\epsilon_{\w_{up}})^2(\frac{h_E^2\sigma^4}{\xi_E^2})+\sigma^2}g_E-(1+\epsilon_{\w_{up}})w_E h_E-\xi_E \\
=(1+\epsilon_{\w_{up}})\frac{g_E^2\sigma^2}{\xi_E}\sqrt{1-\frac{\xi_E^2(2\epsilon_{\w_{up}}+\epsilon_{\w_{up}}^2)}{(1+\epsilon_{\w_{up}})^2g_E^2\sigma^2}}
-(1+\epsilon_{\w_{up}})\frac{g_E^2\sigma^2}{\xi_E}+\epsilon_{\w_{up}}\xi_E.\label{eq:matchdiff4}
\end{multline}
Now, assuming that $\epsilon_{\w_{up}}$ is small (and recognizing that $\xi_E\leq g_E\sigma$) from (\ref{eq:matchdiff4}) we have
\begin{multline}
(1+\epsilon_{\w_{up}})\frac{g_E^2\sigma^2}{\xi_E}\sqrt{1-\frac{\xi_E^2(2\epsilon_{\w_{up}}+\epsilon_{\w_{up}}^2)}{(1+\epsilon_{\w_{up}})^2g_E^2\sigma^2}}
-(1+\epsilon_{\w_{up}})\frac{g_E^2\sigma^2}{\xi_E}+\epsilon_{\w_{up}}\xi_E\\\approx
(1+\epsilon_{\w_{up}})\frac{g_E^2\sigma^2}{\xi_E}(1-\frac{\xi_E^2(2\epsilon_{\w_{up}}+\epsilon_{\w_{up}}^2)}{2(1+\epsilon_{\w_{up}})^2g_E^2\sigma^2})
-(1+\epsilon_{\w_{up}})\frac{g_E^2\sigma^2}{\xi_E}+\epsilon_{\w_{up}}\xi_E\\
=-\frac{\xi_E(2\epsilon_{\w_{up}}+\epsilon_{\w_{up}}^2)}{2(1+\epsilon_{\w_{up}})}+\epsilon_{\w_{up}}\xi_E=
\frac{2\epsilon_{\w_{up}}\xi_E(1+\epsilon_{\w_{up}})-\xi_E(2\epsilon_{\w_{up}}+\epsilon_{\w_{up}}^2)}{2(1+\epsilon_{\w_{up}})}=
\frac{\xi_E\epsilon_{\w_{up}}^2}{2(1+\epsilon_{\w_{up}})}.\label{eq:matchdiff5}
\end{multline}
Combining (\ref{eq:matchdiff1}), (\ref{eq:matchdiff4}), and (\ref{eq:matchdiff5}) we finally have
\begin{equation}
E\xi_{off}(\sigma,\g,\h,\xtilde,\w_{off})-E\xi_{ov}(\sigma,\g,\h,\xtilde)\geq \frac{\epsilon_{\w_{up}}^2}{2(1+\epsilon_{\w_{up}})}E\xi_E
\geq \frac{\epsilon_{\w_{up}}^2}{2(1+\epsilon_{\w_{up}})}E\xi_{ov}(\sigma,\g,\h,\xtilde)\label{eq:matchdiff6}
\end{equation}
where the last inequality follows by noting that in the definition of $\xi_{ov}(\sigma,\g,\h,\xtilde)$ the elements of $\xtilde$ and $\lambda^{(2)}$ are non-negative.

Now, roughly speaking, (\ref{eq:matchdiff6}) shows that if $\|\w_{lasso}\|_2$ were to deviate from $\|\hat{\w}\|_2$ the optimal value of the objective in (\ref{eq:lassol1}) would be higher than the lower bound derived in Section \ref{sec:unsignedlbzetaobj}. We summarize these observations in the following lemma (essentially a deviating equivalent of Lemma \ref{thm:lowerbound} from Section \ref{sec:unsignedlbzetaobj}).
\begin{lemma}
Let $\v$ be an $n\times 1$ vector of i.i.d. zero-mean variance $\sigma^2$ Gaussian random variables and let $A$ be an $m\times n$ matrix of i.i.d. standard normal random variables. Consider an $\xtilde$ defined in (\ref{eq:xtildedef}) and a $\y$ defined in (\ref{eq:systemnoise}) for $\x=\xtilde$. Let then $\zeta_{obj}$ be as defined in (\ref{eq:objlassol1}) or (\ref{eq:objlassol13}) and let $\w_{off}$ be the solution of (\ref{eq:objlassol13}). Let $\alpha$ and $\beta$ be below the fundamental characterization (\ref{eq:fundl1}) and let $\hat{\w}$ be as defined in (\ref{eq:defhatw}). Assume that $|\|\w_{off}\|_2-\|\hat{\w}\|_2|\geq \epsilon_{\w_{up}}\|\hat{\w}\|_2$, where $\epsilon_{\w_{up}}$ is an arbitrarily small but fixed constant. Then there would be a constant $\epsilon_{off}>0$, and arbitrarily small positive constants $\epsilon_{lip},\epsilon_1^{(\h)},\epsilon_1^{(g)}$ such that
\begin{equation}
P(\zeta_{obj}\geq \zeta_{obj}^{(off)})\geq 1-e^{-\epsilon_{off}n},\label{eq:matchlowerboundobjthm1}
\end{equation}
where
\begin{equation}
\zeta_{obj}^{(off)}=
(1-\epsilon_{lip})(1+\frac{\epsilon_{\w_{up}}^2}{2(1+\epsilon_{\w_{up}})})E\xi_{ov}
(\sigma,\g,\h,\xtilde)-\epsilon_1^{(\h)}\sqrt{n}-\epsilon_1^{(g)}\sqrt{n},\label{eq:matchlowerboundobjthm2}
\end{equation}
and $\xi_{ov}(\sigma,\g,\h,\xtilde)$ is as defined in (\ref{eq:Lagran12}) (or (\ref{eq:matchoptlower})).
\label{thm:matchlowerbound}
\end{lemma}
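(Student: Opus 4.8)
The plan is to assemble the ingredients developed in the discussion preceding the statement and to track the overwhelming-probability events so that they collapse into a single exponent $\epsilon_{off}$. The starting point is that, since $\w_{off}$ is the actual LASSO solution, its norm is a legitimate radius on which to localize the lower-bound analysis of Section~\ref{sec:unsignedlbzetaobj}. I would replace the bound $C_\w$ by $\|\w_{off}\|_2$ and restrict feasibility to the sphere $\|\w\|_2=\|\w_{off}\|_2$; because $\w_{off}$ is the global minimizer of $f_{obj}$ over the full feasible set and lies on this sphere, the minimum over the sphere-restricted feasible set $R$ equals $f_{obj}(\w_{off})=\zeta_{obj}$, so any high-probability lower bound on the restricted minimum is a lower bound on $\zeta_{obj}$. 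Running the Gordon comparison of Lemma~\ref{thm:unsignedlemma} and the Lagrange-duality reduction exactly as in Section~\ref{sec:unsignedlbzetaobj} over this restricted set produces the deterministic proxy $\xi_{off}(\sigma,\g,\h,\xtilde,\w_{off})$ of (\ref{eq:matchdefxi}), which is structurally identical to $\xi_{up}$ and hence Lipschitz and sharply concentrated by Lemma~\ref{thm:upperlipsch1}.

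With concentration in hand, the probabilistic half of Lemma~\ref{thm:lowerbound} carries over: combining the Gordon inequality with the Lipschitz tail bound on $\xi_{off}$ and the elementary $\sqrt n$-tail estimates on the standard normals $g$ and $\h_{n+1}$, I would obtain that with overwhelming probability
\[
\zeta_{obj}\ \geq\ (1-\epsilon_{lip})E\xi_{off}(\sigma,\g,\h,\xtilde,\w_{off})-\epsilon_1^{(\h)}\sqrt n-\epsilon_1^{(g)}\sqrt n .
\]

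Everything then reduces to the deterministic inequality $E\xi_{off}\geq\bigl(1+\tfrac{\epsilon_{\w_{up}}^2}{2(1+\epsilon_{\w_{up}})}\bigr)E\xi_{ov}$, and this is where the deviation hypothesis is actually used. The mechanism is to feed the optimizer $(\hat\nu,\widehat{\lambda^{(2)}})$ of $\xi_{ov}$ into the $\xi_{off}$ objective to form the suboptimal $\xi_{help}$ of (\ref{eq:matchdefxihelp}), so that $\xi_{off}\geq\xi_{help}$ by optimality. I would then pass to expectations (legitimate because every relevant quantity concentrates, so the auxiliary $\epsilon$'s can be sent to zero relative to the fixed $\epsilon_{\w_{up}}$), introduce the shorthand $g_E,h_E,\xi_E,w_E$ of (\ref{eq:defgEhExiE})--(\ref{eq:defgEhExiE1}), and Taylor-expand the square root in $\|\w_{off}\|_2=(1\pm\epsilon_{\w_{up}})w_E$. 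The crux is the cancellation exhibited in (\ref{eq:matchdiff4})--(\ref{eq:matchdiff5}): the first-order terms in $\epsilon_{\w_{up}}$ annihilate and the surviving gap is second order and strictly positive, equal to $\tfrac{\xi_E\epsilon_{\w_{up}}^2}{2(1+\epsilon_{\w_{up}})}$. Bounding $\xi_E\geq E\xi_{ov}$, which is valid because the subtracted term $\sum_{i=n-k+1}^n\widehat{\lambda_i^{(2)}}\xtilde_i$ has nonnegative summands, converts this additive gap into the claimed multiplicative surplus.

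Finally I would assemble the estimate: substituting the deterministic inequality into the probabilistic bound of the second paragraph yields $\zeta_{obj}\geq\zeta_{obj}^{(off)}$ with the $\zeta_{obj}^{(off)}$ of (\ref{eq:matchlowerboundobjthm2}), and a union bound over the finitely many overwhelming-probability events---concentration of $\xi_{off}$, the $g$ and $\h_{n+1}$ tail estimates, the boundedness assumption (\ref{eq:assumplasso}), and the event that the $\ell_1$ characterization (\ref{eq:fundl1}) holds so that the condition (\ref{eq:condoptsollower}) may be dropped---produces a single $\epsilon_{off}>0$. I expect the main obstacle to be the expectation step of the third paragraph rather than the bookkeeping: one must confirm that the same strictly positive second-order gap appears for the lower deviation $\|\w_{off}\|_2=(1-\epsilon_{\w_{up}})w_E$ (the surrounding discussion treats only the $+$ sign and asserts the other is analogous), and that the approximate equalities $\doteq$ and $\approx$ can be upgraded to genuine inequalities uniformly once $\epsilon_{\w_{up}}$ is fixed and the remaining $\epsilon$'s are taken small enough relative to it.
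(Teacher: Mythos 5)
Your proposal is correct and reconstructs essentially the paper's own argument: restricting the lower-bound machinery of Section~\ref{sec:unsignedlbzetaobj} to the sphere $\|\w\|_2=\|\w_{off}\|_2$ to obtain $\xi_{off}$ from (\ref{eq:matchdefxi}), lower-bounding it by $\xi_{help}$ via suboptimality of $(\hat{\nu},\widehat{\lambda^{(2)}})$, extracting the second-order gap $\frac{\xi_E\epsilon_{\w_{up}}^2}{2(1+\epsilon_{\w_{up}})}$ through the expansion in (\ref{eq:matchdiff4})--(\ref{eq:matchdiff5}), bounding $\xi_E\geq E\xi_{ov}(\sigma,\g,\h,\xtilde)$ by nonnegativity of the subtracted term, and closing with the concentration and tail estimates behind Lemma~\ref{thm:lowerbound}. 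The caveats you flag at the end (the $(1-\epsilon_{\w_{up}})$ case and upgrading $\doteq$/$\approx$) are treated at exactly the same level of informality in the paper itself, so your proof matches the paper's in both route and rigor.
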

\begin{proof}
Follows from the previous discussion, discussion from Section \ref{sec:connectl1}, and a combination of (\ref{eq:matchdefxi}), (\ref{eq:matchdiff1}), (\ref{eq:matchdiff6}), arguments right after (\ref{eq:matchdefxi}), and Lemma \ref{thm:lowerbound}.
\end{proof}

\subsubsection{Deviation of the upper bound}\label{sec:devub}

In this section we will show that $\|\w_{lasso}\|_2$ can not deviate from $\|\hat{\w}\|_2$ as much as it was assumed in the previous section. To do so we will actually continue to assume that it can and then eventually reach a contradiction. As in previous section, let then $|\|\w_{off}\|_2-\|\hat{\w}\|_2|\geq \epsilon_{\w_{up}}\|\hat{\w}\|_2$, where $\epsilon_{\w_{up}}$ is an arbitrarily small constant. Further, let $\xi_{dual}(\sigma,\g,\h,\xtilde)$ be
\begin{eqnarray}
\xi_{dual}(\sigma,\g,\h,\xtilde)=\min_{d\geq 0}\max_{\nu,\lambda^{(2)}} & & \sqrt{d^2+\sigma^2}\|\g\|_2-d\|\h+\nu\z^{(1)}-\lambda^{(2)}\|_2 -\sum_{i=n-k+1}^{n}\lambda_i^{(2)}\xtilde_i\nonumber \\
\mbox{subject to}
& & \nu\geq 0\nonumber \\
& & 0 \leq\lambda_i^{(2)}\leq 2\nu,1\leq i\leq n.\label{eq:devubLagran11}
\end{eqnarray}
Rewriting (\ref{eq:devubLagran11}) with a simple sign flipping turns out to be useful in what follows
\begin{eqnarray}
-\xi_{dual}(\sigma,\g,\h,\xtilde)=\max_{d\geq 0}\min_{\nu,\lambda^{(2)}} & & -\sqrt{d^2+\sigma^2}\|\g\|_2+d\|\h+\nu\z^{(1)}-\lambda^{(2)}\|_2 +\sum_{i=n-k+1}^{n}\lambda_i^{(2)}\xtilde_i\nonumber \\
\mbox{subject to}
& & \nu\geq 0\nonumber \\
& & 0 \leq\lambda_i^{(2)}\leq 2\nu,1\leq i\leq n.\label{eq:devubLagran12}
\end{eqnarray}
The following lemma provides a powerful tool to deal with (\ref{eq:devubLagran12}).
\begin{lemma}
Let $\xi_{dual}(\sigma,\g,\h,\xtilde)$ be as defined in (\ref{eq:devubLagran12}). Further, let
\begin{eqnarray}
-\xi_{ov}(\sigma,\g,\h,\xtilde)=\min_{\nu,\lambda^{(2)}}\max_{d\geq 0} & & -\sqrt{d^2+\sigma^2}\|\g\|_2+d\|\h+\nu\z^{(1)}-\lambda^{(2)}\|_2 +\sum_{i=n-k+1}^{n}\lambda_i^{(2)}\xtilde_i\nonumber \\
\mbox{subject to}
& & \nu\geq 0\nonumber \\
& & 0 \leq\lambda_i^{(2)}\leq 2\nu,1\leq i\leq n.\label{eq:devublemmadet1}
\end{eqnarray}
Then
\begin{equation}
\xi_{dual}(\sigma,\g,\h,\xtilde)=\xi_{ov}(\sigma,\g,\h,\xtilde).\label{eq:devublemmadet2}
\end{equation}
\label{thm:devublemmadet}
\end{lemma}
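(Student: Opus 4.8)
The plan is to read (\ref{eq:devubLagran12}) and (\ref{eq:devublemmadet1}) as the two orders of $\min$ and $\max$ applied to a single function,
\[
\phi(d,\nu,\lambda^{(2)})=-\sqrt{d^2+\sigma^2}\|\g\|_2+d\|\h+\nu\z^{(1)}-\lambda^{(2)}\|_2+\sum_{i=n-k+1}^{n}\lambda_i^{(2)}\xtilde_i,
\]
over the convex domains $\{d\geq 0\}$ and $\{(\nu,\lambda^{(2)}):\nu\geq 0,\ 0\leq\lambda_i^{(2)}\leq 2\nu\}$, so that $-\xi_{dual}=\max_{d}\min_{\nu,\lambda^{(2)}}\phi$ and $-\xi_{ov}=\min_{\nu,\lambda^{(2)}}\max_{d}\phi$. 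First I would record the convex--concave structure: for fixed $(\nu,\lambda^{(2)})$ the map $d\mapsto\phi$ is concave on $\{d\geq 0\}$ (the term $-\|\g\|_2\sqrt{d^2+\sigma^2}$ is concave and $d\|\h+\nu\z^{(1)}-\lambda^{(2)}\|_2$ is linear in $d$), while for fixed $d\geq 0$ the map $(\nu,\lambda^{(2)})\mapsto\phi$ is convex (a nonnegative multiple of the norm of an affine map, plus a linear term). The easy half of the claim, $\xi_{dual}\geq\xi_{ov}$, is then the generic weak-duality inequality $\sup_{d}\inf_{\nu,\lambda^{(2)}}\phi\leq\inf_{\nu,\lambda^{(2)}}\sup_{d}\phi$.

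For the reverse inequality I would invoke Sion's minimax theorem, whose only missing hypothesis is compactness of one of the two domains; both $\{d\geq 0\}$ and the cone in $(\nu,\lambda^{(2)})$ are unbounded. I would remove the difficulty on the $d$ side by coercivity: for each $d$ the inner value is $\min_{\nu,\lambda^{(2)}}\phi=m(d)-\|\g\|_2\sqrt{d^2+\sigma^2}$, where $m(d)=\min_{\nu,\lambda^{(2)}}(d\|\h+\nu\z^{(1)}-\lambda^{(2)}\|_2+\sum_{i}\lambda_i^{(2)}\xtilde_i)$. Evaluating at the fixed feasible point $\nu_0=\max_i|\h_i|$, $\lambda_{0,i}=\h_i+\nu_0$ (which annihilates the norm and is feasible since $0\leq\h_i+\nu_0\leq 2\nu_0$) shows $m(d)\leq K$ for a constant $K$ independent of $d$; hence $\min_{\nu,\lambda^{(2)}}\phi\to-\infty$ as $d\to\infty$, and the outer maximum defining $-\xi_{dual}$ is attained on a compact interval $[0,D]$, with an explicit $D$ found by comparing against the value at $d=0$.

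With $d$ confined to the compact convex set $[0,D]$ and $(\nu,\lambda^{(2)})$ ranging over the merely convex cone, Sion's theorem applies and gives $\max_{d\in[0,D]}\min_{\nu,\lambda^{(2)}}\phi=\min_{\nu,\lambda^{(2)}}\max_{d\in[0,D]}\phi$, whose left side equals $-\xi_{dual}$ by the coercivity step. To identify the right side with $-\xi_{ov}$ I would use the regime assumption of Subsection \ref{sec:connectl1}: below the fundamental $\ell_1$ threshold one has, with overwhelming probability, $\|\g\|_2>\|\h+\hat\nu\z^{(1)}-\widehat{\lambda^{(2)}}\|_2$, so at the minimizer $(\nu^*,\lambda^{(2)*})$ the quantity $c^*=\|\h+\nu^*\z^{(1)}-\lambda^{(2)*}\|_2$ is strictly below $\|\g\|_2$. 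Then the unconstrained-in-$d$ maximizer $d^*=\sigma c^*/\sqrt{\|\g\|_2^2-(c^*)^2}$ is finite, and enlarging $D$ if necessary so that $d^*\leq D$ makes $\max_{d\in[0,D]}\phi(\cdot,\nu^*,\lambda^{(2)*})=\max_{d\geq 0}\phi(\cdot,\nu^*,\lambda^{(2)*})$. This forces $\min_{\nu,\lambda^{(2)}}\max_{d\in[0,D]}\phi=\min_{\nu,\lambda^{(2)}}\max_{d\geq 0}\phi=-\xi_{ov}$, and combining the two identifications yields $-\xi_{dual}=-\xi_{ov}$, i.e. (\ref{eq:devublemmadet2}).

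The step I expect to be the real obstacle is the justification that restricting $d$ to $[0,D]$ changes neither the $\max$--$\min$ nor the $\min$--$\max$ value. The $\max$--$\min$ side follows cleanly from coercivity, but the $\min$--$\max$ side genuinely requires the strict inequality $c^*<\|\g\|_2$; if $c^*=\|\g\|_2$ the map $d\mapsto\phi(\cdot,\nu^*,\lambda^{(2)*})$ is increasing and its supremum over $d\geq 0$ is not attained, so the truncation would strictly lower the value. It is precisely the sub-threshold $\ell_1$ regime invoked above that rules this out and closes the argument.
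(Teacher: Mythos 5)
Your route---weak duality plus Sion's minimax theorem after a coercivity-based compactification of the $d$-variable---is genuinely different from the paper's proof, which never invokes a minimax theorem: the paper rewrites the $\min$--$\max$ problem (\ref{eq:devublemmadet1}) as the explicit convex program (\ref{eq:devubprimaldet}) by introducing the epigraph variables $\q_1,\q_2$, observes that the constraint $\|\h+\nu\z^{(1)}-\lambda^{(2)}\|_2\leq\q_2$ must be tight at optimality, and then checks that the Lagrange dual of (\ref{eq:devubprimaldet}) is exactly the $\max_{d\geq 0}\min$ problem (\ref{eq:devubLagran12}); strong duality for that convex program closes the argument deterministically, with no compactness issues and no appeal to the sub-threshold $\ell_1$ regime. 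Your weak-duality half, the convex--concave structure, and the coercivity step (including the nice norm-annihilating feasible point $\nu_0=\max_i|\h_i|$, $\lambda^{(2)}_{0,i}=\h_i+\nu_0$) are all correct. But there is a genuine gap at exactly the step you flag, and your proposed resolution does not close it.

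After Sion gives $-\xi_{dual}=\max_{d\in[0,D]}\min_{\nu,\lambda^{(2)}}\phi=\min_{\nu,\lambda^{(2)}}\max_{d\in[0,D]}\phi$, you must show $\min_{\nu,\lambda^{(2)}}\max_{d\in[0,D]}\phi=\min_{\nu,\lambda^{(2)}}\max_{d\geq 0}\phi=-\xi_{ov}$. Enlarging $D$ so that the inner maximizer $d^*$ at the \emph{untruncated} minimizer $(\nu^*,\lambda^{(2)*})$ satisfies $d^*\leq D$ only yields $\min_{\nu,\lambda^{(2)}}\max_{d\in[0,D]}\phi\leq\max_{d\in[0,D]}\phi(\cdot,\nu^*,\lambda^{(2)*})=-\xi_{ov}$, which is the direction you already have from weak duality; the ``this forces'' is a non sequitur. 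The needed direction, $\min_{\nu,\lambda^{(2)}}\max_{d\in[0,D]}\phi\geq-\xi_{ov}$, concerns all the \emph{other} points: writing $c=\|\h+\nu\z^{(1)}-\lambda^{(2)}\|_2$, truncation strictly lowers $\sup_{d\geq 0}\phi$ at every $(\nu,\lambda^{(2)})$ with $c\geq\|\g\|_2 D/\sqrt{D^2+\sigma^2}$---in particular it replaces the value $+\infty$ at points with $c>\|\g\|_2$ by a finite number---so a priori the truncated minimum could be attained there, strictly below $-\xi_{ov}$, no matter how large $D$ is chosen relative to $d^*$. The gap is repairable by a uniform estimate on that region: there $\max_{d\in[0,D]}\phi=Dc-\sqrt{D^2+\sigma^2}\|\g\|_2+\sum_{i=n-k+1}^{n}\lambda_i^{(2)}\xtilde_i\geq-\sigma^2\|\g\|_2/\sqrt{D^2+\sigma^2}$ (using $\lambda_i^{(2)}\xtilde_i\geq 0$), and since $\xi_{ov}>0$ with overwhelming probability in the sub-threshold regime you invoke, choosing $D$ with $\sigma^2\|\g\|_2/\sqrt{D^2+\sigma^2}\leq\xi_{ov}$ gives $\max_{d\in[0,D]}\phi\geq-\xi_{ov}$ on that region, while on its complement the truncated and untruncated inner maxima coincide. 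Alternatively, you can avoid the truncation analysis altogether by following the paper's reformulation through (\ref{eq:devubprimaldet}) and (\ref{eq:devubprimaldet4}), which also has the advantage of making the lemma a purely deterministic convex-duality identity.
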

\begin{proof}
After solving the inner maximization over $d$ in (\ref{eq:devublemmadet1}) one has
\begin{equation}
d_{opt}=\sigma\frac{\|\h+\nu\z^{(1)}-\lambda^{(2)}\|_2}{\sqrt{\|\g\|_2^2-\|\h+\nu\z^{(1)}-\lambda^{(2)}\|_2^2}}.\label{eq:devubdopt}
\end{equation}
Such a $d$ then establishes that the right-hand side of (\ref{eq:devublemmadet1}) is indeed $\xi_{ov}(\sigma,\g,\h,\xtilde)$, i.e, one has as in (\ref{eq:Lagran12})
\begin{eqnarray}
-\xi_{ov}(\sigma,\g,\h,\xtilde)=\min_{\nu,\lambda^{(2)}} & & -\sigma\sqrt{\|\g\|_2^2-\|\h+\nu\z^{(1)}-\lambda^{(2)}\|_2^2} +\sum_{i=n-k+1}^{n}\lambda_i^{(2)}\xtilde_i\nonumber \\
\mbox{subject to}
& & \nu\geq 0\nonumber \\
& & 0 \leq\lambda_i^{(2)}\leq 2\nu,1\leq i\leq n.\label{eq:devublemmadet3}
\end{eqnarray}
Now we digress for a moment and consider the following optimization problem
\begin{eqnarray}
\min_{\nu,\lambda^{(2)},\q_1,\q_2} & & -\sigma\q_1 +\sum_{i=n-k+1}^{n}\lambda_i^{(2)}\xtilde_i\nonumber \\
\mbox{subject to}
& & \|\h+\nu\z^{(1)}-\lambda^{(2)}\|_2\leq \q_2\nonumber \\
& & \q_1^2+\q_2^2\leq\|\g\|_2^2\nonumber \\
& & \nu\geq 0\nonumber \\
& & 0 \leq\lambda_i^{(2)}\leq 2\nu,1\leq i\leq n.\label{eq:devubprimaldet}
\end{eqnarray}
Let $-\xi_{ov}^{(1)}(\sigma,\g,\h,\xtilde)$ be the optimal value of its objective function.
Let quadruplet $\hat{\nu},\widehat{\lambda^{(2)}},\hat{\q_1},\hat{\q_2}$ be the solution of the above optimization problem. Then it must be
\begin{equation}
\|\h+\hat{\nu}\z^{(1)}-\widehat{\lambda^{(2)}}\|_2=\hat{\q_2} \label{eq:devuboptq2det}
\end{equation}
and consequently
\begin{eqnarray}
\hat{\q_1}&=&\sqrt{\|\g\|_2^2-\|\h+\hat{\nu}\z^{(1)}-\widehat{\lambda^{(2)}}\|_2^2}\nonumber \\
-\xi_{ov}^{(1)}(\sigma,\g,\h,\xtilde)& =&-\sigma\sqrt{\|\g\|_2^2-\|\h+\hat{\nu}\z^{(1)}-\widehat{\lambda^{(2)}}\|_2^2}+\sum_{i=n-k+1}^{n}\widehat{\lambda_i^{(2)}}\xtilde_i.\label{eq:devuboptprimaldet}
\end{eqnarray}
The above claim is rather obvious but for the completeness we sketch the argument that supports it. Assume that $\|\h+\hat{\nu}\z^{(1)}-\widehat{\lambda^{(2)}}\|_2<\hat{\q_2}$, then $\hat{\q_1}<\sqrt{\|\g\|_2^2-\|\h+\hat{\nu}\z^{(1)}-\widehat{\lambda^{(2)}}\|_2^2}$, and
$-\xi_{ov}^{(1)}(\sigma,\g,\h)$ would be smaller then the expression on the right-hand side of (\ref{eq:devuboptprimaldet}). Now, since (\ref{eq:devuboptq2det})
and (\ref{eq:devuboptprimaldet}) hold one has that $-\xi_{ov}^{(1)}(\sigma,\g,\h,\xtilde)$ can be determined through the following equivalent to (\ref{eq:devubprimaldet})
\begin{eqnarray}
-\xi_{ov}^{(1)}(\sigma,\g,\h,\xtilde)=\min_{\nu,\lambda^{(2)}} & & -\sigma\sqrt{\|\g\|_2^2-\|\h+\nu\z^{(1)}-\lambda^{(2)}\|_2^2}+\sum_{i=n-k+1}^{n}\lambda_i^{(2)}\xtilde_i\nonumber \\
\mbox{subject to}
& & \nu\geq 0\nonumber \\
& & 0 \leq\lambda_i^{(2)}\leq 2\nu,1\leq i\leq n\label{eq:devubprimaldet1}
\end{eqnarray}
After comparing (\ref{eq:devublemmadet3}) and (\ref{eq:devubprimaldet1}) we have
\begin{equation}
-\xi_{ov}^{(1)}(\sigma,\g,\h,\xtilde)=-\xi_{ov}(\sigma,\g,\h,\xtilde).\label{eq:devubdeteq1}
\end{equation}
Now, let us write the Lagrange dual of the optimization problem in (\ref{eq:devubprimaldet}). Let $d$ and $\gamma_1$ be Lagrangian variables such that
\begin{eqnarray}
\max_{d\geq 0,\gamma_1\geq 0}\min_{\nu,\lambda^{(2)},\q_1,\q_2} & & -\sigma\q_1 +\sum_{i=n-k+1}^{n}\lambda_i^{(2)}\xtilde_i+d\|\h+\nu\z^{(1)}-\lambda^{(2)}\|_2-d\q_2
+\gamma_1(\q_1^2+\q_2^2)-\gamma_1\|\g\|_2^2\nonumber \\
\mbox{subject to}
& & \nu\geq 0\nonumber \\
& & 0 \leq\lambda_i^{(2)}\leq 2\nu,1\leq i\leq n.\label{eq:devubprimaldet2}
\end{eqnarray}
After solving the inner minimization over $\q_1,\q_2$ in (\ref{eq:devubprimaldet2}) we have
\begin{eqnarray}
\max_{d\geq 0,\gamma_1\geq 0}\min_{\nu,\lambda^{(2)}} & & -\frac{\sigma^2+d^2}{4\gamma_1}-\gamma_1\|\g\|_2^2 +\sum_{i=n-k+1}^{n}\lambda_i^{(2)}\xtilde_i+d\|\h+\nu\z^{(1)}-\lambda^{(2)}\|_2\nonumber \\
\mbox{subject to}
& & \nu\geq 0\nonumber \\
& & 0 \leq\lambda_i^{(2)}\leq 2\nu,1\leq i\leq n.\label{eq:devubprimaldet3}
\end{eqnarray}
Since the first two terms in the objective function in (\ref{eq:devubprimaldet3}) do not involve neither $\nu$ nor $\lambda^{(2)}$ one can then maximize their sum over $\gamma_1$ for any $d$. After that we finally have
\begin{eqnarray}
\max_{d\geq 0}\min_{\nu,\lambda^{(2)}} & & -\sqrt{\sigma^2+d^2}\|\g\|_2 +\sum_{i=n-k+1}^{n}\lambda_i^{(2)}\xtilde_i+d\|\h+\nu\z^{(1)}-\lambda^{(2)}\|_2\nonumber \\
\mbox{subject to}
& & \nu\geq 0\nonumber \\
& & 0 \leq\lambda_i^{(2)}\leq 2\nu,1\leq i\leq n.\label{eq:devubprimaldet4}
\end{eqnarray}
Let $-\xi_{ov}^{(2)}(\sigma,\g,\h,\xtilde)$ be the optimal value of the objective function in (\ref{eq:devubprimaldet4}).
Since (\ref{eq:devubprimaldet4}) is the dual of (\ref{eq:devubprimaldet}) and since the strict duality obviously holds (the optimization problem in (\ref{eq:devubprimaldet}) is clearly convex) one has
\begin{equation}
-\xi_{ov}^{(2)}(\sigma,\g,\h,\xtilde)=-\xi_{ov}^{(1)}(\sigma,\g,\h,\xtilde).\label{eq:devubdualprimal41}
\end{equation}
 On the other hand the optimization problem in (\ref{eq:devubprimaldet4}) is the same as the one in (\ref{eq:devubLagran12}) and therefore
\begin{equation}
-\xi_{ov}^{(2)}(\sigma,\g,\h,\xtilde)=-\xi_{dual}(\sigma,\g,\h,\xtilde).\label{eq:devubdualprimal42}
\end{equation}
Connecting (\ref{eq:devubdeteq1}), (\ref{eq:devubdualprimal41}), and (\ref{eq:devubdualprimal42}) one finally has
\begin{equation}
-\xi_{dual}(\sigma,\g,\h,\xtilde)=-\xi_{ov}(\sigma,\g,\h,\xtilde)\label{eq:devublemmadetcond}
\end{equation}
which is what is stated in (\ref{eq:devublemmadet2}). This concludes the proof.
\end{proof}
Let $\hat{d},\hat{\nu},\widehat{\lambda^{(2)}}$ be the solution of (\ref{eq:devubLagran11}). Clearly, $\hat{d}=\|\hat{\w}\|_2=\sigma\frac{\|\h+\hat{\nu}\z^{(1)}-\widehat{\lambda^{(2)}}\|_2}{\sqrt{\|\g\|_2^2-\|\h+\hat{\nu}\z^{(1)}-\widehat{\lambda^{(2)}}\|_2^2}}$ and since all quantities concentrate
$E\hat{d}=E\|\hat{\w}\|_2\doteq\sigma\frac{E\|\h+\hat{\nu}\z^{(1)}-\widehat{\lambda^{(2)}}\|_2}{\sqrt{E\|\g\|_2^2-E\|\h+\hat{\nu}\z^{(1)}-\widehat{\lambda^{(2)}}\|_2^2}}$. Now, set $C_{\w_{up}}=E\|\hat{\w}\|_2$ in (\ref{eq:upperdefxi}). Then a combination of (\ref{eq:upperdefxi}), (\ref{eq:devubLagran11}), and Lemma \ref{thm:devublemmadet} gives
\begin{multline}
E\xi_{up}(\sigma,\g,\h,\xtilde,E\|\hat{\w}\|_2)=E\max_{\lambda^{(2)}\in \Lambda^{(2)},\nu\geq 0}(\sqrt{(E\|\hat{\w}\|_2)^2+\sigma^2}\|\g\|_2-E\|\hat{\w}\|_2\|\h+\nu\z^{(1)}-\lambda^{(2)})\|_2
-\sum_{i=n-k+1}^{n}\lambda_i^{(2)}\xtilde_i)\\
=E\max_{\lambda^{(2)}\in \Lambda^{(2)},\nu\geq 0}(\sqrt{(E\hat{d})^2+\sigma^2}\|\g\|_2-E\hat{d}\|\h+\nu\z^{(1)}-\lambda^{(2)})\|_2
-\sum_{i=n-k+1}^{n}\lambda_i^{(2)}\xtilde_i)\\
\doteq E\min_{d\geq 0}\max_{\lambda^{(2)}\in \Lambda^{(2)},\nu\geq 0}(\sqrt{d^2+\sigma^2}\|\g\|_2-d\|\h+\nu\z^{(1)}-\lambda^{(2)})\|_2
-\sum_{i=n-k+1}^{n}\lambda_i^{(2)}\xtilde_i)
= E \xi_{ov}(\sigma,\g,\h,\xtilde).\label{eq:devubfinal}
\end{multline}
Combining Lemma \ref{thm:upperbound} and (\ref{eq:devubfinal}) one has that with overwhelming probability there is a $\w$ such that the objective in (\ref{eq:lassol1}) is upper bounded by a quantity arbitrarily close from above to $E \xi_{ov}(\sigma,\g,\h,\xtilde)$. On the other hand Lemma \ref{thm:lowerbound} states that for any $\w$ such that $|\|\w\|_2-\|\hat{\w}\|_2\|\geq \epsilon_{w_{up}}\|\hat{\w}\|_2$, $\epsilon_{w_{up}}>0$, the objective value of (\ref{eq:lassol1}) is with overwhelming probability lower bounded by a quantity that is arbitrarily close from below to $(1+\frac{\epsilon_{w_{up}}^2}{2(1+\epsilon_{w_{up}})})E \xi_{ov}(\sigma,\g,\h,\xtilde)$. Clearly then the assumption of Lemma \ref{thm:lowerbound} is unsustainable and one has that $\|\w_{lasso}\|_2$ can not deviate substantially from $\|\hat{\w}\|_2$. This then implies that with overwhelming probability the objective value of (\ref{eq:lassol1}) concentrates around $E \xi_{ov}(\sigma,\g,\h,\xtilde)$  and consequently that $\|\w_{lasso}\|_2$ concentrates around $E\|\hat{\w}\|_2$.

\subsection{Connecting all pieces}\label{sec:connectpieces}

In this section we connect all of the above. We will summarize the results obtained so far in the following theorem.
\begin{theorem}
Let $\v$ be an $n\times 1$ vector of i.i.d. zero-mean variance $\sigma^2$ Gaussian random variables and let $A$ be an $m\times n$ matrix of i.i.d. standard normal random variables. Further, let $\g$ and $\h$ be $m\times 1$ and $n\times 1$ vectors of i.i.d. standard normals, respectively. Consider a $k$-sparse $\xtilde$ defined in (\ref{eq:xtildedef}) and a $\y$ defined in (\ref{eq:systemnoise}) for $\x=\xtilde$. Let the solution of (\ref{eq:lassol1}) be $\hat{\x}$ and let the so-called error vector of LASSO from (\ref{eq:lassol1}) be $\w_{lasso}=\hat{\x}-\xtilde$. Let $n$ be large and let constants $\alpha=\frac{m}{n}$ and $\beta=\frac{k}{n}$ be below the fundamental characterization (\ref{eq:fundl1}). Consider the following optimization problem:
\begin{eqnarray}
\xi_{ov}(\sigma,\g,\h,\xtilde)=\max_{\nu,\lambda^{(2)}} & & \sigma\sqrt{\|\g\|_2^2-\|\h+\nu\z^{(1)}-\lambda^{(2)}\|_2^2} -\sum_{i=n-k+1}^{n}\lambda_i^{(2)}\xtilde_i\nonumber \\
\mbox{subject to}
& & \nu\geq 0\nonumber \\
& & 0 \leq\lambda_i^{(2)}\leq 2\nu,1\leq i\leq n.\label{eq:mainlasso1}
\end{eqnarray}
Let $\hat{\nu}$ and $\widehat{\lambda^{(2)}}$ be the solution of (\ref{eq:mainlasso1}). Set
\begin{equation}
\|\hat{\w}\|_2=\sigma\frac{\|\h+\hat{\nu}\z^{(1)}-\widehat{\lambda^{(2)}}\|_2}{\sqrt{\|\g\|_2^2-\|\h+\hat{\nu}\z^{(1)}-\widehat{\lambda^{(2)}}\|_2^2}}.\label{eq:mainlasso2}
\end{equation}
Then:
\begin{equation}
P((1-\epsilon_1^{(lasso)})E\xi_{ov}(\sigma,\g,\h,\xtilde)\leq \|\y-A\hat{\x}\|_2
\leq (1+\epsilon_1^{(lasso)})E\xi_{ov}(\sigma,\g,\h,\xtilde))=1-e^{-\epsilon_2^{(lasso)}n}\label{eq:mainlasso3}
\end{equation}
and
\begin{equation}
P((1-\epsilon_1^{(lasso)})E\|\hat{\w}\|_2\leq \|\w_{lasso}\|_2
\leq (1+\epsilon_1^{(lasso)})E\|\hat{\w}\|_2) =1-e^{-\epsilon_2^{(lasso)}n},\label{eq:mainlasso4}
\end{equation}
where $\epsilon_1^{(lasso)}>0$ is an arbitrarily small constant and $\epsilon_2^{(lasso)}$ is a constant dependent on $\epsilon_1^{(lasso)}$ and $\sigma$ but independent of $n$.
\label{thm:mainlasso}
\end{theorem}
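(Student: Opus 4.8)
The plan is to recognize that this theorem is the capstone that assembles the four building blocks forged in the preceding subsections into a single squeezing argument, simultaneously pinning down $\zeta_{obj}=\|\y-A\hat{\x}\|_2$ and the error norm $\|\w_{lasso}\|_2$. The argument rests on the reduction recorded in Section~\ref{sec:connectl1}: once $(\alpha,\beta)$ lies below the fundamental $\ell_1$ characterization (\ref{eq:fundl1}), the condition (\ref{eq:condoptsollower}) is inactive with overwhelming probability, so we are permanently in the ``overwhelming'' regime and the governing quantity is $\xi_{ov}(\sigma,\g,\h,\xtilde)$ of (\ref{eq:mainlasso1}). Throughout I would carry the a~priori constant $C_\w$ of (\ref{eq:assumplasso}) merely as a crutch that licenses the Lipschitz concentration of Lemmas~\ref{thm:lipschunsigned} and~\ref{thm:upperlipsch1}; the entire point of the endgame is to replace it by the sharp value $E\|\hat{\w}\|_2$.

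First I would nail the upper half of the sandwich. Setting $C_{\w_{up}}=E\|\hat{\w}\|_2$, the chain of identities in (\ref{eq:devubfinal}) --- whose engine is the minimax exchange proved in Lemma~\ref{thm:devublemmadet} --- shows that $E\xi_{up}(\sigma,\g,\h,\xtilde,E\|\hat{\w}\|_2)$ coincides, up to arbitrarily small corrections, with $E\xi_{ov}(\sigma,\g,\h,\xtilde)$. Feeding this into Lemma~\ref{thm:upperbound} produces, with overwhelming probability, a feasible $\w$ with $\|\w\|_2$ arbitrarily close to $E\|\hat{\w}\|_2$ whose objective value is at most $(1+\epsilon)E\xi_{ov}(\sigma,\g,\h,\xtilde)$. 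Since $\zeta_{obj}$ is the minimum of (\ref{eq:lassol1}), this already gives the right-hand inequality of (\ref{eq:mainlasso3}); the matching lower inequality comes directly from the generic bound of Lemma~\ref{thm:lowerbound}, which in the overwhelming regime reads $(1-\epsilon_{lip})E\xi_{ov}$ minus $o(n)$-corrections. Together these squeeze $\zeta_{obj}$ onto $E\xi_{ov}(\sigma,\g,\h,\xtilde)$ and yield (\ref{eq:mainlasso3}).

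Next I would run the contradiction that forbids $\|\w_{lasso}\|_2$ from straying, which is what upgrades the objective estimate into the error estimate (\ref{eq:mainlasso4}). Suppose, as in Section~\ref{sec:devlb}, that $|\,\|\w_{lasso}\|_2-\|\hat{\w}\|_2\,|\geq \epsilon_{\w_{up}}\|\hat{\w}\|_2$ for a fixed $\epsilon_{\w_{up}}>0$. Then Lemma~\ref{thm:matchlowerbound}, whose quantitative heart is the strictly positive surplus $\frac{\epsilon_{\w_{up}}^2}{2(1+\epsilon_{\w_{up}})}E\xi_{ov}$ isolated in (\ref{eq:matchdiff6}), forces $\zeta_{obj}$ to exceed $(1-\epsilon_{lip})\bigl(1+\frac{\epsilon_{\w_{up}}^2}{2(1+\epsilon_{\w_{up}})}\bigr)E\xi_{ov}(\sigma,\g,\h,\xtilde)$ with overwhelming probability. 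For $\epsilon_{\w_{up}}$ fixed and all remaining tolerances chosen small enough, this strictly overshoots the upper bound $(1+\epsilon)E\xi_{ov}$ from the previous step, contradicting the optimality of the LASSO solution. Hence $\|\w_{lasso}\|_2$ lies within $\epsilon_{\w_{up}}\|\hat{\w}\|_2$ of $\|\hat{\w}\|_2$; since $\epsilon_{\w_{up}}$ is arbitrary and $\|\hat{\w}\|_2$ itself concentrates about $E\|\hat{\w}\|_2$ by the Lipschitz concentration of its constituent norms recorded in (\ref{eq:conchw}), this delivers (\ref{eq:mainlasso4}). The explicit formula (\ref{eq:mainlasso2}) is simply the overwhelming branch of (\ref{eq:defhatwnorm}) from Lemma~\ref{thm:optsollower}, itself read off from (\ref{eq:solvew3}) at the optimal $\hat{\nu},\widehat{\lambda^{(2)}}$.

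The main obstacle is the bookkeeping of the hierarchy of $\epsilon$'s so that the strict separation genuinely survives: I must fix $\epsilon_{\w_{up}}$ first and only then choose $\epsilon_{lip},\epsilon_1^{(\h)},\epsilon_1^{(g)},\epsilon_3^{(g)}$ and the concentration tolerances of (\ref{eq:conchw}) to be of smaller order than $\epsilon_{\w_{up}}^2$, so that the quadratic-in-$\epsilon_{\w_{up}}$ lower-bound surplus dominates the linear-in-$\epsilon$ slack in the upper bound. A secondary subtlety is justifying the passage from the crutch constant $C_\w$ to the sharp $E\|\hat{\w}\|_2$: one must verify that tightening the feasible set from $\|\w\|_2\leq C_\w$ to $\|\w\|_2=\|\w_{off}\|_2$ leaves both the Gordon-type bound and the concentration constant $\sqrt{2C_\w^2+\sigma^2}$ intact, which is exactly the structural equivalence of $\xi_{off}$ and $\xi_{up}$ noted in Section~\ref{sec:devlb}. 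Once these are in place the two probabilities combine into the single overwhelming bound $1-e^{-\epsilon_2^{(lasso)}n}$ claimed in (\ref{eq:mainlasso3}) and (\ref{eq:mainlasso4}).
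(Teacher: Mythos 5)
Your proposal is correct and follows essentially the same route as the paper: the paper's proof of Theorem \ref{thm:mainlasso} is exactly the assembly you describe --- the discussion of Section \ref{sec:connectl1} to discharge condition (\ref{eq:condoptsollower}) below (\ref{eq:fundl1}), Lemma \ref{thm:lowerbound} for the lower half, Lemma \ref{thm:upperbound} with $C_{\w_{up}}=E\|\hat{\w}\|_2$ sharpened via Lemma \ref{thm:devublemmadet} and (\ref{eq:devubfinal}) for the upper half, the contradiction against Lemma \ref{thm:matchlowerbound}'s quadratic surplus (\ref{eq:matchdiff6}) to pin $\|\w_{lasso}\|_2$, and (\ref{eq:defhatwnorm}) of Lemma \ref{thm:optsollower} for the formula (\ref{eq:mainlasso2}). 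Your ordering of the tolerances (fixing $\epsilon_{\w_{up}}$ first so its quadratic surplus dominates the remaining linear slacks) is also precisely the bookkeeping the paper carries out in Sections \ref{sec:devlb} and \ref{sec:devub}.
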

\begin{proof}
Follows from the above discussion and a combination of (\ref{eq:Lagran12}), Lemma \ref{thm:optsollower}, discussion in Section \ref{sec:connectl1}, and Lemmas \ref{thm:lowerbound}, \ref{thm:upperbound}, and \ref{thm:devublemmadet}.
\end{proof}

It may not be clear immediately but the result presented in the above theorem is incredibly powerful. Among other things, it enables one to precisely estimate the norm of the error vector in ``noisy" under-determined systems of linear equations. Moreover, it can do so for any given $k$-sparse vector $\xtilde$. Furthermore, all of it is done through a transformation of the original LASSO from (\ref{eq:lassol1}) to a much simpler optimization program (\ref{eq:mainlasso1}). While many quantities of interest in LASSO recovery can be computed through the mechanism presented above, below we focus only on quantities that relate to what we will call LASSO's \emph{generic} performance. Computation of all other quantities that we consider are of interest will be presented in a series of forthcoming papers.

\subsubsection{LASSO's generic performance}\label{sec:generic}

The results presented in the above theorem are fairly general and pertain to pretty much any possible scenario one can imagine. Here we will focus on the so-called ``worst-case" scenario or as we will refer to it ``generic" performance scenario. We will now show that $E\|\hat{\w}\|_2$ from Theorem \ref{thm:mainlasso} can be upper-bounded over the set of all $\xtilde$'s. To that end let us assume that all nonzero components of $\xtilde$ are infinite. The optimization problem from (\ref{eq:mainlasso1}) then becomes
\begin{eqnarray}
\xi_{ov}^{(gob)}(\sigma,\g,\h)=\max_{\nu,\lambda^{(2)}} & & \sigma\sqrt{\|\g\|_2^2-\|\h+\nu\z^{(1)}-\lambda^{(2)}\|_2^2} \nonumber \\
\mbox{subject to}
& & \nu\geq 0\nonumber \\
& & 0 \leq\lambda_i^{(2)}=0,n-k+1\leq i\leq n\nonumber \\
& & 0 \leq\lambda_i^{(2)}\leq 2\nu, 1\leq i\leq n-k.\label{eq:genlasso1}
\end{eqnarray}
Let $\nu_{gen}$ and $\lambda^{(gen)}$ be the solution of (\ref{eq:genlasso1}) and let $\w_{gen}$ be the error vector in case when all nonzero compoenents of $\xtilde$ are infinite (in Section \ref{sec:connectl1} for a slightly changed version of (\ref{eq:genlasso1}) $\nu_{gen}$ and $\lambda^{(gen)}$ were referred to as $\nu_{\ell_1}$ and $\lambda^{(\ell_1)}$). Now, let us assume that some of nonzero components of $\xtilde$ in (\ref{eq:mainlasso1}) are finite. And let as usual $\hat{\nu}$ and $\widehat{\lambda^{(2)}}$ be the solution of (\ref{eq:mainlasso1}) and let $\|\hat{\w}\|_2$ be the norm of the LASSO's error vector. Since
\begin{equation}
\sigma\sqrt{\|\g\|_2^2-\|\h+\hat{\nu}\z^{(1)}-\widehat{\lambda^{(2)}}\|_2^2} -\sum_{i=n-k+1}^{n}\widehat{\lambda_i^{(2)}}\xtilde_i\geq
\sigma\sqrt{\|\g\|_2^2-\|\h+\nu_{gen}\z^{(1)}-\lambda^{(gen)}\|_2^2}
\end{equation}
and $\widehat{\lambda_i^{(2)}}\geq 0,n-k+1\leq i\leq n$, one has that
\begin{equation}
\|\h+\hat{\nu}\z^{(1)}-\widehat{\lambda^{(2)}}\|_2 \leq \|\h+\nu_{gen}\z^{(1)}-\lambda^{(gen)}\|_2.\label{eq:genlasso2}
\end{equation}
Furthermore, one then has for the norm of error vectors
\begin{equation}
\|\hat{\w}\|_2=\sigma\frac{\|\h+\hat{\nu}\z^{(1)}-\widehat{\lambda^{(2)}}\|_2}{\sqrt{\|\g\|_2^2-{\|\h+\hat{\nu}\z^{(1)}-\widehat{\lambda^{(2)}}\|_2^2}}} \leq
\sigma\frac{\|\h+\nu_{gen}\z^{(1)}-\lambda^{(gen)}\|_2}{\sqrt{\|\g\|_2^2-{\|\h+\nu_{gen}\z^{(1)}-\lambda^{(gen)}\|_2^2}}} =\|\w_{gen}\|_2.\label{eq:genlasso3}
\end{equation}
Then the following \emph{generic} equivalent to Theorem \ref{thm:mainlasso} can be established.
\begin{theorem}
Assume the setup of Theorem \ref{thm:mainlasso}. Consider the following optimization problem:
\begin{eqnarray}
\xi_{ov}^{(gen)}(\sigma,\g,\h)=\min_{\nu,\lambda^{(2)}} & & \|\h+\nu\z^{(1)}-\lambda^{(2)}\|_2\nonumber \\
\mbox{subject to}
& & \nu\geq 0\nonumber \\
& & \lambda_i^{(2)}=0,n-k+1\leq i\leq n\nonumber \\
& & 0 \leq\lambda_i^{(2)}\leq 2\nu, 1\leq i\leq n-k.\label{eq:genlasso4}
\end{eqnarray}
Let $\nu_{gen}$ and $\lambda^{(gen)}$ be the solution of (\ref{eq:genlasso4}). Set
\begin{equation}
\|\w_{gen}\|_2=\sigma\frac{\|\h+\nu_{gen}\z^{(1)}-\lambda^{(gen)}\|_2}{\sqrt{\|\g\|_2^2-\|\h+\nu_{gen}\z^{(1)}-\lambda^{(gen)}\|_2^2}}.\label{eq:genlasso5}
\end{equation}
Then:
\begin{eqnarray}
P(\exists\w_{lasso}|\|\w_{lasso}\|_2\in((1-\epsilon_1^{(lasso)})E\|\w_{gen}\|_2, (1+\epsilon_1^{(lasso)})E\|\w_{gen}\|_2)) & \geq & 1-e^{-\epsilon_2^{(lasso)}n}\nonumber \\
P(\|\w_{lasso}\|_2\leq (1+\epsilon_1^{(lasso)})E\|\w_{gen}\|_2)) & \geq & 1-e^{-\epsilon_3^{(lasso)}n},\nonumber \\\label{eq:genlasso6}
\end{eqnarray}
where $\epsilon_1^{(lasso)}>0$ is an arbitrarily small constant and $\epsilon_2^{(lasso)}$ and $\epsilon_3^{(lasso)}$ are constants dependent on $\epsilon_1^{(lasso)}$ and $\sigma$ but independent of $n$.
\label{thm:genlasso}
\end{theorem}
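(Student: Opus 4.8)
The plan is to deduce the generic statement directly from Theorem \ref{thm:mainlasso} by maximizing $E\|\hat{\w}\|_2$ over all admissible $\xtilde$. First I would record that the two programs (\ref{eq:genlasso1}) and (\ref{eq:genlasso4}) are equivalent: since $\|\g\|_2$ is independent of $(\nu,\lambda^{(2)})$ and the map $t\mapsto\sigma\sqrt{\|\g\|_2^2-t^2}$ is strictly decreasing on $[0,\|\g\|_2)$, maximizing the objective in (\ref{eq:genlasso1}) is the same as minimizing $\|\h+\nu\z^{(1)}-\lambda^{(2)}\|_2$ under the identical constraints, i.e.\ (\ref{eq:genlasso4}). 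Hence $\nu_{gen},\lambda^{(gen)}$ is the common optimizer and $\|\w_{gen}\|_2$ in (\ref{eq:genlasso5}) coincides with the quantity appearing in (\ref{eq:genlasso3}).

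Next I would establish the worst-case upper bound, which is the content of (\ref{eq:genlasso2})--(\ref{eq:genlasso3}). For any fixed $k$-sparse $\xtilde$ with optimizer $\hat{\nu},\widehat{\lambda^{(2)}}$ of (\ref{eq:mainlasso1}), feasibility of $(\nu_{gen},\lambda^{(gen)})$ in (\ref{eq:mainlasso1})---note $\lambda_i^{(gen)}=0$ on the support kills the penalty term there---gives the displayed inequality preceding (\ref{eq:genlasso2}). Since $\widehat{\lambda_i^{(2)}}\ge 0$ and $\xtilde_i\ge 0$, the penalty $\sum_{i=n-k+1}^n\widehat{\lambda_i^{(2)}}\xtilde_i$ is nonnegative and can be dropped, yielding $\|\h+\hat{\nu}\z^{(1)}-\widehat{\lambda^{(2)}}\|_2\le\|\h+\nu_{gen}\z^{(1)}-\lambda^{(gen)}\|_2$. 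Monotonicity of $t\mapsto\sigma t/\sqrt{\|\g\|_2^2-t^2}$ transfers this to $\|\hat{\w}\|_2\le\|\w_{gen}\|_2$, uniformly in $\xtilde$. Invoking Theorem \ref{thm:mainlasso} together with the concentration of $\|\w_{gen}\|_2$ about $E\|\w_{gen}\|_2$---the same Lipschitz argument as in Lemmas \ref{thm:lipschunsigned} and \ref{thm:upperlipsch1}, applied to the Gaussian vectors $\g,\h$---then produces the second line of (\ref{eq:genlasso6}).

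For the first (achievability) line I would exhibit an $\xtilde$ for which the bound is essentially attained. Taking the nonzero components of $\xtilde$ arbitrarily large forces the optimal $\widehat{\lambda_i^{(2)}}$ on the support to $0$ (otherwise the penalty $\widehat{\lambda_i^{(2)}}\xtilde_i$ blows up and the point cannot be optimal), so $(\hat{\nu},\widehat{\lambda^{(2)}})\to(\nu_{gen},\lambda^{(gen)})$ and $\|\hat{\w}\|_2\to\|\w_{gen}\|_2$. Choosing the components large but finite places $\|\hat{\w}\|_2$ within any prescribed $\epsilon$ of $\|\w_{gen}\|_2$, and Theorem \ref{thm:mainlasso} for this $\xtilde$ then delivers the two-sided interval in the first line. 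Throughout, I would keep in mind that the hypothesis that $(\alpha,\beta)$ lies below the fundamental $\ell_1$ characterization (\ref{eq:fundl1}) is exactly what guarantees, via Section \ref{sec:connectl1}, that $\|\g\|_2>\|\h+\nu_{gen}\z^{(1)}-\lambda^{(gen)}\|_2$ with overwhelming probability, so that $\|\w_{gen}\|_2$ in (\ref{eq:genlasso5}) is finite and all the square roots are well defined.

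The hard part will be the limiting ``infinite $\xtilde$'' step: Theorem \ref{thm:mainlasso} is stated for genuine finite $k$-sparse $\xtilde$, so the supremum over $\xtilde$ must be handled by an explicit large-but-finite approximation rather than literally setting $\xtilde_i=\infty$. One has to check that the optimizer of (\ref{eq:mainlasso1}) varies continuously enough in $\xtilde$ near the support-forcing regime that $\|\hat{\w}\|_2\to\|\w_{gen}\|_2$, and that all the auxiliary $\epsilon$'s from the concentration inequalities can be fixed \emph{after} $\epsilon_1^{(lasso)}$ is chosen, so that they remain negligible relative to the target accuracy.
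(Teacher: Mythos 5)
Your proposal is correct and takes essentially the same route as the paper: the monotonicity equivalence of (\ref{eq:genlasso4}) with (\ref{eq:genlasso1}), the feasibility-of-$(\nu_{gen},\lambda^{(gen)})$ comparison with the nonnegative penalty $\sum_{i=n-k+1}^{n}\widehat{\lambda_i^{(2)}}\xtilde_i$ dropped to obtain (\ref{eq:genlasso2})--(\ref{eq:genlasso3}), and achievability via $\xtilde$ with arbitrarily large nonzero components all mirror the discussion in Section \ref{sec:generic} that constitutes the paper's proof. Your closing caveat about replacing the literal ``infinite $\xtilde$'' by a large-but-finite approximation, with the auxiliary $\epsilon$'s fixed after $\epsilon_1^{(lasso)}$, addresses a step the paper itself treats only informally, so your version is if anything slightly more careful there.
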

\begin{proof}
Follows from the above discussion, Theorem \ref{thm:mainlasso}, and by noting that the optimization problems in (\ref{eq:genlasso4}) and (\ref{eq:genlasso1}) are equivalent.
\end{proof}
The following corollary then provides a quick way of computing the concentrating point of the ``worst case" norm of the error vector.
\begin{corollary}
Assume the setup of Theorems \ref{thm:mainlasso} and \ref{thm:genlasso}. Let $\alpha=\frac{m}{n}$ and $\beta_w=\frac{k}{n}$. Then
\begin{eqnarray}
P(\exists\w_{lasso}|\|\w_{lasso}\|_2\in((1-\epsilon_1^{(lasso)})\sigma\sqrt{\frac{\alpha_w}{\alpha-\alpha_w}}, (1+\epsilon_1^{(lasso)})\sigma\sqrt{\frac{\alpha_w}{\alpha-\alpha_w}})) & \geq & 1-e^{-\epsilon_2^{(lasso)}n}\nonumber \\
P(\|\w_{lasso}\|_2\leq (1+\epsilon_1^{(lasso)})\sigma\sqrt{\frac{\alpha_w}{\alpha-\alpha_w}}) & \geq & 1-e^{-\epsilon_2^{(lasso)}n},\nonumber \\\label{eq:genlasso7}
\end{eqnarray}
where $\alpha_w<\alpha$ is such that
\begin{equation}
(1-\beta_w)\frac{\sqrt{\frac{2}{\pi}}e^{-(\erfinv(\frac{1-\alpha_w}{1-\beta_w}))^2}}{\alpha_w}-\sqrt{2}\erfinv (\frac{1-\alpha_w}{1-\beta_w})=0
\label{eq:genfundl1}
\end{equation}
and $\epsilon_1^{(lasso)}>0$ is an arbitrarily small constant and $\epsilon_2^{(lasso)}$ is a constant dependent on $\epsilon_1^{(lasso)}$ and $\sigma$ but independent of $n$.
\label{thm:gencomperror}
\end{corollary}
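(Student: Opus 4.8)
The plan is to start from Theorem \ref{thm:genlasso}, which already reduces the task to evaluating the concentrating point of $\|\w_{gen}\|_2$ as given in (\ref{eq:genlasso5}). Since $\|\w_{gen}\|_2=\sigma\|\h+\nu_{gen}\z^{(1)}-\lambda^{(gen)}\|_2/\sqrt{\|\g\|_2^2-\|\h+\nu_{gen}\z^{(1)}-\lambda^{(gen)}\|_2^2}$, it suffices to identify the limiting values of the two Gaussian functionals $\|\g\|_2^2$ and $h_{gen}^2:=\|\h+\nu_{gen}\z^{(1)}-\lambda^{(gen)}\|_2^2$, each normalized by $n$. The first is immediate: $\|\g\|_2^2$ is a $\chi^2$ variable with $m=\alpha n$ degrees of freedom, so $\|\g\|_2^2/n\to\alpha$ with overwhelming probability.

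The crucial observation is that the minimization (\ref{eq:genlasso4}) defining $h_{gen}$ does not involve $\g$ at all, so $h_{gen}^2/n$ concentrates at a value that depends only on $\beta_w$. To compute it I would invoke the equivalences established in Section \ref{sec:connectl1}: after the change of variables (\ref{eq:defhbar}) and the sign-flip vector $\z^{(2)}$, the generic program (\ref{eq:genlasso4}) is exactly the $\ell_1$ escape-through-a-mesh program from \cite{StojnicCSetam09,StojnicUpper10}, so that $h_{gen}=\|\bar{\h}-\nu_{\ell_1}\z^{(2)}+\lambda^{(\ell_1)}\|_2$. Carrying out the inner optimization over $\lambda^{(2)}$ coordinatewise turns it into soft-thresholding: on the first $n-k$ coordinates each term contributes $(\max(|\h_i|-\nu,0))^2$ while on the last $k$ it contributes $(\h_i+\nu)^2$, after which one minimizes over $\nu\ge 0$. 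Passing to expectations through the Lipschitz concentration of Lemma \ref{thm:lipsch}, the stationarity condition in $\nu$ becomes a Gaussian-integral identity whose solution is the fundamental $\ell_1$ threshold, which is precisely the content of (\ref{eq:fundl1exp})--(\ref{eq:fundl1}). At the critical pair the defining equality $E\|\g\|_2=E\|\bar{\h}-\nu_{\ell_1}\z^{(2)}+\lambda^{(\ell_1)}\|_2$ forces $h_{gen}^2/n\to\alpha_w$, where $\alpha_w$ solves (\ref{eq:genfundl1}).

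With these two limits in hand the corollary follows by direct substitution: dividing numerator and denominator inside (\ref{eq:genlasso5}) by $n$ gives $\|\w_{gen}\|_2\to\sigma\sqrt{(h_{gen}^2/n)/(\|\g\|_2^2/n-h_{gen}^2/n)}=\sigma\sqrt{\alpha_w/(\alpha-\alpha_w)}$. The two probability statements in (\ref{eq:genlasso7}) are then inherited from the corresponding statements (\ref{eq:genlasso6}) of Theorem \ref{thm:genlasso}, with $E\|\w_{gen}\|_2$ replaced by its now-explicit value. The exponential rates carry over because $\|\g\|_2$ and $h_{gen}$ are $1$-Lipschitz functionals of standard Gaussian vectors (the latter being a distance to a convex cone), so Lemma \ref{thm:lipsch} applies, and the ratio is a smooth function of these concentrating quantities that stays bounded away from the singularity $\|\g\|_2^2=h_{gen}^2$ precisely because $(\alpha,\beta_w)$ lies strictly below the curve (\ref{eq:fundl1}).

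The main obstacle is the identification $h_{gen}^2/n\to\alpha_w$ with $\alpha_w$ in the explicit $\erfinv$ form (\ref{eq:genfundl1}); everything else is bookkeeping and concentration. This step requires correctly reducing the constrained minimization to a one-dimensional soft-thresholding problem in $\nu$, computing the two Gaussian moments that appear, and recognizing that setting the $\nu$-derivative to zero reproduces, after the natural substitution relating $\nu$ to $\erfinv(\frac{1-\alpha_w}{1-\beta_w})$, exactly the transcendental equation (\ref{eq:genfundl1}). Since this is the computation already underlying the $\ell_1$ phase-transition result of \cite{StojnicCSetam09,StojnicUpper10,DonohoPol}, I would import it rather than redo it, and devote the care to checking that the normalization ($\|\g\|_2$ over $m=\alpha n$ coordinates versus the threshold count $\alpha_w n$) is handled consistently, so that the final ratio emerges as $\alpha_w/(\alpha-\alpha_w)$ rather than a rescaling thereof.
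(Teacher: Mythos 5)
Your proposal is correct and takes essentially the same route as the paper's own proof: reduce via Theorem \ref{thm:genlasso}, rewrite $\xi_{ov}^{(gen)}$ in the $\bar{\h}$, $\z^{(2)}$ coordinates of Section \ref{sec:connectl1} (your coordinatewise soft-thresholding reduction is exactly the computation the paper compresses into its citation), import the threshold identification from \cite{StojnicCSetam09} rather than rederiving it, and substitute into (\ref{eq:genlasso5}), inheriting the probability statements from (\ref{eq:genlasso6}). Your attention to normalization is well placed: your reading $h_{gen}^2/n\to\alpha_w$, i.e. $E\xi_{ov}^{(gen)}(\sigma,\g,\h)^2\doteq \alpha_w n$ against $E\|\g\|_2^2\doteq\alpha n$, is the consistent one, and the paper's displayed ``$\alpha_w m\doteq E\xi_{ov}^{(gen)}(\sigma,\g,\h)^2$'' is evidently a slip for $\alpha_w n$, since only the latter produces the stated ratio $\sigma\sqrt{\alpha_w/(\alpha-\alpha_w)}$.
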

\begin{proof}
Let $\bar{\h}$ and $\z^{(2)}$ be as in Section \ref{sec:connectl1}.
Then
\begin{eqnarray}
\xi_{ov}^{(gen)}(\sigma,\g,\h)=\min_{\nu,\lambda^{(2)}} & & \|\bar{\h}-\nu\z^{(2)}+\lambda^{(2)}\|_2\nonumber \\
\mbox{subject to}
& & \nu\geq 0\nonumber \\
& & \lambda_i^{(2)}=0,n-k+1\leq i\leq n\nonumber \\
& & 0 \leq\lambda_i^{(2)}\leq \nu, 1\leq i\leq n-k,\label{eq:genlasso8}
\end{eqnarray}
is equivalent to (\ref{eq:genlasso7}). Moreover
\begin{eqnarray}
E\|\w_{gen}\|_2 & \doteq & \sigma\frac{E\xi_{ov}^{(gen)}(\sigma,\g,\h)}{\sqrt{E\|\g\|_2^2-E\xi_{ov}^{(gen)}(\sigma,\g,\h)^2}}\nonumber \\
& \doteq & \sigma\sqrt{\frac{\alpha_w}{\alpha-\alpha_w}},\label{eq:genlasso9}
\end{eqnarray}
where $\alpha_w m\doteq E\xi_{ov}^{(gen)}(\sigma,\g,\h)^2$ is one of the main contributions of \cite{StojnicCSetam09}.
The rest then trivially follows from (\ref{eq:genlasso6}).
\end{proof}
Using (\ref{eq:genfundl1}) and (\ref{eq:genlasso1}) one can then for any $\sigma$ and any pair $(\alpha,\beta_w)$ (that is below fundamental characterization (\ref{eq:fundl1})) determine the value of the worst case $E\|\w_{lasso}\|_2$ as $\sigma\sqrt{\frac{\alpha_w}{\alpha-\alpha_w}}$. We present the obtained results in Figure \ref{fig:lassoweakthr}. For several fixed values of worst case $E\|\w_{lasso}\|_2$ we determine curves of points $(\alpha,\beta_w)$ for which these fixed values are achieved (of course for any $\alpha$ that is below a curve the value of the corresponding worst case $E\|\w_{lasso}\|_2$ is smaller). As can be seen from the plots the lower the norm-2 of the error vector the smaller the allowable region for pairs $(\alpha,\beta_w)$.

The results of the above corollary match those obtained in \cite{DonMalMon10,BayMon10lasso} through a state evolution/bilief propagation type of analysis. The above corollary relates to the LASSO from (\ref{eq:lassol1}) whereas the results from \cite{DonMalMon10,BayMon10lasso} are derived for somewhat different LASSO from (\ref{eq:biglasso}). However, as mentioned earlier, in Section \ref{sec:connectlasso} we will establish a nice connection between the LASSO from (\ref{eq:lassol1}) and one that is fairly similar to (\ref{eq:biglasso}).
\begin{figure}[htb]
\centering
\centerline{\epsfig{figure=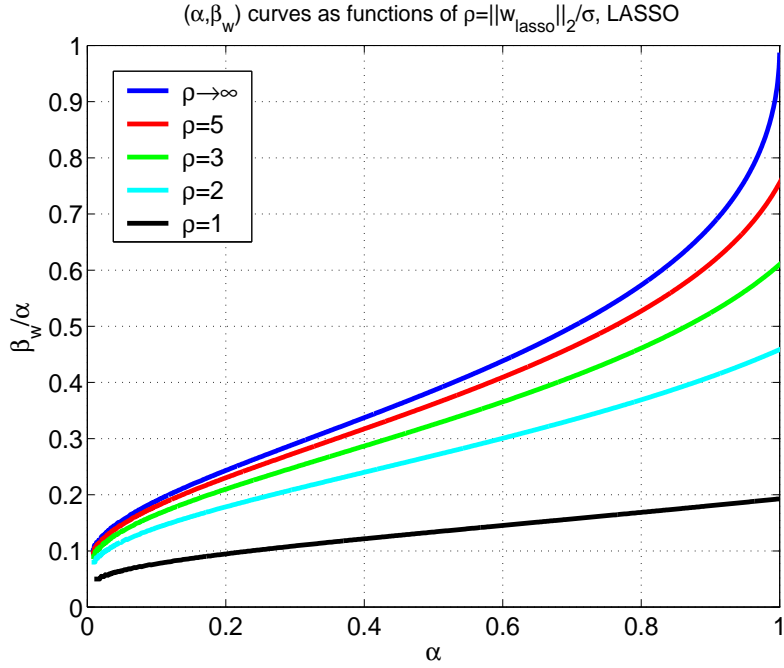,width=10.5cm,height=9cm}}
\vspace{0in} \caption{$(\alpha,\beta_w)$ curves as functions of $\rho=\frac{E\|\w_{lasso}\|_2}{\sigma}$ for LASSO algorithm from (\ref{eq:lassol1})}
\label{fig:lassoweakthr}
\end{figure}

\section{LASSO's performance analysis framework -- signed $\x$} \label{sec:signed}

In this section we show how the LASSO's performance analysis framework developed in the previous section can be specialized to the case when signals are \emph{a priori} known to have nonzero components of certain sign. All major assumptions stated at the beginning of the previous section will continue to hold in this section as well; namely, we will continue to consider matrices $A$ with i.i.d. standard normal random variables; elements of $\v$ will again be i.i.d. Gaussian random variables with zero mean and variance $\sigma$. The main difference, though, comes in the definition of $\xtilde$. We will in this section assume that $\xtilde$ is the original $\x$ in (\ref{eq:systemnoise}) that we are trying to recover and that it is \emph{any} $k$-sparse vector with a given fixed location of its nonzero elements and with a priori known signs of its elements. Given the statistical context,
it will be fairly easy to see later on that everything that we will present in this section will be irrelevant with respect to what particular location and what particular combination of signs of nonzero elements are chosen. We therefore for the simplicity of the exposition and without loss of generality assume that the components $\x_{1},\x_{2},\dots,\x_{n-k}$ of $\x$ are equal to zero and the components $\x_{n-k+1},\x_{n-k+2},\dots,\x_n$ of $\x$ are greater than or equal to zero. However, differently from what was assumed in the previous section, we now assume that this information is \emph{a priori} known. That essentially means that
this information is also known to the solving algorithm. Then instead of (\ref{eq:lassol1}) one can consider its a better (``signed") version
\begin{eqnarray}
\min_{\x} & & \|\y-A\x\|_2 \nonumber \\
\mbox{subject to} & & \|\x\|_1\leq \|\xtilde\|_1\nonumber \\
& & \x_{i}\geq 0,1\leq i\leq n.
\label{eq:lassol1non}
\end{eqnarray}

In what follows we will mimic the procedure presented in the previous section, skip all the obvious parallels, and emphasize the points that are different.
The framework that we will present below will again center around finding the optimal value of the objective function in (\ref{eq:lassol1non}). In the first of the following two subsections we will create a lower bound on this optimal value (this will essentially amount to creating a procedure that is analogous to the one presented in Section \ref{sec:unsignedlbzetaobj}). We will then afterwards in the second of the subsections create an upper bound on this optimal value. As it was done in the case of general $\xtilde$ in the previous section we will in the third subsection show that the two bounds actually match. To make further writing easier and clearer we set already here
\begin{eqnarray}
\zeta_{obj+}=\min_{\x} & & \|\y-A\x\|_2 \nonumber \\
\mbox{subject to} & & \|\x\|_1\leq \|\xtilde\|_1\nonumber \\
& & \x_i\geq 0,1\leq i\leq n.\label{eq:objlassol1non}
\end{eqnarray}

\subsection{Lower-bounding $\zeta_{obj+}$} \label{sec:unsignedlbzetaobjnon}

In this section we present the part of the framework that relates to finding a ``high-probability" lower bound on $\zeta_{obj}^+$.
As in the previous section we again assume that there is a (if necessary, arbitrarily large) constant $C_\w$ such that
\begin{equation}
P(\|\w_{lasso}\|_2\leq C_\w)=1-e^{-\epsilon_{C_{\w}}n}.\label{eq:assumplassonon}
\end{equation}
We again start by noting that if one knows that $\y=A\xtilde+\v$ holds then (\ref{eq:objlassol1non}) can be rewritten as
\begin{eqnarray}
\min_{\x} & & \|\v+A\xtilde-A\x\|_2 \nonumber \\
\mbox{subject to} & & \|\x\|_1\leq \|\xtilde\|_1\nonumber \\
& & \x_i\geq 0,1\leq i\leq n.\label{eq:objlassol11non}
\end{eqnarray}
After a small change of variables, $\x=\xtilde+\w$, one has an equivalent to (\ref{eq:objlassol13})
\begin{eqnarray}
\min_{\w} & & \|A_{\v}\begin{bmatrix} \w\\\sigma\end{bmatrix}\|_2 \nonumber \\
\mbox{subject to} & & \sum_{i=1}^{n}\w_i\leq 0\nonumber \\
& & \xtilde_i+\w_i\geq 0,1\leq i\leq n,\label{eq:objlassol13non}
\end{eqnarray}
where as earlier $A_{\v}=\begin{bmatrix} -A & \v \end{bmatrix}$ is an $m\times (n+1)$ random matrix with i.i.d. standard normal components. Let
\begin{equation}
S_{\w}^+(\sigma,\xtilde,C_\w)=\{\begin{bmatrix}\w\\\sigma\end{bmatrix} \in R^{n+1}| \quad \|\w\|_2\leq C_\w \quad \mbox{and} \quad \sum_{i=1}^{n}\w_i\leq 0 \quad \mbox{and} \quad \xtilde_i+\w_i\geq 0, 1\leq i\leq n\}.\label{eq:defSnon}
\end{equation}
Further, let
\begin{equation}
f_{obj+}(\sigma,\w)=\|A_{\v}\begin{bmatrix} \w\\\sigma\end{bmatrix}\|_2 \label{eq:deffobjnon}
\end{equation}
and set,
\begin{equation}
\hspace{-.3in}\zeta_{obj+}^{(help)}=\min_{[\w^T \sigma]^T\in S_{\w}^+(\sigma,\xtilde,C_\w)} f_{obj+}(\sigma,\w)= \min_{[\w^T \sigma]^T\in S_{\w}^+(\sigma,\xtilde,C_\w)}  \|A_{\v}\begin{bmatrix} \w\\\sigma\end{bmatrix}\|_2=
\min_{[\w^T \sigma]^T\in S_{\w}^+(\sigma,\xtilde,C_\w)}\max_{\|\a\|_2=1}  \a^T A_{\v}\begin{bmatrix} \w\\\sigma\end{bmatrix}.\label{eq:objlassol14non}
\end{equation}
Now, after applying Lemma \ref{thm:unsignedlemma} and following the procedure from the previous section one has
\begin{equation}
\hspace{-0in}P(\min_{[\w^T \sigma]^T\in S_{\w}^+(\sigma,\xtilde,C_\w)}(f_{obj+}(\sigma,\w)+\sqrt{\|\w\|_2^2+\sigma^2}g)\geq \zeta_{obj+}^{(l)})\geq p_l^+.\label{eq:objlassol15non}
\end{equation}
where
\begin{equation}
p_l^+=P\left (\min_{[\w^T \sigma]^T\in S_{\w}^+(\sigma,\xtilde,C_\w)}\left (  \sqrt{\|\w\|_2^2+\sigma^2}\|\g\|_2+\sum_{i=1}^{n}\h_i\w_i\right )+\h_{n+1}\sigma \geq \zeta_{obj+}^{(l)}\right ).\label{eq:probint1non}
\end{equation}
As in previous section we will essentially show that for certain $\zeta_{obj+}^{(l)}$ this probability is close to $1$ which will imply that we have a ``high probability" lower bound on $\zeta_{obj+}$. Let
\begin{equation}
\xi_{+}(\sigma,\g,\h,\xtilde)=\min_{[\w^T \sigma]^T\in S_{\w}^+(\sigma,\xtilde,C_\w)} \left ( \sqrt{\|\w\|_2^2+\sigma^2}\|\g\|_2+\sum_{i=1}^{n}\h_i\w_i\right ).\label{eq:defxinon}
\end{equation}
Now we split the analysis into two parts. The first one will be the deterministic analysis of $\xi_+(\sigma,\g,\h,\xtilde)$ and will be presented in Subsection \ref{sec:unsigneddetnon}. In the second part (that will be presented in Subsection \ref{sec:unsignedconcnon}) we will use the results of such a deterministic analysis and continue the above probabilistic analysis applying various concentration results.

\subsubsection{Optimizing $\xi_{+}(\sigma,\g,\h,\xtilde)$} \label{sec:unsigneddetnon}

In this section we compute $\xi_{+}(\sigma,\g,\h)$. We first rewrite the optimization problem from (\ref{eq:defxinon}) in the following way
\begin{eqnarray}
\xi_{+}(\sigma,\g,\h,\xtilde)=\min_{\w} & & \sqrt{\|\w\|_2^2+\sigma^2}\|\g\|_2+\sum_{i=1}^{n}\h_i\w_i \nonumber \\
\mbox{subject to} & & \sum_{i=1}^{n}\w_i\leq 0\nonumber \\
& & \xtilde_i+\w_i\geq 0,1\leq i\leq n\nonumber \\
& & \sqrt{\|\w\|_2^2+\sigma^2}\leq \sqrt{C_\w^2+\sigma^2}.\label{eq:defxi2non}
\end{eqnarray}
The Lagrange dual of the above problem then becomes
\begin{multline}
{\cal L}(\nu,\lambda^{(2)},\w,\gamma)=\sqrt{\|\w\|_2^2+\sigma^2}\|\g\|_2+\sum_{i=1}^{n}\h_i\w_i +\nu\sum_{i=1}^n \w_i\\-\sum_{i=n-k+1}^{n}\lambda_i^{(2)}(\xtilde_i+\w_i)
-\sum_{i=1}^{n-k}\lambda_i^{(2)}\w_i+\gamma(\sqrt{\|\w\|_2^2+\sigma^2}- \sqrt{C_\w^2+\sigma^2}).\label{eq:Lagran1non}
\end{multline}
After a few further arrangements we finally have
\begin{multline}
{\cal L}(\nu,\lambda^{(2)},\w,\gamma)=\sqrt{\|\w\|_2^2+\sigma^2}(\|\g\|_2+\gamma)+\sum_{i=1}^{n}(\h_i+\nu-\lambda_i^{(2)})\w_i  -\sum_{i=n-k+1}^{n}\lambda_i^{(2)}\xtilde_i
-\gamma \sqrt{C_\w^2+\sigma^2}.\label{eq:Lagran3non}
\end{multline}
One can then write the following dual problem of (\ref{eq:defxi2non})
\begin{eqnarray}
\xi_{+}(\sigma,\g,\h,\xtilde)=\max_{\nu,\lambda^{(2)},\gamma}\min_{\w} & & {\cal L}(\nu,\lambda^{(2)},\w)\nonumber \\
\mbox{subject to} & & \lambda_i^{(2)}\geq 0, 1\leq i\leq n\nonumber \\
& & \nu\geq 0\nonumber \\
& & \gamma\geq 0,\label{eq:Lagran4non}
\end{eqnarray}
where we of course use the fact that the strict duality obviously holds. The inner minimization over $\w$ is now doable. Setting the derivatives with respect to $\w_i$ to zero one obtains
\begin{equation}
\frac{\w(\|\g\|_2+\gamma)}{\sqrt{\|\w\|_2^2+\sigma^2}}+(\h+\nu\z^{(1)}-\lambda^{(2)})=0,\label{eq:solvewnon}
\end{equation}
where $\z^{(1)}$ and $\lambda^{(2)}$ are as defined in the previous section.
From (\ref{eq:solvewnon}) one then has
\begin{equation}
\w(\|\g\|_2+\gamma)=-\sqrt{\|\w\|_2^2+\sigma^2}(\h+\nu\z^{(1)}-\lambda^{(2)})\label{eq:solvew1non}
\end{equation}
or in a norm form
\begin{equation}
\|\w\|_2^2(\|\g\|_2+\gamma)^2=(\|\w\|_2^2+\sigma^2)\|\h+\nu\z^{(1)}-\lambda^{(2)}\|_2^2.\label{eq:solvew2non}
\end{equation}
From (\ref{eq:solvew2non}) we then find
\begin{equation}
\|\w_{sol+}\|_2=\frac{\sigma\|\h+\nu\z^{(1)}-\lambda^{(2)}\|_2}{\sqrt{(\|\g\|_2+\gamma)^2-\|\h+\nu\z^{(1)}-\lambda^{(2)}\|_2^2}},\label{eq:solvew3non}
\end{equation}
and from (\ref{eq:solvew1non})
\begin{equation}
\w_{sol+}=\frac{\sigma(\h+\nu\z^{(1)}-\lambda^{(2)})}{\sqrt{(\|\g\|_2+\gamma)^2-\|\h+\nu\z^{(1)}-\lambda^{(2)}\|_2^2}}\label{eq:solvew4non}
\end{equation}
where $\w_{sol+}$ is of course the solution of the inner minimization over $\w$. As in the previous section, one should note that (\ref{eq:solvew3non}) and (\ref{eq:solvew4non}) are of course possible only if $\|\g\|_2+\gamma-\|\h+\nu\z^{(1)}-\lambda^{(2)}\|_2\geq 0$. (Also, as in the previous section if for $\nu$ and $\lambda^{(2)}$ that are optimal in (\ref{eq:Lagran4non}) the condition is not met then for the corresponding ($\alpha,\beta$) the worst-case $\|\w\|_2$ is infinite with overwhelming probability). Plugging the value of $\w_{sol+}$ from (\ref{eq:solvew4non}) back in (\ref{eq:Lagran4non}) gives
\begin{eqnarray}
\hspace{-.5in}\xi_{+}(\sigma,\g,\h,\xtilde)=\max_{\nu,\lambda^{(2)},\gamma} & & \sigma\sqrt{(\|\g\|_2+\gamma)^2-\|\h+\nu\z^{(1)}-\lambda^{(2)}\|_2^2} -\sum_{i=n-k+1}^{n}\lambda_i^{(2)}\xtilde_i-\gamma \sqrt{C_\w^2+\sigma^2}\nonumber \\
\mbox{subject to} & & \lambda_i^{(2)}\geq 0, 1\leq i\leq n\nonumber \\
& & \nu\geq 0\nonumber \\
& & \|\g\|_2+\gamma-\|\h+\nu\z^{(1)}-\lambda^{(2)}\|_2\geq 0\nonumber \\
& & \gamma\geq 0.\label{eq:Lagran7non}
\end{eqnarray}
Now, the maximization over $\gamma$ can be done. After setting the derivative to zero one finds
\begin{equation}
\frac{\|\g\|_2+\gamma}{\sqrt{(\|\g\|_2+\gamma)^2-\|\h+\nu\z^{(1)}-\lambda^{(2)}\|_2^2}}-\sqrt{C_\w^2+\sigma^2}=0\label{eq:dergamma}
\end{equation}
and after some algebra
\begin{equation}
\gamma_{opt+}=\sqrt{1+\frac{\sigma^2}{C_\w^2}}\|\h+\nu\z^{(1)}-\lambda^{(2)}\|_2-\|\g\|_2,\label{eq:optgamma}
\end{equation}
where of course $\gamma_{opt+}$ would be the solution of (\ref{eq:Lagran7non}) only if larger than or equal to zero. Alternatively of course $\gamma_{opt+}=0$. Now, based on these two scenarios we distinguish two different optimization problems:
\begin{enumerate}
\item \underline{\emph{The ``overwhelming" optimization --- signed $\xtilde$}}
\begin{eqnarray}
\xi_{ov+}(\sigma,\g,\h,\xtilde)=\max_{\nu,\lambda^{(2)}} & & \sigma\sqrt{\|\g\|_2^2-\|\h+\nu\z^{(1)}-\lambda^{(2)}\|_2^2} -\sum_{i=n-k+1}^{n}\lambda_i^{(2)}\xtilde_i\nonumber \\
\mbox{subject to}
& & \nu\geq 0\nonumber \\
& & \lambda_i^{(2)}\geq 0,1\leq i\leq n.\label{eq:Lagran12non}
\end{eqnarray}
\item \underline{\emph{The ``non-overwhelming" optimization --- signed $\xtilde$}}
\begin{eqnarray}
\xi_{nov+}(\sigma,\g,\h,\xtilde)=\max_{\nu,\lambda^{(2)}} & & \sqrt{C_\w^2+\sigma^2}\|\g\|_2-C_\w\|\h+\nu\z^{(1)}-\lambda^{(2)}\|_2 -\sum_{i=n-k+1}^{n}\lambda_i^{(2)}\xtilde_i\nonumber \\
\mbox{subject to}
& & \nu\geq 0\nonumber \\
& & \lambda_i^{(2)}\geq 0,1\leq i\leq n.\label{eq:Lagran13non}
\end{eqnarray}
\end{enumerate}
The ``overwhelming" optimization is the equivalent to (\ref{eq:Lagran7non}) if for its optimal values $\widehat{\nu^+}$ and $\widehat{\lambda^{(2+)}}$ holds
\begin{equation}
\sqrt{1+\frac{\sigma^2}{C_\w^2}}\|\h+\widehat{\nu^+}\z^{(1)}-\widehat{\lambda^{(2+)}}\|_2\leq \|\g\|_2,\label{eq:ovnoncondnon}
\end{equation}
We now summarize in the following lemma the results of this subsection.
\begin{lemma}
Let $\widehat{\nu^+}$ and $\widehat{\lambda^{(2+)}}$ be the solutions of (\ref{eq:Lagran12non}) and analogously let $\widetilde{\nu^+}$ and $\widetilde{\lambda^{(2+)}}$ be the solutions of (\ref{eq:Lagran13non}). Let $\xi_{+}(\sigma,\g,\h,\xtilde)$ be, as defined in (\ref{eq:defxinon}), the optimal value of the objective function in (\ref{eq:defxinon}). Then
\begin{equation}
\hspace{-.8in}\xi_{+}(\sigma,\g,\h,\xtilde)=\begin{cases}\sigma\sqrt{\|\g\|_2^2-\|\h+\widehat{\nu^+}\z^{(1)}-\widehat{\lambda^{(2+)}}\|_2^2} -\sum_{i=n-k+1}^{n}\widehat{\lambda_i^{(2+)}}\xtilde_i, &
\hspace{-.62in}\mbox{if}\quad  \sqrt{1+\frac{\sigma^2}{C_\w^2}}\|\h+\widehat{\nu^+}\z^{(1)}-\widehat{\lambda^{(2+)}}\|_2\leq \|\g\|_2\\
\sqrt{C_\w^2+\sigma^2}\|\g\|_2-C_\w\|\h+\widetilde{\nu^+}\z^{(1)}-\widetilde{\lambda^{(2+)}}\|_2 -\sum_{i=n-k+1}^{n}\widetilde{\lambda_i^{(2+)}}\xtilde_i, & \mbox{otherwise} \end{cases}.\label{eq:defhatxinon}
\end{equation}
Moreover, let $\widehat{\w^+}$ be the solution of (\ref{eq:defxinon}). Then
\begin{equation}
\widehat{\w^+}(\sigma,\g,\h,\xtilde)=\begin{cases}
\frac{\sigma(\h+\widehat{\nu^+}\z^{(1)}-\widehat{\lambda^{(2+)}})}{\sqrt{\|\g\|_2^2-\|\h+\widehat{\nu^+}\z^{(1)}-\widehat{\lambda^{(2+)}}\|_2^2}}, &
\mbox{if}\quad  \sqrt{1+\frac{\sigma^2}{C_\w^2}}\|\h+\widehat{\nu^+}\z^{(1)}-\widehat{\lambda^{(2+)}}\|_2\leq \|\g\|_2\\
\frac{C_\w(\h+\widetilde{\nu^+}\z^{(1)}-\widetilde{\lambda^{(2+)}})}{\|\h+\widetilde{\nu^+}\z^{(1)}-\widetilde{\lambda^{(2+)}}\|_2}, &
\mbox{otherwise}\end{cases},\label{eq:defhatwnon}
\end{equation}
and
\begin{equation}
\|\widehat{\w^+}(\sigma,\g,\h,\xtilde)\|_2=\begin{cases}
\frac{\sigma\|\h+\widehat{\nu^+}\z^{(1)}-\widehat{\lambda^{(2+)}})\|_2}{\sqrt{\|\g\|_2^2-\|\h+\widehat{\nu^+}\z^{(1)}-\widehat{\lambda^{(2+)}}\|_2^2}}, &
\mbox{if}\quad  \sqrt{1+\frac{\sigma^2}{C_\w^2}}\|\h+\widehat{\nu^+}\z^{(1)}-\widehat{\lambda^{(2+)}}\|_2\leq \|\g\|_2\\
C_\w, & \mbox{otherwise}
\end{cases}.
\label{eq:defhatwnormnon}
\end{equation}\label{thm:optsollowernon}
\end{lemma}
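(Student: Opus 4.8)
The plan is to treat this lemma as the bookkeeping summary of the computation already carried out in this subsection, in complete parallel with the proof of Lemma \ref{thm:optsollower} in the unsigned case. No new machinery is needed: the Lagrange dual (\ref{eq:Lagran4non}), after the inner minimization over $\w$ was resolved in (\ref{eq:solvew4non}), was reduced to the single maximization (\ref{eq:Lagran7non}) over $\nu$, $\lambda^{(2)}$, and the multiplier $\gamma$ attached to the norm constraint $\sqrt{\|\w\|_2^2+\sigma^2}\leq\sqrt{C_\w^2+\sigma^2}$; and the maximization over $\gamma$ was solved in closed form, yielding the stationary value $\gamma_{opt+}$ in (\ref{eq:optgamma}). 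So I would simply assemble these pieces.

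For the first assertion I would invoke the dichotomy in $\gamma$. Since $\gamma\geq 0$ is required, the maximizer is either the interior point $\gamma=\gamma_{opt+}$ (when it is nonnegative) or the boundary $\gamma=0$ (when $\gamma_{opt+}<0$). Observing from (\ref{eq:optgamma}) that the sign condition $\gamma_{opt+}\leq 0$ is precisely the feasibility condition (\ref{eq:ovnoncondnon}), i.e. $\sqrt{1+\tfrac{\sigma^2}{C_\w^2}}\|\h+\widehat{\nu^+}\z^{(1)}-\widehat{\lambda^{(2+)}}\|_2\leq\|\g\|_2$, I would substitute $\gamma=0$ into the objective of (\ref{eq:Lagran7non}) to obtain the ``overwhelming'' optimization (\ref{eq:Lagran12non}) and hence the first branch of (\ref{eq:defhatxinon}); substituting $\gamma=\gamma_{opt+}$ collapses the first term to $\tfrac{\sigma}{C_\w}\|\h+\widetilde{\nu^+}\z^{(1)}-\widetilde{\lambda^{(2+)}}\|_2$ (a short algebraic simplification) and, after canceling against $\gamma_{opt+}\sqrt{C_\w^2+\sigma^2}$, reproduces the ``non-overwhelming'' optimization (\ref{eq:Lagran13non}) and thus the second branch.

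For the second and third assertions I would read the minimizer directly off (\ref{eq:solvew4non}) and its norm off (\ref{eq:solvew3non}). In the overwhelming regime $\gamma=0$, so the denominator in (\ref{eq:solvew4non}) is $\sqrt{\|\g\|_2^2-\|\h+\widehat{\nu^+}\z^{(1)}-\widehat{\lambda^{(2+)}}\|_2^2}$, giving the first branches of (\ref{eq:defhatwnon}) and (\ref{eq:defhatwnormnon}). In the non-overwhelming regime $\gamma=\gamma_{opt+}$ makes the same denominator equal to $\tfrac{\sigma}{C_\w}\|\h+\widetilde{\nu^+}\z^{(1)}-\widetilde{\lambda^{(2+)}}\|_2$, so (\ref{eq:solvew4non}) simplifies to $\tfrac{C_\w(\h+\widetilde{\nu^+}\z^{(1)}-\widetilde{\lambda^{(2+)}})}{\|\h+\widetilde{\nu^+}\z^{(1)}-\widetilde{\lambda^{(2+)}}\|_2}$ and its norm is exactly $C_\w$, as stated; equivalently this is the statement that the norm constraint is active when $\gamma_{opt+}>0$.

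I expect no serious obstacle, since strong duality, the stationarity equations (\ref{eq:solvewnon})--(\ref{eq:solvew4non}), and the $\gamma$-maximization were all established before the lemma was stated. The only point deserving care is the gluing of the two regimes along (\ref{eq:ovnoncondnon}): one must verify that the case split is exhaustive and that (\ref{eq:ovnoncondnon}) is literally the condition $\gamma_{opt+}\leq 0$, and then confirm that in the non-overwhelming branch the denominator simplification pins $\|\widehat{\w^+}\|_2$ to the boundary value $C_\w$ rather than to the interior stationary value. Everything else follows trivially.
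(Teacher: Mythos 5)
Your proposal is correct and follows exactly the paper's route: the paper's own (one-line) proof simply assembles the preceding derivation, namely the dichotomy $\gamma=\gamma_{opt+}$ versus $\gamma=0$ in (\ref{eq:Lagran7non})--(\ref{eq:optgamma}) for the objective values, and reading $\widehat{\w^+}$ and its norm off (\ref{eq:solvew3non})--(\ref{eq:solvew4non}) at the optimal multipliers, which is precisely what you do. Your write-up in fact spells out the algebra (e.g.\ that the stationary $\gamma_{opt+}$ turns (\ref{eq:Lagran7non}) into (\ref{eq:Lagran13non}) and pins $\|\widehat{\w^+}\|_2=C_\w$) at a level of detail the paper leaves implicit, so there is nothing to add.
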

\begin{proof}
The first part follows trivially. The second one follows from (\ref{eq:solvew4non}) by choosing the optimal $\widehat{\nu^+}$ and $\widehat{\lambda^{(2+)}}$ or alternatively $\widetilde{\nu^+}$ and $\widetilde{\lambda^{(2+)}}$.
\end{proof}

\subsubsection{Concentration of $\xi_{+}(\sigma,\g,\h,\xtilde)$} \label{sec:unsignedconcnon}

In this section we establish that $\xi_{+}(\sigma,\g,\h,\xtilde)$ concentrates with high probability around its mean. The following lemma is an analogue to Lemma
\ref{thm:lipschunsigned}
\begin{lemma}
Let $\g$ and $\h$ be $m$ and $n$ dimensional vectors, respectively, with i.i.d. standard normal variables as their components. Let $\sigma>0$ be an arbitrary scalar. Let $\xi(\sigma,\g,\h,\xtilde)$ be as in (\ref{eq:defxi}). Further let $\epsilon_{lip}>0$ be any constant. Then
\begin{equation}
P(|\xi_{+}(\sigma,\g,\h,\xtilde)-E\xi_{+}(\sigma,\g,\h,\xtilde)|\geq \epsilon_{lip}E\xi_{+}(\sigma,\g,\h,\xtilde))\leq \exp \left \{  -\frac{(\epsilon_{lip} E\xi_{+}(\sigma,\g,\h,\xtilde))^2}{2(2C_\w^2+\sigma^2)} \right \}.\label{eq:lipsch1non}
\end{equation}
\label{thm:lipschunsignednon}
\end{lemma}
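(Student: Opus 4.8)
The plan is to follow the proof of Lemma \ref{thm:lipschunsigned} essentially verbatim, since the objective minimized in the definition (\ref{eq:defxinon}) of $\xi_{+}(\sigma,\g,\h,\xtilde)$ is \emph{identical} to the one appearing in (\ref{eq:defxi}); the only change is that the feasible set $S_{\w}^+(\sigma,\xtilde,C_\w)$ from (\ref{eq:defSnon}) replaces $S_{\w}(\sigma,\xtilde,C_\w)$. The decisive structural features I would exploit are that $S_{\w}^+(\sigma,\xtilde,C_\w)$ does not depend on $\g$ or $\h$, and that it still enforces $\|\w\|_2\leq C_\w$. These are precisely the two properties on which the Lipschitz estimate for the unsigned case rested, so the change of constraint region is immaterial to the argument.

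Concretely, I would let $f_{lip}(\g^{(1)},\h^{(1)})$ and $f_{lip}(\g^{(2)},\h^{(2)})$ denote the optimal values of the objective $\sqrt{\|\w\|_2^2+\sigma^2}\|\g\|_2+\sum_{i=1}^{n}\h_i\w_i$ minimized over $S_{\w}^+(\sigma,\xtilde,C_\w)$ for two pairs of Gaussian vectors, write $\w_{lip}^{(1)}$ and $\w_{lip}^{(2)}$ for the respective minimizers, and assume without loss of generality that $f_{lip}(\g^{(1)},\h^{(1)})< f_{lip}(\g^{(2)},\h^{(2)})$. Because $\w_{lip}^{(1)}$ is feasible for the second problem as well (the feasible set being common to both), sub-optimality lets me upper bound $f_{lip}(\g^{(2)},\h^{(2)})$ by the objective evaluated at $\w_{lip}^{(1)}$; the two leading terms then telescope exactly as in (\ref{eq:lipproof5}), leaving the remainder $\sqrt{\|\w_{lip}^{(1)}\|_2^2+\sigma^2}\,(\|\g^{(2)}\|_2-\|\g^{(1)}\|_2)+\sum_{i=1}^n(\h_i^{(2)}-\h_i^{(1)})(\w_{lip}^{(1)})_i$. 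Applying the triangle inequality, Cauchy--Schwarz (which converts $a\,u+b\,v$ into $\sqrt{a^2+b^2}\sqrt{u^2+v^2}$ and hence produces the factor $\sqrt{2\|\w_{lip}^{(1)}\|_2^2+\sigma^2}$), and finally the bound $\|\w_{lip}^{(1)}\|_2\leq C_\w$ furnished by $S_{\w}^+$, I obtain the same Lipschitz constant $c_{lip}=\sqrt{2C_\w^2+\sigma^2}$. Feeding this into Lemma \ref{thm:lipsch} yields (\ref{eq:lipsch1non}).

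The step I would watch most carefully is the feasibility claim underlying the sub-optimality bound: one must confirm that replacing the $\ell_1$-type constraint $\|\xtilde+\w\|_1\leq\|\xtilde\|_1$ by the pair of constraints $\sum_{i=1}^n \w_i\leq 0$ and $\xtilde_i+\w_i\geq 0$ still yields a set that is fixed independently of $(\g,\h)$ and that contains only vectors with $\|\w\|_2\leq C_\w$. Both facts are immediate from (\ref{eq:defSnon}), so no genuine obstacle arises; the entire content of the lemma is that the Lipschitz bound is insensitive to the precise shape of the deterministic constraint region, depending only on the uniform norm bound $C_\w$ it imposes. Consequently the concentration exponent, governed by the factor $2(2C_\w^2+\sigma^2)$ in the denominator of (\ref{eq:lipsch1non}), is exactly the same as in the unsigned case.
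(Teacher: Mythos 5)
Your proposal is correct and follows exactly the route the paper takes: the paper's own proof simply states that one repeats the proof of Lemma \ref{thm:lipschunsigned} verbatim with $S_{\w}^+(\sigma,\xtilde,C_\w)$ in place of $S_{\w}(\sigma,\xtilde,C_\w)$, which is precisely your argument. Your additional observation that the argument only uses the facts that the feasible set is independent of $(\g,\h)$ and enforces $\|\w\|_2\leq C_\w$ correctly identifies why the substitution is harmless.
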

\begin{proof}It follows by literally repeating every step of proof of Lemma \label{thm:lipschunsigned}. The only difference is that
one now has $S_{\w}^+(\sigma,\xtilde,C_\w)$ instead of $S_{\w}(\sigma,\xtilde,C_\w)$ .
\end{proof}
Moreover one then has that $\|\h+\widehat{\nu^+}\z^{(1)}-\widehat{\lambda^{(2+)}}\|_2$ and $\|\h+\widetilde{\nu^+}\z^{(1)}-\widetilde{\lambda^{(2+)}}\|_2$ concentrate as well which automatically implies that $\widehat{\w^+}$ also concentrates. More formally, one then has analogues to (\ref{eq:lipsch1non})
\begin{eqnarray}
P(|\|\h+\widehat{\nu^+}\z^{(1)}-\widehat{\lambda^{(2+)}}\|_2-E\|\h+\widehat{\nu^+}\z^{(1)}-\widehat{\lambda^{(2+)}}\|_2|\geq
\epsilon_1^{(norm)}E\|\h+\widehat{\nu^+}\z^{(1)}-\widehat{\lambda^{(2+)}}\|_2) & \leq & e^{-\epsilon_2^{(norm)}n}\nonumber \\
P(|\|\h+\widetilde{\nu^+}\z^{(1)}-\widetilde{\lambda^{(2+)}}\|_2-E\|\h+\widetilde{\nu^+}\z^{(1)}-\widetilde{\lambda^{(2+)}}\|_2|\geq
\epsilon_3^{(norm)}E\|\h+\widetilde{\nu^+}\z^{(1)}-\widetilde{\lambda^{(2+)}}\|_2) & \leq & e^{-\epsilon_4^{(norm)}n}\nonumber \\
P(|\|\widehat{\w^+}\|_2-E\|\widehat{\w^+}\|_2|\geq
\epsilon_1^{(\w)}E\|\widehat{\w^+}\|_2) & \leq & e^{-\epsilon_2^{(\w)}n},\nonumber \\\label{eq:conchwnon}
\end{eqnarray}
where as usual $\epsilon_1^{(norm)}>0$, $\epsilon_2^{(norm)}>0$, and $\epsilon_1^{(\w)}>0$ are arbitrarily small constants and $\epsilon_3^{(norm)}$, $\epsilon_4^{(norm)}$, and $\epsilon_2^{(\w)}$ are constant dependent on $\epsilon_1^{(norm)}>0$, $\epsilon_2^{(norm)}>0$, and $\epsilon_1^{(\w)}>0$, respectively, but independent of $n$. After repeating every step between (\ref{eq:conchw}) and (\ref{eq:lowerboundobj}) one arrives to the following analogue to Lemma \ref{thm:lowerbound}.
\begin{lemma}
Let $\v$ be an $n\times 1$ vector of i.i.d. zero-mean variance $\sigma^2$ Gaussian random variables and let $A$ be an $m\times n$ matrix of i.i.d. standard normal random variables. Consider an $\xtilde$ defined in (\ref{eq:xtildedef}) and a $\y$ defined in (\ref{eq:systemnoise}) for $\x=\xtilde$. Let then $\zeta_{obj+}$ be as defined in (\ref{eq:objlassol1non}) and let $\w^+$ be the solution of (\ref{eq:objlassol13non}).
Assume $P(\|\w^+\|_2\leq C_\w)\geq 1-e^{-\epsilon_{C_\w}n}$ for an arbitrarily large constant $C_\w$ and a constant $\epsilon_{C_\w}>0$ dependent on $C_\w$ but independent of $n$. Then there is a constant $\epsilon_{lower}>0$
\begin{equation}
P(\zeta_{obj+}\geq \zeta_{obj+}^{(lower)})\geq (1-e^{-\epsilon_{lower}n})(1-e^{-\epsilon_{C_\w}n}),\label{eq:lowerboundobjthm1non}
\end{equation}
where
\begin{equation}
\zeta_{obj+}^{(lower)}=(1-\epsilon_{lip})E\xi_{+}(\sigma,\g,\h,\xtilde)-\epsilon_1^{(\h)}\sqrt{n}-\epsilon_1^{(g)}\sqrt{n},\label{eq:lowerboundobjthm2non}
\end{equation}
$\xi_{+}(\sigma,\g,\h,\xtilde)$ is as defined in (\ref{eq:defxinon}), and $\epsilon_{lip},\epsilon_1^{(\h)},\epsilon_1^{(g)}$ are all positive arbitrarily small constants.
\label{thm:lowerboundnon}
\end{lemma}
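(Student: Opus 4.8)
The plan is to transcribe the proof of Lemma~\ref{thm:lowerbound} essentially verbatim, exploiting the fact that the signed problem (\ref{eq:objlassol13non}) differs from its unsigned counterpart (\ref{eq:objlassol13}) only through its feasible set: $S_{\w}^+(\sigma,\xtilde,C_\w)$ replaces $S_{\w}(\sigma,\xtilde,C_\w)$. None of the probabilistic machinery in Section~\ref{sec:unsignedlbzetaobj} uses the particular shape of the feasible set, so every step carries over once the two substitutions $S_\w \to S_\w^+$ and $\xi \to \xi_+$ are made.

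First I would start from $\zeta_{obj+}^{(help)}$ in (\ref{eq:objlassol14non}) and apply Gordon's comparison inequality (Lemma~\ref{thm:unsignedlemma}) with $\Phi = S_\w^+$; since $\Phi$ is an arbitrary subset of $R^{n+1}$, the lemma applies immediately and yields (\ref{eq:objlassol15non}). This replaces the two-sided Gaussian process $\a^T A_\v [\w^T\ \sigma]^T$ by the decoupled process whose value defines $p_l^+$ in (\ref{eq:probint1non}). The inner maximization over $\a$ is trivial, collapsing $\sum_i \g_i \a_i$ to $\|\g\|_2$ and leaving $\xi_+(\sigma,\g,\h,\xtilde) + \h_{n+1}\sigma$ inside the probability, exactly as in (\ref{eq:probint1}).

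Next I would invoke the concentration of $\xi_+$ already established in Lemma~\ref{thm:lipschunsignednon}, which shows $\xi_+$ is Lipschitz with constant $\sqrt{2C_\w^2+\sigma^2}$ and hence concentrates exponentially around $E\xi_+$. Choosing $\zeta_{obj+}^{(l)} = (1-\epsilon_{lip})E\xi_+(\sigma,\g,\h,\xtilde) - \epsilon_1^{(\h)}\sqrt{n}$ in analogy with (\ref{eq:defzetaobjl}), and bounding $\h_{n+1}\sigma \geq -\epsilon_1^{(\h)}\sqrt{n}$ with overwhelming probability (since $\h_{n+1}$ is standard normal), produces the signed analogue of (\ref{eq:probanalcont2}). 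I would then close the loop through the steps (\ref{eq:probanalcont3})--(\ref{eq:lowerboundobj}): peel off the auxiliary standard normal $g$ (which satisfies $g\sqrt{C_\w^2+\sigma^2}\leq \epsilon_1^{(g)}\sqrt{n}$ with overwhelming probability), set $\zeta_{obj+}^{(lower)} = \zeta_{obj+}^{(l)} - \epsilon_1^{(g)}\sqrt{n}$ to recover (\ref{eq:lowerboundobjthm2non}), and finally use the assumption (\ref{eq:assumplassonon}) so that $\zeta_{obj+}^{(help)}$, computed over the bounded set, coincides with the true $\zeta_{obj+}$ with probability at least $1-e^{-\epsilon_{C_\w}n}$. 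The product of the three overwhelming probabilities gives the claimed bound $(1-e^{-\epsilon_{lower}n})(1-e^{-\epsilon_{C_\w}n})$.

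There is essentially no genuine obstacle here beyond careful bookkeeping, which is why the text summarizes the argument as ``repeating every step between (\ref{eq:conchw}) and (\ref{eq:lowerboundobj})''. The only point requiring mild attention is confirming that the Lipschitz constant is unchanged; this holds because the norm bound $\|\w\|_2 \leq C_\w$ defining $S_\w^+$ is identical to the one in $S_\w$, so the estimate in (\ref{eq:lipproof5}) goes through unaltered and the sign constraints $\sum_i \w_i \leq 0$ and $\xtilde_i + \w_i \geq 0$ play no role in the comparison. I note also that the explicit deterministic form of $\xi_+$ from Lemma~\ref{thm:optsollowernon} is not needed for the lower bound itself---it serves only later, in the matching step---so the proof reduces purely to the probabilistic transcription described above.
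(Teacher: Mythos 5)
Your proposal is correct and follows exactly the paper's own route: the paper likewise proves Lemma \ref{thm:lowerboundnon} by applying Lemma \ref{thm:unsignedlemma} with $S_{\w}^+(\sigma,\xtilde,C_\w)$ in place of $S_{\w}(\sigma,\xtilde,C_\w)$, invoking the concentration of $\xi_{+}$ from Lemma \ref{thm:lipschunsignednon} (whose Lipschitz constant is unchanged since the norm bound $\|\w\|_2\leq C_\w$ is the same), and then ``repeating every step between (\ref{eq:conchw}) and (\ref{eq:lowerboundobj})'' to combine the three overwhelming-probability events. Your observation that Lemma \ref{thm:optsollowernon} is not needed for the lower bound itself is also consistent with the paper, which defers that deterministic characterization to the matching step.
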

\begin{proof}
Follows from the previous discussion.
\end{proof}

\subsection{Upper-bounding $\zeta_{obj+}$} \label{sec:unsignedubzetaobjnon}

In this section we present a general framework for finding a ``high-probability" upper bound on $\zeta_{obj+}$. To that end, let $r_{+}$ and $C_{\w_{up+}}$ be positive scalars (as in Section \ref{sec:unsignedubzetaobjnon}, we in this subsection present a general framework and take these scalars to be arbitrary; however to make the bound as tight sa possible in the following subsection we will make them take particular values). As earlier, if we can show that there is a $\w\in R^n$ such that
$\|\xtilde+\w\|_1\leq \|\xtilde\|_1$ and $\|\v-A\w\|_2\leq r_+$ with overwhelming probability then $r_+$ can act as an upper bound on $\zeta_{obj+}$. We then start by looking at the following optimization problem
\begin{eqnarray}
\min_{\w} & & \|\xtilde+\w\|_1-\|\xtilde\|_1 \nonumber \\
& & \|A_\v\begin{bmatrix}\w \\ \sigma\end{bmatrix}\|_2\leq r_+\nonumber \\
& & \xtilde_i+\w_i\geq 0,1\leq i\leq n\nonumber \\
& & \|\w\|_2^2\leq C_{\w_{up+}}^2,\label{eq:upperobjlassol11extnon}
\end{eqnarray}
where $A_\v$ is as defined right after (\ref{eq:objlassol13}). If we can show that with overwhelming probability the objective value of the above optimization problem is negative then $r_+$ will be a valid ``high probability" upper-bound on $\zeta_{obj+}$. Moreover, it will be achieved by a $\w$ for which it will hold that  $\|\w\|_2\leq C_{\w_{up+}}$.

First let us rewrite the objective value of the above optimization problem in a slightly more convenient form
\begin{eqnarray}
\min_{\x} & & \sum_{i=1}^{n}\w_i \nonumber \\
& & \|A_\v\begin{bmatrix}\w \\ \sigma\end{bmatrix}\|_2\leq r_+\nonumber \\
& & \xtilde_i+\w_i\geq 0,1\leq i\leq n\nonumber \\
& & \|\w\|_2^2\leq C_{\w_{up+}}^2.\label{eq:upperobjlassol11non}
\end{eqnarray}
Now, we proceed in a fashion similar to the one from Subsection \ref{sec:unsigneddetnon}. We first do a slight modification of the first constraint from (\ref{eq:upperobjlassol11non}) in the following way
\begin{eqnarray}
\min_{\x} & & \sum_{i=1}^{n}\w_i \nonumber \\
& & \|A_\v\begin{bmatrix}\w \\ \sigma\end{bmatrix}\|_2\leq r\nonumber \\
& & \xtilde_i+\w_i\geq 0,1\leq i\leq n\nonumber \\
& & \|\b\|_2^2\leq r_+^2\nonumber \\
& &  \begin{bmatrix} -A \v\end{bmatrix}\begin{bmatrix}\w \\ \sigma\end{bmatrix}=b\nonumber \\
& & \|\w\|_2^2\leq C_{\w_{up+}}^2.\label{eq:upperdefxi4non}
\end{eqnarray}
The Lagrange dual of the above problem then becomes
\begin{multline}
{\cal L}(\lambda^{(2)},\nu^{(1)},\gamma_1,\gamma_2,\w,\b)=\sum_{i=1}^n \w_i
-\sum_{i=n-k+1}^{n}\lambda_i^{(2)}(\xtilde_i+\w_i)\\
-\sum_{i=1}^{n-k}\lambda_i^{(2)}\w_i-\nu^{(1)} A\w +\nu^{(1)}\v\sigma -\nu^{(1)}\b+\gamma_1(\sum_{i=1}^{n}\b_1^2-r_+^2)+\gamma_2(\|\w\|_2^2- C_{\w_{up+}}^2),\label{eq:upperLagran1non}
\end{multline}
where $\nu^{(1)}$ and $\lambda^{(2)}$ are vectors of Lagrange variables as in previous sections.
After rearranging terms we further have
\begin{multline}
{\cal L}(\lambda^{(2)},\nu^{(1)},\gamma_1,\gamma_2,\w,\b)=-\sum_{i=n-k+1}^{n}\lambda_i^{(2)}\xtilde_i\\
+((\z^{(1)}-\lambda^{(2)})^T-\nu^{(1)} A)\w +\nu^{(1)}\v\sigma -\nu^{(1)}\b+\gamma_1(\sum_{i=1}^{n}\b_1^2-r_+^2)+\gamma_2(\|\w\|_2^2- C_{\w_{up+}}^2).\label{eq:upperLagran4non}
\end{multline}
Finally we can write a dual problem to (\ref{eq:upperdefxi4non})
\begin{eqnarray}
\max_{\lambda^{(2)},\nu^{(1)},\gamma_1,\gamma_2}\min_{\w,\b} & & {\cal L}(\lambda^{(2)},\nu^{(1)},\gamma_1,\gamma_2,\w,\b)\nonumber \\
\mbox{subject to} & & \lambda_i^{(2)}\geq 0, 1\leq i\leq n\nonumber \\
& & \gamma_1\geq 0,\nonumber \\
& & \gamma_2\geq 0,\label{eq:upperLagran5non}
\end{eqnarray}
where we of course use the fact that the strict duality obviously holds. Now, after repeating all the steps from (\ref{eq:upperLagran6})
to (\ref{eq:upperLagran14}) (wherever we had $2\lambda^{(2)}$ we would now have $\lambda^{(2)}$ and there will be no upper bound on components of $\lambda^{(2)}$ in the corresponding optimization problems) one obtains and analogue to (\ref{eq:upperLagran14})
\begin{eqnarray}
-\min_{\lambda^{(2)},\nu^{(1)}} \max_{\|\a\|_2=C_{\w_{up+}}}& &
((\z^{(1)}-\lambda^{(2)})^T-\nu^{(1)} A)\a -\nu^{(1)}\v\sigma +\|\nu^{(1)}\|_2r_{+}+\sum_{i=n-k+1}^{n}\lambda_i^{(2)}\xtilde_i\nonumber \\
\mbox{subject to} & &  \lambda_i^{(2)}\geq 0, 1\leq i\leq n.\label{eq:upperLagran14non}
\end{eqnarray}
Now let us define $f_{obj+}^{(up)}$ as
\begin{eqnarray}
-f_{obj+}^{(up)}=-\min_{\lambda^{(2)},\nu^{(1)}} \max_{\|\a\|_2=C_{\w_{up+}}}& &
((\z^{(1)}-\lambda^{(2)})^T-\nu^{(1)} A)\a -\nu^{(1)}\v\sigma +\|\nu^{(1)}\|_2r_{+}+\sum_{i=n-k+1}^{n}\lambda_i^{(2)}\xtilde_i\nonumber \\
\mbox{subject to} & &  \lambda_i^{(2)}\geq 0, 1\leq i\leq n.\label{eq:upperLagran141non}
\end{eqnarray}
Any $r_+$ such that $\lim_{n\rightarrow}P(f_{obj+}^{(up)}\geq 0)=1$ is then a valid ``high-probability" upper bound.
Set
\begin{equation}
\Lambda^{(2+)}=\{\lambda^{(2)}\in R^n | \lambda_i^{(2)}\geq 0,1\leq i\leq n\},\label{eq:upperlambda2non}
\end{equation}
and
\begin{equation}
\xi_{up+}(\sigma,\g,\h,\xtilde,C_{\w_{up}})=\max_{\lambda^{(2)}\in \Lambda^{(2+)},\nu\geq 0}(\sqrt{C_{\w_{up+}}^2+\sigma^2}\|\g\|_2-C_{\w_{up+}}\|\h+\nu\z^{(1)}-\lambda^{(2)})\|_2
-\sum_{i=n-k+1}^{n}\lambda_i^{(2)}\xtilde_i).\label{eq:upperdefxinon}
\end{equation}
After further repeating all the steps between (\ref{eq:upperLagran14}) and Lemma \ref{thm:upperbound} (the only difference is that $\lambda_i^{(2)}\in \Lambda^{(2+)}$ in the ``signed" scenario) one then has the following ``signed" analogue to Lemma \ref{thm:upperbound} (which in essence gives a way of finding an $r_+$ such that $\lim_{n\rightarrow}P(f_{obj+}^{(up)}\geq 0)=1$).
\begin{lemma}
Let $\v$ be an $n\times 1$ vector of i.i.d. zero-mean variance $\sigma^2$ Gaussian random variables and let $A$ be an $m\times n$ matrix of i.i.d. standard normal random variables. Consider an $\xtilde$ defined in (\ref{eq:xtildedef}) and a $\y$ defined in (\ref{eq:systemnoise}) for $\x=\xtilde$. Let then $\zeta_{obj+}$ be as defined in (\ref{eq:objlassol1non}) and let $\w^+$ be the solution of (\ref{eq:objlassol13non}). There is a constant $\epsilon_{upper}>0$
\begin{equation}
P(\zeta_{obj+}\leq \zeta_{obj+}^{(upper)})\geq 1-e^{-\epsilon_{upper}n},\label{eq:lowerboundobjthm1non}
\end{equation}
where
\begin{equation}
\zeta_{obj+}^{(upper)}=(1+\epsilon_{lip})E\xi_{up+}(\sigma,\g,\h,\xtilde,C_{\w_{up+}})+\epsilon_1^{(\h)}\sqrt{n}+\epsilon_3^{(g)}\sqrt{n},\label{eq:upperboundobjthm2non}
\end{equation}
$\xi_{up+}(\sigma,\g,\h,\xtilde,C_{\w_{up+}})$ is as defined in (\ref{eq:upperdefxinon}), $\epsilon_{lip},\epsilon_1^{(\h)},\epsilon_3^{(g)}$ are all positive arbitrarily small constants, and $C_{\w_{up+}}$ is a constant such that $\|\w^+\|_2\leq C_{\w_{up+}}$.
\label{thm:upperbound}
\end{lemma}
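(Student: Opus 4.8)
The plan is to mirror, essentially line for line, the unsigned upper-bounding development of Section~\ref{sec:unsignedubzetaobj}, tracking the single structural change that the sign constraints $\xtilde_i+\w_i\geq 0$ impose. Starting from the feasibility program (\ref{eq:upperobjlassol11non}), I would first invoke the refined Gordon comparison of Lemma~\ref{eq:upperunsignedlemma1} verbatim, the only adjustment being that the outer minimization over $\lambda^{(2)}$ now ranges over $\Lambda^{(2+)}$ from (\ref{eq:upperlambda2non}) rather than over the box $\Lambda$; the comparison lemma is agnostic to the particular subset on which $\lambda^{(2)}$ is taken, so this substitution is harmless. I would then choose the penalty $\psi$ in the style of (\ref{eq:upperdefpsi}) (with $\z^{(1)}-\lambda^{(2)}$ in place of $\z^{(1)}-2\lambda^{(2)}$, matching (\ref{eq:upperLagran14non})), solve the inner maximization over $\a$ on the sphere of radius $C_{\w_{up+}}$, pull out $\|\nu^{(1)}\|_2$, and carry out the same change of variables $\nu=1/\|\nu^{(1)}\|_2$ used after (\ref{eq:upperprobanal1}). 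This produces a probability $p_u$ expressed directly through $\xi_{up+}(\sigma,\g,\h,\xtilde,C_{\w_{up+}})$ of (\ref{eq:upperdefxinon}).

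The next step is concentration. I would establish that $\xi_{up+}$ is Lipschitz in $(\g,\h)$ with constant $c_{lip}=\sqrt{2C_{\w_{up+}}^2+\sigma^2}$, by repeating the argument of Lemma~\ref{thm:upperlipsch1} with $\Lambda^{(2+)}$ in place of $\Lambda^{(2)}$. The key observation here is that the Lipschitz estimate in (\ref{eq:upperlipproof5}) never used the \emph{upper} box bound on $\lambda^{(2)}$: it rested only on sub-optimality of the argmax $(\nu^{(lip_1)},\lambda^{(lip_1)})$ and on the triangle and Cauchy--Schwarz inequalities applied to the $\|\g\|_2$ and $\|\h+\nu\z^{(1)}-\lambda^{(2)}\|_2$ terms. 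Hence the identical computation goes through unchanged, and Lemma~\ref{thm:lipsch} then yields the exponential concentration of $\xi_{up+}$ around its mean.

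Finally, I would set $r_+=\zeta_{obj+}^{(upper)}$ as in (\ref{eq:upperboundobjthm2non}) and combine the concentration of $\xi_{up+}$ with the elementary Gaussian tail bounds $P(\h_{n+1}\sigma\geq-\epsilon_1^{(\h)}\sqrt{n})\geq 1-e^{-\epsilon_2^{(\h)}n}$ and $P(g-\epsilon_3^{(g)}\sqrt{n}\leq 0)\geq 1-e^{-\epsilon_4^{(g)}n}$, exactly as in the passage culminating in (\ref{eq:upperprobanalcont3}). This forces $\lim_{n\to\infty}P(f_{obj+}^{(up)}\geq 0)=1$, which by the discussion following (\ref{eq:upperLagran141non}) certifies $r_+$ as a valid high-probability upper bound on $\zeta_{obj+}$ and delivers (\ref{eq:lowerboundobjthm1non}).

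The main point requiring care is confirming that relaxing the box $0\leq\lambda_i^{(2)}\leq 1$ to the unbounded orthant $\lambda_i^{(2)}\geq 0$ does not destroy the chain of equalities and, in particular, does not leave the dual unbounded. I expect this to be exactly where the regime hypothesis enters: as in Section~\ref{sec:connectl1}, one must use that $(\alpha,\beta)$ lies below the fundamental $\ell_1$ characterization (\ref{eq:fundl1}) to guarantee that the optimizers $\widehat{\nu^+}$ and $\widehat{\lambda^{(2+)}}$ stay finite and that the relevant norm $\|\h+\nu\z^{(1)}-\lambda^{(2)}\|_2$ concentrates; once that is in hand, the inner maximization over $\gamma_1,\gamma_2$ behaves as in the unsigned case and the whole derivation reduces to the bookkeeping changes already noted.
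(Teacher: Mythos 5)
Your proposal is correct and follows essentially the same route as the paper: the paper's proof is precisely the mirrored repetition of the unsigned upper-bounding machinery of Section~\ref{sec:unsignedubzetaobj} (dualize (\ref{eq:upperobjlassol11non}), apply Lemma~\ref{eq:upperunsignedlemma1} with $\Lambda^{(2+)}$ and $\z^{(1)}-\lambda^{(2)}$ in place of $\Lambda$ and $\z^{(1)}-2\lambda^{(2)}$, rerun the Lipschitz concentration argument, and combine with the Gaussian tail bounds to reach the signed analogue of (\ref{eq:upperprobanalcont3})), which is exactly what you do. One small clarification: your closing worry about the unbounded orthant is unnecessary for this lemma, since in (\ref{eq:upperdefxinon}) both $C_{\w_{up+}}\|\h+\nu\z^{(1)}-\lambda^{(2)}\|_2$ and $\sum_{i=n-k+1}^{n}\lambda_i^{(2)}\xtilde_i$ are nonnegative, so $\xi_{up+}\leq\sqrt{C_{\w_{up+}}^2+\sigma^2}\|\g\|_2$ for every $\lambda^{(2)}\in\Lambda^{(2+)}$ and the maximization is bounded without invoking the regime hypothesis (\ref{eq:fundl1}), which the paper needs only for the matching arguments of Section~\ref{sec:matchingnon}.
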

\begin{proof}
Follows from the previous discussion.
\end{proof}

\subsection{Matching upper and lower bounds}\label{sec:matchingnon}

In this section we specialize the general bounds introduced above and show how they match. We will again divide presentation in three subsections. In the first of the subsections we will make a connection to the noiseless ``signed" case and show how one can then remove the constraint from (\ref{eq:defhatxinon}), (\ref{eq:defhatwnon}), and (\ref{eq:defhatwnormnon}). In the second subsection we will consider a $\w$ such that $|\|\w\|_2-\|\widehat{\w^+}\|_2|\geq \epsilon_{\w_{up}}\|\widehat{\w^+}\|_2$. We will then quantify how much the lower bound that can be computed for such a $\w$ through the framework presented in Section \ref{sec:unsignedlbzetaobjnon} deviates from the optimal one obtained for $\widehat{\w^+}$. In the last subsection we will then show that there will be a $\w$ such that the upper bound computed through the framework presented in Section \ref{sec:unsignedubzetaobjnon} will deviate less. That will in essence establish that upper and lower bounds computed in the previous sections indeed match.

\subsubsection{Connection to the $\ell_1$ optimization of signed $\x$}\label{sec:connectl1non}

In this subsection we establish a connection between the constraint in (\ref{eq:defhatxinon}), (\ref{eq:defhatwnon}), and (\ref{eq:defhatwnormnon}) and the fundamental performance characterization of $\ell_1$ optimization derived in \cite{StojnicUpper10} (and of course earlier in the context of neighborly polytopes in \cite{DT}). We first recall on the condition from Lemma \ref{thm:optsollowernon}. The condition states
\begin{equation}
\sqrt{1+\frac{\sigma^2}{C_\w^2}}\|\h+\widehat{\nu^+}\z^{(1)}-\widehat{\lambda^{(2+)}}\|_2\leq \|\g\|_2,\label{eq:condoptsollowernon}
\end{equation}
where $C_\w$ is an arbitrarily large constant and $\widehat{\nu^+}$ and $\widehat{\lambda^{(2+)}}$ are the solutions of
\begin{eqnarray}
\max & & \sigma\sqrt{\|\g\|_2^2-\|\h+\nu\z^{(1)}-\lambda^{(2)}\|_2^2} -\sum_{i=n-k+1}^{n}\lambda_i^{(2)}\xtilde_i\nonumber \\
\mbox{subject to} & & \lambda_i^{(2)}\geq 0,1\leq i\leq n\nonumber \\
& & \nu\geq 0.\label{eq:matchoptnon}
\end{eqnarray}
Now we note the following equivalent to (\ref{eq:matchoptnon}) for the case when nonzero components of $\xtilde$ are infinite
\begin{eqnarray}
\max & & \sigma\sqrt{\|\g\|_2^2-\|\h+\nu\z^{(1)}-\lambda^{(2)}\|_2^2} \nonumber \\
\mbox{subject to} & & \lambda_i^{(2)}\geq 0,1\leq i\leq n-k\nonumber \\
 & & \lambda_i^{(2)}=0,n-k+1\leq i\leq n\nonumber \\
& & \nu\geq 0.\label{eq:matchl1non}
\end{eqnarray}
To make the new observations easily comparable to the corresponding ones from \cite{StojnicCSetam09,StojnicEquiv10} we set
\begin{equation}
\bar{\h}^+=[\h_{(1)}^{(1)},\h_{(2)}^{(2)},\dots,\h_{(n-k)}^{(n-k)},\h_{n-k+1},\h_{n-k+2},\dots,\h_n]^T,\label{eq:defhbarnon}
\end{equation}
where $[\h_{(1)}^{(1)},\h_{(2)}^{(2)},\dots,\h_{(n-k)^{(n-k)}}]$ are $[\h_{1},\h_{2},\dots,\h_{n-k}]$ sorted in increasing order (possible ties in the sorting process are of course broken arbitrarily). Also we let $\z^{(2)}$ be as in the previous section, i.e. let it be such that $\z_i^{(2)}=-\z_i^{(1)},n-k+1\leq i\leq n$ and $\z_i^{(2)}=\z_i^{(1)},1\leq i\leq n-k$. It is then relatively easy to see that the above optimization problem is equivalent to
\begin{eqnarray}
\max & & \sigma\sqrt{\|\g\|_2^2-\|\bar{\h}^+-\nu\z^{(2)}+\lambda^{(2)}\|_2^2} \nonumber \\
\mbox{subject to}
& & \lambda_i^{(2)}\geq 0, 1\leq i\leq n-k\nonumber \\
& & \lambda_i^{(2)}=0,n-k+1\leq i\leq n\nonumber \\
& & \nu\geq 0.
\label{eq:matchl11non}
\end{eqnarray}
Let $\nu_{\ell_1+}$ and $\lambda^{(\ell_1+)}$ be the solution of the above maximization. Further, consider the following ``signed" version of the standard $\ell_1$-optimization
\begin{eqnarray}
\mbox{min} & & \|\x\|_{1}\nonumber \\
\mbox{subject to} & & A\x=\y\nonumber \\
& & \x_i\geq 0. \label{eq:l1non}
\end{eqnarray}
Then, as we showed in \cite{StojnicCSetam09} and \cite{StojnicUpper10}, the inequality
\begin{equation}
E\|\g\|_2> E\|\bar{\h}^+-\nu_{\ell_1+}\z^{(2)}+\lambda^{(\ell_1+)}\|_2\label{eq:fundl1expnon}
\end{equation}
establishes the following fundamental performance characterization of the $\ell_1$ optimization algorithm from (\ref{eq:l1non}) that could be used instead of LASSO from (\ref{eq:lassol1non}) to recover signed $\x$ in (\ref{eq:system}) (which is a noiseless version of (\ref{eq:systemnoise}))
\begin{equation}
(1-\beta_w^+)\frac{\sqrt{\frac{1}{2\pi}}e^{-(\erfinv(2\frac{1-\alpha_w^+}{1-\beta_w^+}-1))^2}}{\alpha_w^+}-\sqrt{2}\erfinv (2\frac{1-\alpha_w^+}{1-\beta_w^+}-1)=0,
\label{eq:fundl1non}
\end{equation}
where of course $\alpha_w^+=\frac{m}{n}$ and $\beta_w^+=\frac{k}{n}$. As it is also shown in \cite{StojnicCSetam09} and \cite{StojnicUpper10} both of the quantities under the expected values in (\ref{eq:fundl1expnon}) nicely concentrate. Then with overwhelming probability one has that for any pair $(\alpha,\beta)$ that satisfies (or lies below) the fundamental performance characterization of $\ell_1$ optimization given in (\ref{eq:fundl1non})
\begin{equation}
\|\g\|_2> \|\bar{\h}^+-\nu_{\ell_1+}\z^{(2)}+\lambda^{(\ell_1+)}\|_2.\label{eq:fundl1noexpnon}
\end{equation}
Moreover, since $\lambda_i^{(2)}\geq 0, n-k+1\leq i\leq n$, in (\ref{eq:matchoptnon}) one actually has that (\ref{eq:fundl1noexpnon}) implies that with overwhelming probability
\begin{equation}
\|\g\|_2> \|\h+\widehat{\nu^+}\z^{(1)}-\widehat{\lambda^{(2+)}}\|_2,
\end{equation}
which for sufficiently large $C_\w$ is the same as (\ref{eq:condoptsollowernon}).  We then in what follows assume that pair $(\alpha,\beta)$ is such that it satisfies the ``signed" fundamental $\ell_1$ optimization performance characterization (or is in the region below it) from (\ref{eq:fundl1non}) and therefore proceed by ignoring condition (\ref{eq:condoptsollowernon}).

\subsubsection{Deviation from the lower-bound}\label{sec:devlbnon}

In this subsection we establish that $\|\w_{lasso+}\|_2$ (of course, $\w_{lasso+}=\widehat{\x^+}-\xtilde$, where $\widehat{\x^+}$ is the solution of (\ref{eq:lassol1non})) can not deviate substantially from $\|\widehat{\w^+}\|_2$ without substantially affecting the value of the lower bound on the objective in (\ref{eq:lassol1non}) that is derived in Section \ref{sec:unsignedlbzetaobjnon}. To that end let us assume that there is a $\w_{off+}$ that is the solution of the LASSO from (\ref{eq:lassol1non}) (or to be slightly more precise that is such that $\widehat{\x^+}=\xtilde+\w_{off+}$, where obviously $\widehat{\x^+}$ is the solution of (\ref{eq:lassol1non})). Further, let $|\|\w_{off+}\|_2-\|\widehat{\w^+}\|_2|\geq \epsilon_{\w_{up}}\|\widehat{\w^+}\|_2$, where $\epsilon_{\w_{up}}$ is an arbitrarily small constant.

One can then write a ``signed" analogue to (\ref{eq:matchdefxi})
\begin{equation}
\xi_{off+}(\sigma,\g,\h,\xtilde,\w_{off+})=\max_{\lambda^{(2)}\in \Lambda^{(2+)},\nu\geq 0}(\sqrt{\w_{off}^2+\sigma^2}\|\g\|_2-\w_{off}\|\h+\nu\z^{(1)}-\lambda^{(2)})\|_2
-\sum_{i=n-k+1}^{n}\lambda_i^{(2)}\xtilde_i).\label{eq:matchdefxinon}
\end{equation}
After repeating all the arguments between (\ref{eq:matchdefxi}) and Lemma \ref{thm:matchlowerbound} one obtains the following analogue to Lemma \ref{thm:matchlowerbound}.
\begin{lemma}
Let $\v$ be an $n\times 1$ vector of i.i.d. zero-mean variance $\sigma^2$ Gaussian random variables and let $A$ be an $m\times n$ matrix of i.i.d. standard normal random variables. Consider an $\xtilde$ defined in (\ref{eq:xtildedef}) and a $\y$ defined in (\ref{eq:systemnoise}) for $\x=\xtilde$. Let then $\zeta_{obj+}$ be as defined in (\ref{eq:objlassol1non}) and let $\w_{off+}$ be the solution of (\ref{eq:objlassol13non}). Let $\alpha$ and $\beta$ be below the fundamental characterization (\ref{eq:fundl1non}) and let $\widehat{\w^+}$ be as defined in (\ref{eq:defhatwnon}). Assume that $|\|\w_{off+}\|_2-\|\widehat{\w^+}\|_2|\geq \epsilon_{\w_{up}}\|\widehat{\w^+}\|_2$, where $\epsilon_{\w_{up}}$ is an arbitrarily small but fixed constant. Then there would be a constant $\epsilon_{off}>0$, and arbitrarily small positive constants $\epsilon_{lip},\epsilon_1^{(\h)},\epsilon_1^{(g)}$ such that
\begin{equation}
P(\zeta_{obj+}\geq \zeta_{obj+}^{(off)})\geq 1-e^{-\epsilon_{off}n},\label{eq:matchlowerboundobjthm1non}
\end{equation}
where
\begin{equation}
\zeta_{obj}^{(off)}=
(1-\epsilon_{lip})(1+\frac{\epsilon_{\w_{up}}^2}{2(1+\epsilon_{\w_{up}})})E\xi_{ov+}
(\sigma,\g,\h,\xtilde)-\epsilon_1^{(\h)}\sqrt{n}-\epsilon_1^{(g)}\sqrt{n},\label{eq:matchlowerboundobjthm2non}
\end{equation}
and $\xi_{ov+}(\sigma,\g,\h,\xtilde)$ is as defined in (\ref{eq:Lagran12non}).
\label{thm:matchlowerboundnon}
\end{lemma}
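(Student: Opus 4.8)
The plan is to repeat, essentially verbatim, the chain of arguments developed for the unsigned case in Subsection \ref{sec:devlb}, substituting the ``signed'' objects for their unsigned counterparts at each step. I would begin by taking the more conservative restriction $|\|\w_{off+}\|_2-\|\widehat{\w^+}\|_2|=\epsilon_{\w_{up}}\|\widehat{\w^+}\|_2$, and running the lower-bounding machinery of Section \ref{sec:unsignedlbzetaobjnon} with $C_\w$ now set equal to $\|\w_{off+}\|_2$, so that the constraint $\|\w\|_2\leq C_\w$ in $S_{\w}^+$ is effectively tight. The relevant deterministic quantity then becomes exactly $\xi_{off+}(\sigma,\g,\h,\xtilde,\w_{off+})$ from (\ref{eq:matchdefxinon}). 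Since $\xi_{off+}$ is structurally identical to $\xi_{up+}$ from (\ref{eq:upperdefxinon}), the concentration estimate of Lemma \ref{thm:lipschunsignednon} (in its $\xi_{up+}$ form) applies, so that by Lemma \ref{thm:lowerboundnon} the objective value attained by $\w_{off+}$ is, with overwhelming probability, lower bounded by $(1-\epsilon_{lip})E\xi_{off+}(\sigma,\g,\h,\xtilde,\w_{off+})$.

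The core of the argument is the comparison between $E\xi_{off+}$ and $E\xi_{ov+}(\sigma,\g,\h,\xtilde)$ from (\ref{eq:Lagran12non}). Letting $\widehat{\nu^+}$ and $\widehat{\lambda^{(2+)}}$ denote the optimizers of (\ref{eq:Lagran12non}), I would introduce the ``help'' quantity obtained by plugging these optimizers into $\xi_{off+}$, exactly as in (\ref{eq:matchdefxihelp}). Because $\widehat{\nu^+}$ and $\widehat{\lambda^{(2+)}}$ are feasible (though not optimal) for the maximization defining $\xi_{off+}$ — both problems share the constraint set $\nu\geq 0$ and $\lambda^{(2)}\in\Lambda^{(2+)}$ — suboptimality yields $\xi_{off+}\geq \xi_{help+}$, reducing matters to bounding $E\xi_{help+}-E\xi_{ov+}$ from below. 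Passing to expectations and introducing the shorthand $g_E=E\|\g\|_2$, $h_E=E\|\h+\widehat{\nu^+}\z^{(1)}-\widehat{\lambda^{(2+)}}\|_2$, $\xi_E=\sigma\sqrt{g_E^2-h_E^2}$, and $w_E=\sigma h_E/\sqrt{g_E^2-h_E^2}$, the identical Taylor expansion in $\epsilon_{\w_{up}}$ carried out in (\ref{eq:matchdiff2})--(\ref{eq:matchdiff5}) gives $E\xi_{help+}-E\xi_{ov+}\doteq \xi_E\epsilon_{\w_{up}}^2/(2(1+\epsilon_{\w_{up}}))$. The bound $\xi_E\geq \xi_{ov+}$ then follows because the subtracted term $\sum_{i=n-k+1}^{n}\lambda_i^{(2)}\xtilde_i$ is non-negative, both $\xtilde_i$ and $\lambda_i^{(2)}\in\Lambda^{(2+)}$ being non-negative, and this delivers $E\xi_{off+}-E\xi_{ov+}\geq \frac{\epsilon_{\w_{up}}^2}{2(1+\epsilon_{\w_{up}})}E\xi_{ov+}$, the signed analogue of (\ref{eq:matchdiff6}). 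Combining this gap estimate with Lemma \ref{thm:lowerboundnon} produces the stated lower bound $\zeta_{obj+}^{(off)}$ of (\ref{eq:matchlowerboundobjthm2non}), with the constant $\epsilon_{off}$ absorbing the concentration exponents from (\ref{eq:conchwnon}) together with the $\h_{n+1}$ and $g$ tail estimates; the derivation for $\|\w_{off+}\|_2=(1-\epsilon_{\w_{up}})\|\widehat{\w^+}\|_2$ and the passage from equality to the inequality $\geq$ of the lemma are completely analogous.

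The main point that requires genuine checking — rather than mechanical copying — is that the algebraic identities (\ref{eq:matchdiff2})--(\ref{eq:matchdiff5}), together with the Lipschitz and concentration estimates, are insensitive to replacing the two-sided box constraint $0\leq\lambda_i^{(2)}\leq 2\nu$ of the unsigned problem by the one-sided constraint $\lambda_i^{(2)}\geq 0$ defining $\Lambda^{(2+)}$. I expect this to be the place where the proof could conceivably break, but it does not: those manipulations only ever use the optimality relation (\ref{eq:defgEhExiE}) between $\|\widehat{\w^+}\|_2$ and $\|\h+\widehat{\nu^+}\z^{(1)}-\widehat{\lambda^{(2+)}}\|_2$ and the signs of the $\xtilde_i$ and $\lambda_i^{(2)}$, none of which is altered by the change of feasible region. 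Hence every estimate transfers, and the lemma follows.
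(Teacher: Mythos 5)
Your proposal is correct and takes essentially the same route as the paper, whose own proof of this lemma is simply that it ``follows from the discussion in Section \ref{sec:devlb}'', i.e., one repeats the unsigned arguments between (\ref{eq:matchdefxi}) and Lemma \ref{thm:matchlowerbound} with $\Lambda^{(2+)}$ in place of $\Lambda^{(2)}$ and $\xi_{ov+},\xi_{off+},\widehat{\w^+}$ in place of their unsigned counterparts. Your explicit verification that the suboptimality step, the expansion (\ref{eq:matchdiff2})--(\ref{eq:matchdiff5}), and the final inequality $\xi_E\geq E\xi_{ov+}$ use only the optimality relation (\ref{eq:defgEhExiE}) and the non-negativity of $\xtilde_i$ and $\lambda_i^{(2)}$ --- none of which is affected by replacing the box constraint $0\leq\lambda_i^{(2)}\leq 2\nu$ with the one-sided constraint $\lambda_i^{(2)}\geq 0$ --- is precisely the check the paper leaves implicit.
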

\begin{proof}
Follows from the discussion in Section \ref{sec:devlb}.
\end{proof}

\subsubsection{Deviation of the upper bound}\label{sec:devubnon}

In this section we establish that $\|\w_{lasso+}\|_2$ can not deviate from $\|\widehat{\w^+}\|_2$ as much as it was assumed in the previous section which is conceptually enough to make the bounds from Sections \ref{sec:unsignedlbzetaobjnon} and \ref{sec:unsignedubzetaobjnon} match. All arguments from Section \ref{sec:devub} can be repeated again. The only difference will be that in all optimization problems from Section \ref{sec:devub} one will now have no upper bound on $\lambda_i^{(2)},1\leq i\leq n$ (this essentially amounts to using set $\Lambda^{(2+)}$ instead of set $\Lambda^{(2)}$). One then has a ``signed" analogue to (\ref{eq:devubfinal})
\begin{multline}
\hspace{-.7in}E\xi_{up+}(\sigma,\g,\h,\xtilde,E\|\widehat{\w^+}\|_2)=E\max_{\lambda^{(2)}\in \Lambda^{(2+)},\nu\geq 0}(\sqrt{(E\|\widehat{\w^+}\|_2)^2+\sigma^2}\|\g\|_2-E\|\widehat{\w^+}\|_2\|\h+\nu\z^{(1)}-\lambda^{(2)})\|_2
-\sum_{i=n-k+1}^{n}\lambda_i^{(2)}\xtilde_i)\\
=E\max_{\lambda^{(2)}\in \Lambda^{(2+)},\nu\geq 0}(\sqrt{(E\widehat{d^+})^2+\sigma^2}\|\g\|_2-E\widehat{d^+}\|\h+\nu\z^{(1)}-\lambda^{(2)})\|_2
-\sum_{i=n-k+1}^{n}\lambda_i^{(2)}\xtilde_i)\\
\doteq E\min_{d\geq 0}\max_{\lambda^{(2)}\in \Lambda^{(2+)},\nu\geq 0}(\sqrt{d^2+\sigma^2}\|\g\|_2-d\|\h+\nu\z^{(1)}-\lambda^{(2)})\|_2
-\sum_{i=n-k+1}^{n}\lambda_i^{(2)}\xtilde_i)
= E \xi_{ov+}(\sigma,\g,\h,\xtilde),\label{eq:devubfinalnon}
\end{multline}
where $\widehat{d^+}=\|\widehat{\w^+}\|_2$ would be the solution of a ``signed" analogue to (\ref{eq:devubLagran11}).
Following the arguments after (\ref{eq:devubfinal}) one then has that the assumption of Lemma \ref{thm:lowerboundnon} is unsustainable and that $\|\w_{lasso+}\|_2$ can not deviate substantially from $\|\widehat{\w^+}\|_2$. This then implies that with overwhelming probability the objective value of (\ref{eq:lassol1non}) concentrates around $E \xi_{ov+}(\sigma,\g,\h,\xtilde)$  and consequently that $\|\w_{lasso+}\|_2$ concentrates around $E\|\widehat{\w^+}\|_2$.

\subsection{Connecting all pieces}\label{sec:connectpiecesnon}

In this section we connect all of the above. The following theorem essentially does so.
\begin{theorem}
Let $\v$ be an $n\times 1$ vector of i.i.d. zero-mean variance $\sigma^2$ Gaussian random variables and let $A$ be an $m\times n$ matrix of i.i.d. standard normal random variables. Further, let $\g$ and $\h$ be $m\times 1$ and $n\times 1$ vectors of i.i.d. standard normals, respectively. Consider a $k$-sparse $\xtilde$ defined in (\ref{eq:xtildedef}) and a $\y$ defined in (\ref{eq:systemnoise}) for $\x=\xtilde$. Let the solution of (\ref{eq:lassol1non}) be $\widehat{\x^+}$ and let the so-called error vector of LASSO from (\ref{eq:lassol1non}) be $\w_{lasso+}=\widehat{\x^+}-\xtilde$. Let $n$ be large and let constants $\alpha=\frac{m}{n}$ and $\beta=\frac{k}{n}$ be below the fundamental characterization (\ref{eq:fundl1non}). Consider the following optimization problem:
\begin{eqnarray}
\xi_{ov+}(\sigma,\g,\h,\xtilde)=\max_{\nu,\lambda^{(2)}} & & \sigma\sqrt{\|\g\|_2^2-\|\h+\nu\z^{(1)}-\lambda^{(2)}\|_2^2} -\sum_{i=n-k+1}^{n}\lambda_i^{(2)}\xtilde_i\nonumber \\
\mbox{subject to}
& & \nu\geq 0\nonumber \\
& & \lambda_i^{(2)}\geq 0,1\leq i\leq n.\label{eq:mainlasso1non}
\end{eqnarray}
Let $\widehat{\nu^+}$ and $\widehat{\lambda^{(2+)}}$ be the solution of (\ref{eq:mainlasso1non}). Set
\begin{equation}
\|\widehat{\w^+}\|_2=\sigma\frac{\|\h+\widehat{\nu^+}\z^{(1)}-\widehat{\lambda^{(2+)}}\|_2}{\sqrt{\|\g\|_2^2-\|\h+\widehat{\nu^+}\z^{(1)}-
\widehat{\lambda^{(2+)}}\|_2^2}}.\label{eq:mainlasso2non}
\end{equation}
Then:
\begin{equation}
P((1-\epsilon_1^{(lasso)})E\xi_{ov+}(\sigma,\g,\h,\xtilde)\leq \|\y-A\hat{\x}\|_2
\leq (1+\epsilon_1^{(lasso)})E\xi_{ov+}(\sigma,\g,\h,\xtilde))=1-e^{-\epsilon_2^{(lasso)}n}\label{eq:mainlasso3non}
\end{equation}
and
\begin{equation}
P((1-\epsilon_1^{(lasso)})E\|\widehat{\w^+}\|_2\leq \|\w_{lasso+}\|_2
\leq (1+\epsilon_1^{(lasso)})E\|\widehat{\w^+}\|_2) =1-e^{-\epsilon_2^{(lasso)}n},\label{eq:mainlasso4non}
\end{equation}
where $\epsilon_1^{(lasso)}>0$ is an arbitrarily small constant and $\epsilon_2^{(lasso)}$ is a constant dependent on $\epsilon_1^{(lasso)}$ and $\sigma$ but independent of $n$.
\label{thm:mainlassonon}
\end{theorem}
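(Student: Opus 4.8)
The plan is to establish both claims by bracketing the optimal objective value $\zeta_{obj+}$ of (\ref{eq:objlassol1non}) between matching high-probability lower and upper bounds, exactly as was done for the general-$\xtilde$ case in Theorem \ref{thm:mainlasso}, and then to transfer the resulting concentration of the objective into concentration of $\|\w_{lasso+}\|_2$. Since every ingredient of Section \ref{sec:unsigned} has already been re-derived in its ``signed'' form in Section \ref{sec:signed}, the proof is largely a matter of assembling these pieces in the right order; the only genuine work is checking that the matching argument survives the replacement of the box constraint set $\Lambda^{(2)}$ by the one-sided set $\Lambda^{(2+)}$ of (\ref{eq:upperlambda2non}).

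First I would invoke Lemma \ref{thm:lowerboundnon} to obtain, with overwhelming probability, $\zeta_{obj+}\geq (1-\epsilon_{lip})E\xi_{+}(\sigma,\g,\h,\xtilde)-\epsilon_1^{(\h)}\sqrt{n}-\epsilon_1^{(g)}\sqrt{n}$, and then use Lemma \ref{thm:optsollowernon} together with the discussion of Section \ref{sec:connectl1non} to replace $\xi_{+}$ by the unconstrained ``overwhelming'' value $\xi_{ov+}$ of (\ref{eq:Lagran12non}). This replacement is legitimate precisely because $(\alpha,\beta)$ lies below the signed fundamental characterization (\ref{eq:fundl1non}): by (\ref{eq:fundl1noexpnon}) one has $\|\g\|_2>\|\bar{\h}^+-\nu_{\ell_1+}\z^{(2)}+\lambda^{(\ell_1+)}\|_2$ with overwhelming probability, which for arbitrarily large $C_\w$ forces condition (\ref{eq:condoptsollowernon}), so the $\gamma_{opt+}=0$ branch is the operative one and the $C_\w$-constraint is inactive. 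Dually, the signed upper-bound lemma of Section \ref{sec:unsignedubzetaobjnon} supplies an $r_+$ arbitrarily close to $(1+\epsilon_{lip})E\xi_{up+}(\sigma,\g,\h,\xtilde,C_{\w_{up+}})$ that is a valid high-probability upper bound on $\zeta_{obj+}$.

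The hard part will be showing that these two envelopes coincide. For this I would follow Sections \ref{sec:devub} and \ref{sec:devubnon} essentially verbatim. On one side, Lemma \ref{thm:matchlowerboundnon} shows that any candidate $\w_{off+}$ with $|\|\w_{off+}\|_2-\|\widehat{\w^+}\|_2|\geq \epsilon_{\w_{up}}\|\widehat{\w^+}\|_2$ forces the objective strictly above $(1+\frac{\epsilon_{\w_{up}}^2}{2(1+\epsilon_{\w_{up}})})E\xi_{ov+}$. On the other side, choosing $C_{\w_{up+}}=E\|\widehat{\w^+}\|_2$ and applying the min-max interchange (\ref{eq:devubfinalnon}) — the signed analogue of Lemma \ref{thm:devublemmadet}, obtained by rerunning its strong-duality argument with the one-sided constraint $\lambda_i^{(2)}\geq 0$ in place of $0\leq\lambda_i^{(2)}\leq 2\nu$ — shows $E\xi_{up+}(\sigma,\g,\h,\xtilde,E\|\widehat{\w^+}\|_2)\doteq E\xi_{ov+}(\sigma,\g,\h,\xtilde)$, so an upper bound arbitrarily close to $E\xi_{ov+}$ is actually attained. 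Comparing the two, the hypothesis of Lemma \ref{thm:matchlowerboundnon} becomes untenable, and this is the crux: it simultaneously pins $\|\w_{lasso+}\|_2$ to within $\epsilon_{\w_{up}}E\|\widehat{\w^+}\|_2$ and pins the objective to $E\xi_{ov+}$.

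Finally I would collect the concentration estimates (\ref{eq:conchwnon}) to conclude (\ref{eq:mainlasso3non}) and (\ref{eq:mainlasso4non}): since $\epsilon_{\w_{up}}$ is arbitrary, $\|\w_{lasso+}\|_2$ concentrates around $E\|\widehat{\w^+}\|_2$ as given by (\ref{eq:mainlasso2non}), while $\|\y-A\widehat{\x^+}\|_2=\zeta_{obj+}$ concentrates around $E\xi_{ov+}$, a union bound over the finitely many overwhelming-probability events keeping the total failure probability at $e^{-\epsilon_2^{(lasso)}n}$. The one subtlety to double-check is that the sorting in (\ref{eq:defhbarnon}) now orders the \emph{signed} entries of $\h$ (not their magnitudes), consistent with the one-sided dual, so that the relevant threshold is indeed the signed $\ell_1$ characterization (\ref{eq:fundl1non}) rather than the symmetric one.
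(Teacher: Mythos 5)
Your proposal is correct and takes essentially the same route as the paper: the paper's proof of Theorem \ref{thm:mainlassonon} is exactly the assembly you describe, combining the signed lower bound (Lemma \ref{thm:lowerboundnon}), the signed upper bound of Section \ref{sec:unsignedubzetaobjnon}, the removal of the $C_\w$-condition via the signed $\ell_1$ characterization of Section \ref{sec:connectl1non}, and the matching arguments of Sections \ref{sec:devlbnon} and \ref{sec:devubnon} (Lemma \ref{thm:matchlowerboundnon} against the min-max identity (\ref{eq:devubfinalnon}) with $C_{\w_{up+}}=E\|\widehat{\w^+}\|_2$), followed by the concentration estimates (\ref{eq:conchwnon}). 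Your closing check that (\ref{eq:defhbarnon}) sorts the signed entries of $\h$ rather than their magnitudes is likewise consistent with the paper's treatment.
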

\begin{proof}
Follows from the above discussion.
\end{proof}

\subsubsection{LASSO's generic performance}\label{sec:genericnon}

In this section we show how the results presented in the above theorem can be adapted to the so-called ``worst-case" scenario or as we refer to it ``generic performance" scenario. Repeating the line of arguments from Section \ref{sec:generic} one can establish the following \emph{generic} equivalent to Theorem \ref{thm:mainlassonon}.
\begin{theorem}
Assume the setup of Theorem \ref{thm:mainlassonon}. Consider the following optimization problem:
\begin{eqnarray}
\xi_{ov+}^{(gob)}(\sigma,\g,\h)=\min_{\nu,\lambda^{(2)}} & & \|\h+\nu\z^{(1)}-\lambda^{(2)}\|_2\nonumber \\
\mbox{subject to}
& & \nu\geq 0\nonumber \\
& & \lambda_i^{(2)}=0,n-k+1\leq i\leq n\nonumber \\
& & \lambda_i^{(2)}\geq 0, 1\leq i\leq n-k.\label{eq:genlasso4non}
\end{eqnarray}
Let $\nu_{gen+}$ and $\lambda^{(gen+)}$ be the solution of (\ref{eq:genlasso4non}). Set
\begin{equation}
\|\w_{gen+}\|_2=\sigma\frac{\|\h+\nu_{gen+}\z^{(1)}-\lambda^{(gen+)}\|_2}{\sqrt{\|\g\|_2^2-\|\h+\nu_{gen+}\z^{(1)}-\lambda^{(gen+)}\|_2^2}}.\label{eq:genlasso5non}
\end{equation}
Then:
\begin{eqnarray}
P(\exists\w_{lasso+}|\|\w_{lasso+}\|_2\in((1-\epsilon_1^{(lasso)})E\|\w_{gen+}\|_2, (1+\epsilon_1^{(lasso)})E\|\w_{gen+}\|_2)) & \geq & 1-e^{-\epsilon_2^{(lasso)}n}\nonumber \\
P(\|\w_{lasso+}\|_2\leq (1+\epsilon_1^{(lasso)})E\|\w_{gen+}\|_2) & \geq & 1-e^{-\epsilon_3^{(lasso)}n},\nonumber \\\label{eq:genlasso6non}
\end{eqnarray}
where $\epsilon_1^{(lasso)}>0$ is an arbitrarily small constant and $\epsilon_2^{(lasso)}$ and $\epsilon_3^{(lasso)}$ are constants dependent on $\epsilon_1^{(lasso)}$ and $\sigma$ but independent of $n$.
\label{thm:genlassonon}
\end{theorem}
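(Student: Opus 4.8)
The plan is to follow, step for step, the development of Section \ref{sec:generic} that yielded Theorem \ref{thm:genlasso} in the general case, substituting throughout the constraint set $\Lambda^{(2)}$ (with $0\leq\lambda_i^{(2)}\leq 2\nu$) by the signed set $\Lambda^{(2+)}$ (with only $\lambda_i^{(2)}\geq 0$), exactly as Section \ref{sec:matchingnon} did when carrying the matching argument over to the signed setting. The point of departure is Theorem \ref{thm:mainlassonon}, which already pins $\|\w_{lasso+}\|_2$ to its concentration point $E\|\widehat{\w^+}\|_2$, with $\widehat{\w^+}$ determined through the program (\ref{eq:mainlasso1non}). All that remains is to maximize this concentration point over the admissible $\xtilde$ and to identify the maximizing configuration as the one with infinitely large nonzero entries.

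First I would send the nonzero components $\xtilde_i\to\infty$, $n-k+1\leq i\leq n$, in (\ref{eq:mainlasso1non}). These entries enter the objective only through $-\sum_{i=n-k+1}^{n}\lambda_i^{(2)}\xtilde_i$ with $\lambda_i^{(2)}\geq 0$, so the maximization drives $\lambda_i^{(2)}=0$ for $n-k+1\leq i\leq n$, turning (\ref{eq:mainlasso1non}) into the signed ``gob'' program --- the counterpart of (\ref{eq:genlasso1}). I would then repeat the monotonicity argument of (\ref{eq:genlasso2})--(\ref{eq:genlasso3}): comparing the optimal value at $(\widehat{\nu^+},\widehat{\lambda^{(2+)}})$ with the value attained by the generic optimizer $(\nu_{gen+},\lambda^{(gen+)})$ and invoking $\widehat{\lambda_i^{(2+)}}\geq 0$ gives
\begin{equation*}
\|\h+\widehat{\nu^+}\z^{(1)}-\widehat{\lambda^{(2+)}}\|_2 \leq \|\h+\nu_{gen+}\z^{(1)}-\lambda^{(gen+)}\|_2,
\end{equation*}
whence, since $x\mapsto \sigma x/\sqrt{\|\g\|_2^2-x^2}$ is increasing on $[0,\|\g\|_2)$, one obtains $\|\widehat{\w^+}\|_2\leq\|\w_{gen+}\|_2$ for every finite $\xtilde$. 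This uniform bound yields the second inequality of (\ref{eq:genlasso6non}); the first (existence) inequality follows by taking the nonzero entries of $\xtilde$ large enough that $\|\widehat{\w^+}\|_2$ lies within $\epsilon_1^{(lasso)}$ of $\|\w_{gen+}\|_2$, so that such an instance actually realizes the generic distance.

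To close the argument I would note that the minimization of $\|\h+\nu\z^{(1)}-\lambda^{(2)}\|_2$ in (\ref{eq:genlasso4non}) is equivalent to the signed ``gob'' maximization of $\sigma\sqrt{\|\g\|_2^2-\|\cdot\|_2^2}$, because the square root is monotone in its argument; hence $\nu_{gen+},\lambda^{(gen+)}$ are the common optimizers and (\ref{eq:genlasso5non}) is well posed. This is the precise signed analogue of the remark, used in the proof of Theorem \ref{thm:genlasso}, that (\ref{eq:genlasso4}) and (\ref{eq:genlasso1}) coincide. The two concentration statements then transfer verbatim from Theorem \ref{thm:mainlassonon} together with the concentration of the underlying norms recorded in (\ref{eq:conchwnon}). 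The one delicate point --- and the only place I expect genuine friction --- is the interchange of the supremum over $\xtilde$ with both the expectation and the $n\to\infty$ limit: one must check that the deviation constants furnished by Theorem \ref{thm:mainlassonon} remain controlled uniformly as the nonzero entries of $\xtilde$ grow, so that the infinite-amplitude program truly upper bounds the finite ones without degrading the exponential probability. Because the amplitude enters only linearly and merely forces $\widehat{\lambda^{(2+)}}$ monotonically to zero, this uniformity is harmless, and the theorem follows.
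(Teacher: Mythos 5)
Your proposal is correct and follows essentially the same route as the paper, whose proof of Theorem \ref{thm:genlassonon} consists precisely of repeating the Section \ref{sec:generic} argument with $\Lambda^{(2+)}$ replacing $\Lambda^{(2)}$: sending the nonzero $\xtilde_i$ to infinity to force $\widehat{\lambda_i^{(2+)}}=0$ on the support, using $\widehat{\lambda_i^{(2+)}}\xtilde_i\geq 0$ to obtain the comparison $\|\h+\widehat{\nu^+}\z^{(1)}-\widehat{\lambda^{(2+)}}\|_2\leq\|\h+\nu_{gen+}\z^{(1)}-\lambda^{(gen+)}\|_2$ and hence $\|\widehat{\w^+}\|_2\leq\|\w_{gen+}\|_2$, noting the equivalence of the minimization in (\ref{eq:genlasso4non}) with the signed ``generic'' maximization, and invoking Theorem \ref{thm:mainlassonon} for both concentration statements. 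Your closing observation about uniformity of the constants as the amplitudes of $\xtilde$ grow is a point the paper itself leaves implicit, but it does not amount to a different argument.
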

\begin{proof}
Follows by the use of the same arguments that were used to establish Theorem \ref{thm:mainlassonon}.
\end{proof}
The following corollary then provides a quick way of computing the concentrating point of the ``worst case" norm of the error vector.
\begin{corollary}
Assume the setup of Theorems \ref{thm:mainlassonon} and \ref{thm:genlassonon}. Let $\alpha=\frac{m}{n}$ and $\beta_w^+=\frac{k}{n}$. Then
\begin{eqnarray}
P(\exists\w_{lasso+}|\|\w_{lasso+}\|_2\in((1-\epsilon_1^{(lasso)})\sigma\sqrt{\frac{\alpha_w^+}{\alpha-\alpha_w^+}}, (1+\epsilon_1^{(lasso)})\sigma\sqrt{\frac{\alpha_w^+}{\alpha-\alpha_w^+}})) & \geq & 1-e^{-\epsilon_2^{(lasso)}n}\nonumber \\
P(\|\w_{lasso+}\|_2\leq (1+\epsilon_1^{(lasso)})\sigma\sqrt{\frac{\alpha_w^+}{\alpha-\alpha_w^+}}) & \geq & 1-e^{-\epsilon_3^{(lasso)}n},\nonumber \\\label{eq:genlasso7non}
\end{eqnarray}
where $\alpha_w^+<\alpha$ is such that
\begin{equation}
(1-\beta_w^+)\frac{\sqrt{\frac{1}{2\pi}}e^{-(\erfinv(2\frac{1-\alpha_w^+}{1-\beta_w^+}-1))^2}}{\alpha_w^+}-\sqrt{2}\erfinv (2\frac{1-\alpha_w^+}{1-\beta_w^+}-1)=0,
\label{eq:genfundl1non}
\end{equation}
$\epsilon_1^{(lasso)}>0$ is an arbitrarily small constant, and $\epsilon_2^{(lasso)}$ and $\epsilon_2^{(lasso)}$ are constants dependent on $\epsilon_1^{(lasso)}$ and $\sigma$ but independent of $n$.
\label{thm:gencomperrornon}
\end{corollary}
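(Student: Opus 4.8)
The plan is to follow the proof of Corollary \ref{thm:gencomperror} line for line, substituting the \emph{signed} noiseless $\ell_1$ input in place of the unsigned one. First I would pass from the generic optimization (\ref{eq:genlasso4non}) to its geometric (``signed $\ell_1$'') form. Using $\bar{\h}^+$ and $\z^{(2)}$ introduced in Section \ref{sec:connectl1non}, and exploiting that $\lambda_i^{(2)}=0$ for $n-k+1\le i\le n$ together with the sign flips encoded in $\z^{(2)}$, problem (\ref{eq:genlasso4non}) becomes
\begin{eqnarray*}
\xi_{ov+}^{(gen)}(\sigma,\g,\h)=\min_{\nu,\lambda^{(2)}} & & \|\bar{\h}^+-\nu\z^{(2)}+\lambda^{(2)}\|_2\\
\mbox{subject to} & & \nu\geq 0\\
& & \lambda_i^{(2)}=0,\ n-k+1\leq i\leq n\\
& & \lambda_i^{(2)}\geq 0,\ 1\leq i\leq n-k,
\end{eqnarray*}
which is exactly the functional whose minimizers $\nu_{\ell_1+},\lambda^{(\ell_1+)}$ were introduced in (\ref{eq:matchl11non}); this is the signed counterpart of (\ref{eq:genlasso8}).

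Second, I would record the two concentration facts. Since $\g$ is an $m$-dimensional standard normal vector, $E\|\g\|_2^2=m=\alpha n$ and $\|\g\|_2$ concentrates around $\sqrt{\alpha n}$; and by the signed analogue of Lemma \ref{thm:lipschunsigned} (namely Lemma \ref{thm:lipschunsignednon} applied to the $\h$-only functional above) $\xi_{ov+}^{(gen)}$ concentrates around its mean. Note that $\xi_{ov+}^{(gen)}$ depends on neither $\g$ nor $\sigma$, so $E\xi_{ov+}^{(gen)}$ is a function of $n$ and $\beta_w^+$ alone.

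Third -- and this is the crucial import -- I would invoke the signed weak-threshold computation of \cite{StojnicCSetam09}: the equation (\ref{eq:genfundl1non}) is precisely the condition $E\|\g\|_2=E\|\bar{\h}^+-\nu_{\ell_1+}\z^{(2)}+\lambda^{(\ell_1+)}\|_2$ evaluated at $m=\alpha_w^+ n$, so that
\[
E\big(\xi_{ov+}^{(gen)}(\sigma,\g,\h)\big)^2\doteq \alpha_w^+ n ,
\]
with $\alpha_w^+$ solving (\ref{eq:genfundl1non}) as a function of $\beta_w^+$. This is the signed version of the relation ``$\alpha_w m\doteq E\xi_{ov}^{(gen)2}$'' used in Corollary \ref{thm:gencomperror}, and it is the only place where the one-sided (positive) geometry enters, which is why the form $\erfinv(2\frac{1-\alpha_w^+}{1-\beta_w^+}-1)$ appears in (\ref{eq:genfundl1non}) in place of the two-sided form of (\ref{eq:genfundl1}). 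Substituting into (\ref{eq:genlasso5non}) and using the concentrations from the second step,
\[
E\|\w_{gen+}\|_2\doteq \sigma\frac{E\xi_{ov+}^{(gen)}}{\sqrt{E\|\g\|_2^2-E(\xi_{ov+}^{(gen)})^2}}
\doteq \sigma\frac{\sqrt{\alpha_w^+ n}}{\sqrt{\alpha n-\alpha_w^+ n}}=\sigma\sqrt{\frac{\alpha_w^+}{\alpha-\alpha_w^+}} .
\]
Feeding this value of $E\|\w_{gen+}\|_2$ into (\ref{eq:genlasso6non}) of Theorem \ref{thm:genlassonon} then yields both probabilistic bounds in (\ref{eq:genlasso7non}).

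The main obstacle is the third step: correctly identifying the purely geometric quantity $E(\xi_{ov+}^{(gen)})^2$ with $\alpha_w^+ n$ and verifying that the signed threshold equation (\ref{eq:genfundl1non}) -- not the unsigned (\ref{eq:genfundl1}) -- is the relation characterizing it. Everything else (the change of variables to $\bar{\h}^+,\z^{(2)}$, the concentration inequalities, and the final algebra) is routine and parallels the unsigned corollary verbatim, since once $E(\xi_{ov+}^{(gen)})^2\doteq\alpha_w^+ n$ is in hand the derivation is identical to that of (\ref{eq:genlasso9}).
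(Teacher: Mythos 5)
Your proposal is correct and follows essentially the same route as the paper: the paper's proof of Corollary \ref{thm:gencomperrornon} simply repeats the argument of Corollary \ref{thm:gencomperror} (rewriting (\ref{eq:genlasso4non}) via $\bar{\h}^+$ and $\z^{(2)}$ as in Section \ref{sec:connectl1non}, using the concentration of $\|\g\|_2$ and $\xi_{ov+}^{(gen)}$, and identifying $E(\xi_{ov+}^{(gen)})^2$ with $\alpha_w^+ n$) while substituting the signed fundamental characterization (\ref{eq:fundl1non}) for the unsigned one, exactly as you do. Your explicit remarks that $\xi_{ov+}^{(gen)}$ depends only on $\h$ and that the signed threshold equation is the one place where the positive geometry enters are faithful elaborations of the steps the paper leaves implicit.
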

\begin{proof}
Follows by the use of the same arguments that were used to establish Corollary \ref{thm:gencomperror} and a recognition that the fundamental characterization of interest in the ``signed" case is the one given in (\ref{eq:fundl1non}).
\end{proof}
 Based on the above corollary one can then for any $\sigma$ and any pair $(\alpha,\beta_w^+)$ (that is below fundamental characterization (\ref{eq:genfundl1non}) or alternatively (\ref{eq:fundl1non})) determine the value of the worst case $E\|\w_{lasso+}\|_2$ as $\sigma\sqrt{\frac{\alpha_w^+}{\alpha-\alpha_w^+}}$. We present the obtained results in Figure \ref{fig:lassoweakthrnon}. For several fixed values of the worst case $E\|\w_{lasso+}\|_2$ we determine curves of points $(\alpha,\beta_w^+)$ for which these fixed values are achieved (of course for any $\alpha$ that is below a curve the value of the corresponding worst case $E\|\w_{lasso+}\|_2$ is smaller). As in the previous section, the lower the norm-2 of the error vector the smaller the allowable region for pairs $(\alpha,\beta_w^+)$. Also as it was the case in the previous section, the results of the above corollary match those obtained in \cite{DonMalMon10,BayMon10lasso} through a state evolution/bilief propagation type of analysis for the ``signed" version of the LASSO from (\ref{eq:biglasso}) (signed version of the LASSO from (\ref{eq:biglasso}) as expected assumes just simple adding of the positivity constraints on the components of $\x$).
\begin{figure}[htb]
\centering
\centerline{\epsfig{figure=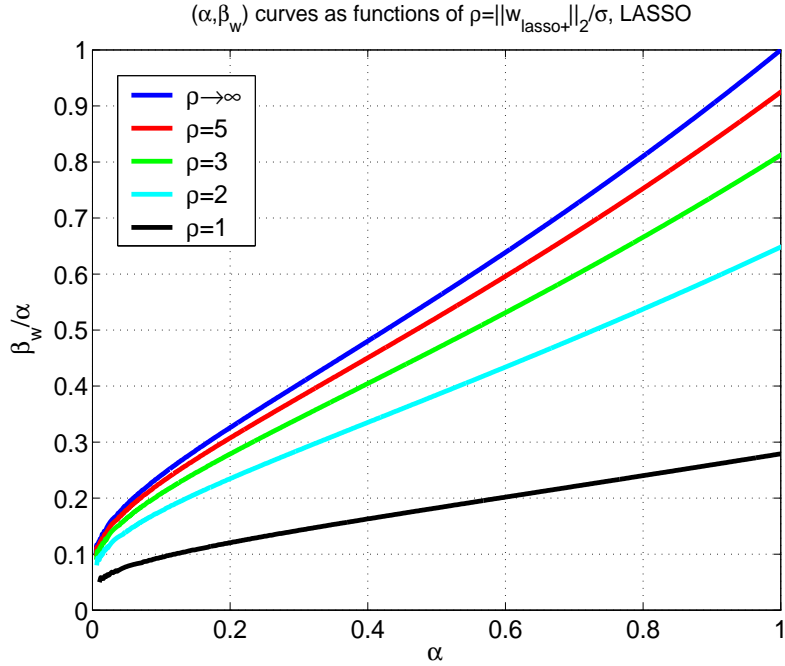,width=10.5cm,height=9cm}}
\vspace{-0in} \caption{$(\alpha,\beta_w^+)$ curves as functions of $\rho=\frac{E\|\w_{lasso+}\|_2}{\sigma}$ for LASSO algorithm from (\ref{eq:lassol1non})}
\label{fig:lassoweakthrnon}
\end{figure}

\section{Connecting LASSO's from (\ref{eq:lassol1}) and (\ref{eq:biglasso})}
\label{sec:connectlasso}

In this section we establish a connection between the LASSO algorithm from (\ref{eq:lassol1}) that we analyzed in Section \ref{sec:unsigned} and the more well known form of LASSO from (\ref{eq:biglasso}). Instead of well-known (\ref{eq:biglasso}) we will consider its a slight modification
\begin{equation}
\min_{\x} \|\y-A\x\|_2+\lambda_{lasso}\|\x\|_1.\label{eq:biglassover}
\end{equation}
Both LASSO's, (\ref{eq:lassol1}) and (\ref{eq:biglassover}), (as well as the one from (\ref{eq:biglasso})) rely on some type of the prior knowledge that can be available about $A$, $\v$, or $\xtilde$. In (\ref{eq:lassol1}) we assumed that one knows $\|\xtilde\|_1$ (of course, if one has no knowledge of $\|\xtilde\|_1$ LASSO from (\ref{eq:lassol1}) simply can not be run). On the other hand the LASSO from (\ref{eq:biglassover}) (as well as the one from (\ref{eq:biglasso})) requires that one sets in advance parameter $\lambda_{lasso}$ which can be a tough task if there is no a priori knowledge about $A$, $\v$, or $\xtilde$. Now even if there is some a priori available knowledge about these objects there are still many ways how one can set $\lambda_{lasso}$. We will below show a particular way of setting $\lambda_{lasso}$ in (\ref{eq:biglassover}) that can make LASSO's from
(\ref{eq:lassol1}) and (\ref{eq:biglassover}) essentially equivalent (of course, as long as one is interested in performance measures discussed in this paper). In the interest of saving space we will sketch only the key arguments without going into tedious details similar to the ones presented in earlier sections. (All that we mention below can be made precise, though; in fact, one can pretty much reach the same level of exactness demonstrated in Sections \ref{sec:unsigned} and \ref{sec:signed}; however, the length of the precise probabilistic arguments would equal (if not exceed) the length of the arguments presented in Sections \ref{sec:unsigned} and \ref{sec:signed}.)

Now, let $\lambda_{lasso}$ in (\ref{eq:biglassover}) be such that $\lambda_{lasso}=E\hat{\nu}$ where $\hat{\nu}$ is the solution of (\ref{eq:Lagran12}). Then (\ref{eq:biglassover}) becomes
\begin{equation}
\min_{\x} \|\y-A\x\|_2+E\hat{\nu}\|\x\|_1,\label{eq:biglassoconn}
\end{equation}
or in a more convenient form
\begin{equation}
\min_{\x} \|\y-A\x\|_2+E\hat{\nu}\|\x\|_1-E\hat{\nu}\|\xtilde\|_1,\label{eq:biglasso1conn}
\end{equation}
This could be rewritten in a way analogous to (\ref{eq:objlassol13}) as
\begin{eqnarray}
\min_{\w} & & \|A_{\v}\begin{bmatrix} \w\\\sigma\end{bmatrix}\|_2+E\hat{\nu}\|\xtilde+\w\|_1-E\hat{\nu}\|\xtilde\|_1,\label{eq:objlassol13conn}
\end{eqnarray}
where $A_{\v}$ is as in (\ref{eq:objlassol13}). One can then repeat all arguments from the beginning of Section \ref{sec:unsignedlbzetaobj} (essentially those before Section \ref{sec:unsigneddet}) to arrive at the following analogue of (\ref{eq:defxi2})
\begin{eqnarray}
\xi_{conn}(\sigma,\g,\h,\xtilde)=\min_{\w} & & \sqrt{\|\w\|_2^2+\sigma^2}\|\g\|_2+\sum_{i=1}^{n}\h_i\w_i+E\hat{\nu}\|\xtilde+\w\|_1-E\hat{\nu}\|\xtilde\|_1 \nonumber \\
\mbox{subject to}
& & \sqrt{\|\w\|_2^2+\sigma^2}\leq \sqrt{C_\w^2+\sigma^2}.\label{eq:defxi2conn}
\end{eqnarray}
Now, one should note that $E\hat{\nu}$ in the above optimization is chosen as the ``optimal" (it is actually the concentrating point of the optimal one; to make this really precise one would need to go through all the probabilistic arguments of Section \ref{sec:unsigned} and plus some more) $\nu$ in the Lagrange dual of (\ref{eq:defxi2}). One then has that arguments from Section \ref{sec:unsignedlbzetaobj} (essentially an appropriate repetition of those that follow (\ref{eq:defxi2})) will produce the lower bound on the objective of (\ref{eq:biglasso1conn}) that is with overwhelming probability arbitrarily close to the one derived in Lemma \ref{thm:lowerbound}. The arguments from Section \ref{sec:unsignedubzetaobj} related to the upper bound can be trivially repeated as well since the negativity of the objective in (\ref{eq:upperobjlassol11}) implies that $r$ is also an upper bound on the optimal value of the objective in (\ref{eq:biglasso1conn}). The matching arguments from Section \ref{sec:matching} then follow as well. Now if one let $\w_{conn}$ be the solution of
(\ref{eq:objlassol13conn}), then with overwhelming probability $\|\w_{conn}\|_2$ concentrates around $E\|\hat{\w}\|_2$, where $\hat{\w}$ is as defined in Theorem \ref{thm:mainlasso}.

For the signed case the arguments are the same, only instead of $E\hat{\nu}$ in (\ref{eq:biglassoconn}), (\ref{eq:biglasso1conn}), and (\ref{eq:objlassol13conn}) one should use $E\widehat{\nu^+}$ where $\widehat{\nu^+}$ is the solution of (\ref{eq:Lagran12non}). Also, as it is probably obvious, this time $\|\w_{conn}\|_2$ concentrates around $E\|\widehat{\w^+}\|_2$ where $\widehat{\w^+}$ is as defined in Theorem \ref{thm:mainlassonon}.

\section{A relation between a LASSO and an SOCP}
\label{sec:socplasso}

In this section we show that there is an SOCP equivalent to the LASSO from (\ref{eq:lassol1}) (as long as the norm-2 of the error vector is a performance measure of interest). To that end let us recall that an SOCP algorithm for finding an approximation of $\xtilde$ if $A$ and $\y$ from (\ref{eq:systemnoise}) are known can be (see, e.g. \cite{CanRomTao06})
\begin{eqnarray}
\min_{\x} & & \|\x\|_1 \nonumber \\
\mbox{subject to} & & \|\y-A\x\|_2\leq r_{socp}.\label{eq:socprel}
\end{eqnarray}
The choice of $r_{socp}$ critically impacts the outcome of the above optimization. In fact more is true, the choice of $r_{socp}$ heavily depends on what type of approximation error one is looking for. As we have mentioned in Section \ref{sec:back} a popular choice for $r_{socp}$ is the smallest quantity that is with high probability larger than $\|\v\|_2$. There are probably many reasons for such a choice. One of them is that it would with high probability guarantee that the original $\xtilde$ in (\ref{eq:systemnoise}) is permissible in (\ref{eq:socprel}). Now, if one is looking for an $\x$ that will be close in norm-2 to the original $\xtilde$ then it is not necessary to look for the original $\xtilde$ (especially so given that finding original $\xtilde$ is in general pretty much impossible). So if one gives up on that then the value of $r_{socp}$ can go even lower than the smallest quantity larger (with overwhelming probability) than $\|\v\|_2$. One should also note that by lowering $r_{socp}$ one would give up not only possibility to find $\xtilde$ (which is tiny anyway) but also highly likely more when it comes to the structure of the solution vector. This is of course a problem on its own that requires a thorough discussion. However, since we now look only at the norm of the error vector as a performance measure we stop short of pursuing this discussion here any further.

To go along these lines we choose
\begin{equation}
r_{socp}=E\xi_{up}(\sigma,\g,\h,E\|\hat{\w}\|_2)\doteq E\xi_{ov}(\sigma,\g,\h,\xtilde)\leq E\xi_{ov}^{(gob)}(\sigma,\g,\h)\leq \sigma \sqrt{m}\doteq E\|\v\|_2.\label{eq:setrsocp}
\end{equation}
Now, let $\x_{socp}$ be the solution of (\ref{eq:socprel}) (with $r_{socp}$ as in (\ref{eq:setrsocp})). Further let $\x_{socp}=\w_{socp}+\xtilde$. Let $\hat{\w}$ be as defined in Theorem \ref{thm:mainlasso}. Then as
shown in Sections \ref{sec:unsignedlbzetaobj}, \ref{sec:unsignedubzetaobj}, and \ref{sec:matching} $\|\w_{socp}\|_2$ concentrates around $E\|\hat{\w}\|_2$. Basically, the argument is that if $|\|\w_{socp}\|_2-\|\hat{\w}\|_2|\geq \epsilon_{\w_{up}} \|\hat{\w}\|_2$ for a fixed arbitrarily small positive $\epsilon_{\w_{up}}$ and $\|\xtilde+\w_{socp}\|_1\leq \|\xtilde\|_1$ then with overwhelming probability $\|\y-A\x_{socp}\|_2> r_{socp}$. On the other hand, as it was also established in Sections \ref{sec:unsignedlbzetaobj}, \ref{sec:unsignedubzetaobj}, and \ref{sec:matching}, there is a $\w$ (which concentrates around $E\|\hat{\w}\|_2$ with overwhelming probability) such that
$\|\xtilde+\w\|_1\leq \|\xtilde\|_1$ and $\|\y-A\x_{socp}\|_2\leq r_{socp}$. This essentially establishes that if one chooses $r_{socp}$ in (\ref{eq:socprel}) as suggested in (\ref{eq:setrsocp}) then the norm-2 of the error vector will be the same as the norm-2 of the error vector one obtains through LASSO's from (\ref{eq:lassol1}) and (\ref{eq:biglassover}) (the latter one of course with an appropriate choice of $\lambda_{lasso}$).

As we hinted above what we presented here is only a characterization of a particular performance measure of an SOCP algorithm (the same is of course true for the LASSO algorithms). How adequate is such a performance measure is whole another story that we will explore in more detail elsewhere.

\section{Numerical results}
\label{sec:numres}

In this section we present a set of numerical results related to the theoretical predictions that we derived in earlier sections. We will divide the presentation into two groups: 1) the set of results that will relate to the general (unsigned) unknown sparse vectors and 2) the set of results that will relate to signed unknown sparse vectors. To make scaling easier in all experiments we set $\sigma=1$. We also assumed that nonzero components of $\xtilde$ are all of equal and large magnitude. For the concreteness we set this magnitude to be $\frac{1000}{\sqrt{n}}$. For every setup that we discuss below we ran $100$ numerical experiments.

\subsection{Numerical results related to general $\x$} \label{sec:simgen}

In this subsection we will present numerical results that relate to the theoretical ones created in Sections \ref{sec:unsigned} and \ref{sec:connectlasso}. We will consider two groups of $(\alpha,\beta_w)$ regimes, one that we will refer to as the low $(\alpha,\beta_w)$ regime and the other that we will refer to as the high $(\alpha,\beta_w)$ regime.

\textbf{\underline{\emph{2) Low $(\alpha,\beta_w)$ regime --- $\rho=\frac{E\|\w_{lasso}\|_2}{\sigma}=2$}}}

\vspace{.05in}

We ran a carefully designed set of experiments intended to show a specific behavior of the LASSO's from (\ref{eq:lassol1}) and (\ref{eq:biglassoconn}) in what we will refer to as the low $(\alpha,\beta_w)$ regime. For $\alpha\in\{0.3,0.5,0.7\}$ we determined three values of $\beta_w$ from the contour LASSO line that corresponds to $\rho=2$ in the figure given in Section \ref{sec:unsigned}. We then ran (\ref{eq:lassol1}) assuming that $\|\xtilde\|_1$ is known and (\ref{eq:biglassoconn}) using theoretical value for $E\hat{\nu}$ where, as mentioned in Section \ref{sec:connectlasso}, $\hat{\nu}$ is the solution of (\ref{eq:Lagran12}). We call the optimal value of the objective in (\ref{eq:biglasso1conn}) $\zeta_{conn}$ (this value is the optimal value of (\ref{eq:biglassoconn}) shifted by a constant). Also, for this set of experiments we set $n=2000$. Obtained results are presented in Table \ref{tab:simlowerspec}. The theoretical values for any of the simulated quantities in any of the simulated scenarios are given in parallel as bolded numbers. We observe a solid agreement between the theoretical predictions and the results obtained through numerical experiments.

\begin{table}
\caption{Experimental/\textbf{theoretical} results for the noisy recovery through LASSO's;  $\sigma=1$, $\rho=E\|\w_{lasso}\|_2=2$; (\ref{eq:lassol1}) and (\ref{eq:biglassoconn}) were run $100$ times with $n=2000$}\vspace{.1in}
\hspace{-0in}\centering
\begin{tabular}{||c|c|c|c|c|c|c||}\hline\hline
$\alpha$ & $\beta_w/\alpha$  &  $E\hat{\nu}$ & \raisebox{.18in}{} $\frac{E\zeta_{conn}}{\sqrt{n}}$  &  $E\|\w_{conn}\|_2$ & $\frac{E\zeta_{obj}}{\sqrt{n}}$ &  $E\|\w_{lasso}\|_2$  \\ \hline\hline
$0.3$ &  $0.21$  & $\bf{1.3141}$  &  $0.2444$/$\bf{0.2449}$  & $2.0225$/$\bf{2}$  &  $0.2449$/$\bf{0.2449}$ & $2.0188$/$\bf{2}$  \\ \hline
$0.5$ &  $0.27$ &  $\bf{1.0227}$  &  $0.3159$/$\bf{0.3162}$  & $2.0058$/$\bf{2}$  &  $0.3162$/$\bf{0.3162}$ & $2.0018$/$\bf{2}$  \\ \hline
$0.7$ &  $0.33$ &  $\bf{0.7959}$  &  $0.3717$/$\bf{0.3742}$  & $2.0168$/$\bf{2}$  &  $0.3721$/$\bf{0.3742}$ & $2.0155$/$\bf{2}$ \\ \hline\hline
\end{tabular}
\label{tab:simlowerspec}
\end{table}

\textbf{\underline{\emph{2) High $(\alpha,\beta_w)$ regime --- $\rho=\frac{E\|\w_{lasso}\|_2}{\sigma}=3$}}}

\vspace{.05in}

We also ran a carefully designed set of experiments intended to show a specific behavior of the LASSO's from (\ref{eq:lassol1}) and (\ref{eq:biglassoconn}) in what we will refer to as the high $(\alpha,\beta_w)$ regime. For $\alpha\in\{0.3,0.5,0.7\}$ we now determined three values of $\beta_w$ from the contour LASSO line that corresponds to $\rho=3$ in the figure given in Section \ref{sec:unsigned}. We then again ran (\ref{eq:lassol1}) assuming that $\|\xtilde\|_1$ is known and (\ref{eq:biglassoconn}) using the theoretical values for $E\hat{\nu}$. For the scenario when $\alpha=0.3$ we set $n=3000$ while for the scenarios with other two values of $\alpha$ we set $n=2000$. Obtained results are presented in Table \ref{tab:simhigherspec}. The theoretical values for any of the simulated quantities in any of the simulated scenarios are again given in parallel as bolded numbers. We again observe a solid agreement between the theoretical predictions and the results obtained through numerical experiments.

\begin{table}
\caption{Experimental/\textbf{theoretical} results for the noisy recovery through LASSO's;  $\sigma=1$, $\rho=E\|\w_{lasso}\|_2=3$; (\ref{eq:lassol1}) and (\ref{eq:biglassoconn}) were run $100$ times}\vspace{.1in}
\hspace{-0in}\centering
\begin{tabular}{||c|c|c|c|c|c|c||}\hline\hline
$\alpha$ & $\beta_w/\alpha$  & $E\hat{\nu}$ &  \raisebox{.18in}{} $\frac{E\zeta_{conn}}{\sqrt{n}}$  &  $E\|\w_{conn}\|_2$ & $\frac{E\zeta_{obj}}{\sqrt{n}}$ &  $E\|\w_{lasso}\|_2$  \\ \hline\hline
$0.3$ &  $0.249$  & $\bf{1.2508}$  &  $0.1699$/$\bf{0.1732}$  & $3.1714$/$\bf{3}$  &  $0.1705$/$\bf{0.1732}$ & $3.1507$/$\bf{3}$  \\ \hline
$0.5$ &  $0.325$ &  $\bf{0.9477}$  &  $0.2231$/$\bf{0.2236}$  & $3.0560$/$\bf{3}$  &  $0.2239$/$\bf{0.2236}$ & $3.0405$/$\bf{3}$  \\ \hline
$0.7$ &  $0.41$ &  $\bf{0.7046}$  &  $0.2579$/$\bf{0.2646}$  & $3.1166$/$\bf{3}$  &  $0.2585$/$\bf{0.2646}$ & $3.1069$/$\bf{3}$ \\ \hline\hline
\end{tabular}
\label{tab:simhigherspec}
\end{table}

\subsection{Numerical results related to signed $\x$} \label{sec:simnon}

In this subsection we will present numerical results that relate to the theoretical ones created in Sections \ref{sec:signed} and \ref{sec:connectlasso}. We will again consider two groups of $(\alpha,\beta_w^+)$ regimes, one that we will refer to as the low $(\alpha,\beta_w^+)$ regime and the other that we will refer to as the high $(\alpha,\beta_w^+)$ regime.

\textbf{\underline{\emph{2) Low $(\alpha,\beta_w^+)$ regime --- $\rho=\frac{E\|\w_{lasso+}\|_2}{\sigma}=2$}}}

\vspace{.05in}

We first ran a set of experiments intended to show a specific behavior of the LASSO's from (\ref{eq:lassol1non}) and (\ref{eq:biglassoconn}) in what we will refer to as the low $(\alpha,\beta_w^+)$ regime. For $\alpha\in\{0.3,0.5,0.7\}$ we determined three values of $\beta_w^+$ from the contour LASSO line that corresponds to $\rho=2$ in the figure given in Section \ref{sec:signed}. We then ran (\ref{eq:lassol1non}) assuming that $\|\xtilde\|_1$ is known and (\ref{eq:biglassoconn}) using theoretical value for $E\widehat{\nu^+}$ where, as mentioned in Section \ref{sec:connectlasso}, $\widehat{\nu^+}$ is the solution of (\ref{eq:Lagran12non}). Also when running (\ref{eq:biglassoconn}) we now added positivity constraints on the elements of $\x$. We call the optimal value of the objective in (\ref{eq:biglasso1conn}) $\zeta_{conn+}$. When $\alpha=0.7$ we set $n=1500$ while for the other two values of $\alpha$ we set $n=2000$. Obtained results are presented in Table \ref{tab:simlowerspecnon}. The theoretical values for any of the simulated quantities in any of the simulated scenarios are as usual given in parallel as bolded numbers. We once again observe a solid agreement between the theoretical predictions and the results obtained through numerical experiments.

\begin{table}
\caption{Experimental/\textbf{theoretical} results for the noisy recovery through LASSO's;  $\sigma=1$, $\rho=E\|\w_{lasso+}\|_2=2$; (\ref{eq:lassol1}) and (\ref{eq:biglassoconn}) were run $100$ times}\vspace{.1in}
\hspace{-0in}\centering
\begin{tabular}{||c|c|c|c|c|c|c||}\hline\hline
$\alpha$ & $\beta_w^+/\alpha$  & $E\widehat{\nu^+}$ &  \raisebox{.18in}{} $\frac{E\zeta_{conn+}}{\sqrt{n}}$  &  $E\|\w_{conn+}\|_2$ & $\frac{E\zeta_{obj+}}{\sqrt{n}}$ &  $E\|\w_{lasso+}\|_2$  \\ \hline\hline
$0.3$ &  $0.286$  & $\bf{0.9592}$  &  $0.2454$/$\bf{0.2449}$  & $1.9939$/$\bf{2}$  &  $0.2461$/$\bf{0.2449}$ & $1.9876$/$\bf{2}$  \\ \hline
$0.5$ &  $0.3842$ &  $\bf{0.6516}$  &  $0.3133$/$\bf{0.3162}$  & $2.0229$/$\bf{2}$  &  $0.3140$/$\bf{0.3162}$ & $2.0177$/$\bf{2}$  \\ \hline
$0.7$ &  $0.4849$ &  $\bf{0.4292}$  &  $0.3786$/$\bf{0.3742}$  & $1.9947$/$\bf{2}$  &  $0.3794$/$\bf{0.3742}$ & $1.9886$/$\bf{2}$ \\ \hline\hline
\end{tabular}
\label{tab:simlowerspecnon}
\end{table}

\textbf{\underline{\emph{2) High $(\alpha,\beta_w^+)$ regime --- $\rho=\frac{E\|\w_{lasso+}\|_2}{\sigma}=3$}}}

\vspace{.05in}

As in the previous subsection we also ran a carefully designed set of experiments intended to show a specific behavior of the LASSO's from (\ref{eq:lassol1non}) and (\ref{eq:biglassoconn}) in what we will refer to as the high $(\alpha,\beta_w^+)$ regime. Following further the methodology of the previous subsection for $\alpha\in\{0.3,0.5,0.7\}$ we determined three values of $\beta_w^+$ from the contour LASSO line that corresponds to $\rho=3$ in the figure given in Section \ref{sec:signed}. We then again ran (\ref{eq:lassol1non}) and (\ref{eq:biglassoconn}) (when running (\ref{eq:biglassoconn}) we of course again added positivity constraints and we again used theoretical value for $E\widehat{\nu^+}$). When $\alpha=0.7$ we set $n=1500$ while for the other two values of $\alpha$ we set $n=2000$. Obtained results are presented in Table \ref{tab:simhigherspecnon}. The theoretical values for all quantities of interest  are again given in parallel as bolded numbers. We once again observe a solid agreement between the theoretical predictions and the results obtained through numerical experiments.

\begin{table}
\caption{Experimental/\textbf{theoretical} results for the noisy recovery through LASSO's;  $\sigma=1$, $\rho=E\|\w_{lasso+}\|_2=3$; (\ref{eq:lassol1}) and (\ref{eq:biglassoconn}) were run $100$ times}\vspace{.1in}
\hspace{-0in}\centering
\begin{tabular}{||c|c|c|c|c|c|c||}\hline\hline
$\alpha$ & $\beta_w^+/\alpha$  & $E\widehat{\nu^+}$ &  \raisebox{.18in}{} $\frac{E\zeta_{conn+}}{\sqrt{n}}$  &  $E\|\w_{conn+}\|_2$ & $\frac{E\zeta_{obj+}}{\sqrt{n}}$ &  $E\|\w_{lasso+}\|_2$  \\ \hline\hline
$0.3$ &  $0.3423$  & $\bf{0.8197}$  &  $0.1713$/$\bf{0.1732}$  & $3.1213$/$\bf{3}$  &  $0.1723$/$\bf{0.1732}$ & $3.0898$/$\bf{3}$  \\ \hline
$0.5$ &  $0.4672$ &  $\bf{0.5757}$  &  $0.2245$/$\bf{0.2236}$  & $2.9983$/$\bf{3}$  &  $0.2255$/$\bf{0.2236}$ & $2.9860$/$\bf{3}$  \\ \hline
$0.7$ &  $0.5971$ &  $\bf{0.3470}$  &  $0.2644$/$\bf{0.2646}$  & $3.0373$/$\bf{3}$  &  $0.2654$/$\bf{0.2646}$ & $3.0218$/$\bf{3}$ \\ \hline\hline
\end{tabular}
\label{tab:simhigherspecnon}
\end{table}

\section{Discussion}
\label{sec:discuss}

In this paper we considered ``noisy" under-determined systems of linear equations with sparse solutions.
We looked from a theoretical point of view at classical polynomial-time LASSO algorithms.
Under the assumption that the system matrix $A$ has i.i.d. standard normal components,
we created a general framework that can be used to characterize various quantities of interest in analyzing the LASSO's performance.
Among other things, the framework enables one to precisely estimate the norm of the error vector in ``noisy" under-determined systems. Moreover, it can do so for any given $k$-sparse vector $\xtilde$.

While many quantities of interest in LASSO recovery can be computed through the mechanism presented here, to demonstrate its power we in this introductory paper focused only on, what we called, LASSO's \emph{generic} performance. We essentially established the precise values of the ``worst-case" norm-2 of the error vector. On the other hand, using the framework one can create a massive set of results for the LASSO's non-generic or as we will refer to it \emph{problem dependent} performance. However, this goes significantly over the scope of an introductory paper.  We will dissect problems from this direction into tiny details in one of the forthcoming papers. Also, the existence of an SOCP type of the recovery algorithm that achieves the same norm-2 of the error vector as the LASSO does followed as a by-product of our analysis.

As for the applications, further developments are pretty much unlimited. Literally every problem that we were able to solve in the so-called noiseless case (and there was hardly any that we were not) through the mechanisms from \cite{StojnicCSetam09} and \cite{StojnicUpper10} can now be handled in the noisy case as well. For example, quantifying performance of LASSO or SOCP optimization problems in solving ``noisy" systems with special structure of the solution vector (block-sparse, binary, box-constrained, low-rank matrix, partially known locations of nonzero components, just to name a few), ``noisy" systems with noisy (or approximately sparse)) solution vectors can then easily be handled to an ultimate precision. In a series of forthcoming papers we will present some of these applications.

\begin{singlespace}
\bibliographystyle{plain}
\bibliography{GenericLasso}
\end{singlespace}

\end{document}